\theoremstyle{remark}
\newtheorem{remark}{Remark}
\newtheorem{observation}{Observation}
\newtheorem{definition}{Definition}
\newtheorem{assumption}{Assumption}
\theoremstyle{definition}
\newtheorem{example}{Example}
\begin{document}

\title{Equilibrium Computation of Generalized Nash Games: A New Lagrangian-Based Approach}

\author{Jong Gwang Kim}
\authornote{Email: kim2133@purdue.edu}
\affiliation{Purdue University}
\email{kim2133@purdue.edu}

\begin{abstract}
This paper presents a new primal-dual method for computing an equilibrium of generalized (continuous) Nash game (referred to as \emph{generalized Nash equilibrium problem} (GNEP)) where each player's feasible strategy set depends on the other players' strategies. The method is based on a new form of Lagrangian function with a quadratic approximation. First, we reformulate a GNEP as a saddle point computation problem using the new Lagrangian and establish equivalence between a saddle point of the Lagrangian and an equilibrium of the GNEP. We then propose a simple algorithm that is convergent to the saddle point. Furthermore, we establish global convergence by assuming that the Lagrangian function satisfies the \emph{Kurdyka-{\L}ojasiewicz} property. A distinctive feature of our analysis is to make use of the new Lagrangian as a potential function to guide the iterate convergence. Our method has two novel features over existing approaches: (i) it requires neither boundedness assumptions on the strategy set and the set of multipliers of each player, nor any boundedness assumptions on the iterates generated by the algorithm; (ii) it leads to a Jacobi-type decomposition scheme, which, to the best of our knowledge, is the first development of a \emph{distributed} algorithm to solve a \emph{general class} of GNEPs. Numerical experiments are performed on benchmark test problems and the results demonstrate the effectiveness of the proposed method.
\end{abstract}

\maketitle

\section{Introduction}
We consider a generalized (continuous) Nash game (\emph{generalized Nash equilibrium problem} (GNEP)) that describes a  broad class of non-cooperative and simultaneous-move games, in which  each player seeks to optimize her/his own objective function while subject to certain constraints that are affected by the other players' strategies. The standard Nash game \citep{nash1950equilibrium} is a subclass of GNEPs, as the strategic interactions among players in a Nash game are only reflected in their objective functions, not in the constraints. More specifically, the game features a set of $N$ players denoted by $\mathcal{N}=\left\{ 1,\ldots,N\right\}$ where each player $\nu$ has its own strategy $x^{\nu}\in\mathbb{R}^{n_{\nu}}$. Each player $\nu$ has an objective function $\theta_{\nu}(x^{\nu},x^{-\nu})$ and a finite set of ``coupling constraints'' $g_{i}^{\nu}(x^{\nu},x^{-\nu}) \leq 0$ $(i=1, \ldots, m_{\nu})$, both of which depend on player $\nu$'s own strategy $x^{\nu}$ as well as other players' strategies  $x^{-\nu} := (x^{\nu^{\prime}})_{\nu^{\prime} \neq \nu}$. Denote all players' strategies by a vector $\mathbf{x} = (x^{\nu},x^{-\nu}):=(x^1,\ldots,x^\nu,\ldots,x^N)$ with dimension $n=\sum_{\nu=1}^{N} n_{\nu}$.  The GNEP can be formally defined as a problem of simultaneously finding a solution for each of the following problem. Given other players' strategies $x^{-\nu}$, each player $\nu$ seeks to find a strategy $x^{\nu}$ that solves the optimization problem:
\begin{equation} \label{eq:of}
	\mathrm{P}_{\nu}(x^{-\nu}) \quad
	\begin{aligned}
	\underset{x^{\nu}}{\mathrm{minimize}} & \quad\theta_{\nu}(x^{\nu},x^{-\nu})\\
	\mathrm{\mathrm{subject\:to}} & \quad g_{i}^{\nu}(x^{\nu},x^{-\nu}) \leq 0,\quad i=1,\ldots,m_{\nu},\\
	&
	\quad {x}^{\nu} \in \mathcal{X}_{\nu},
	\end{aligned}
\end{equation}
where $\mathcal{X}_{\nu}\subseteq\mathbb{R}^{n_{\nu}}$ represents the ``private'' strategy set of player $\nu$ that is nonempty, closed, and convex. The feasible strategy set of each player $\nu$ can be represented by the parametric inequalities:
    \[
	\mathcal{F}_{\nu}(x^{-\nu}) := \left\{ x^{\nu}\in\mathcal{X}_{\nu}\left|\right.g_{i}^{\nu}(x^{\nu},x^{-\nu})\leq0,\;i=1,\ldots,m_{\nu} \right\} \subseteq\mathbb{R}^{n_{\nu}}.
	\]
Note that the private functional constraints $c^{\nu}_j(x^{\nu})\leq0$ for $j=1,\ldots,p_\nu$ are not explicitly highlighted in the paper for notational simplicity. They can be easily treated by the same way to deal with $g_{i}^{\nu}(x^{\nu},x^{-\nu}) \leq 0$. Here $n_\nu$, $m_\nu$ and $p_\nu$ are positive integers. The simple set $\mathcal{X}_{\nu}$ is defined by
$\mathcal{X}_{\nu} := \left\{ x^{\nu}\in \mathbb{R}^{n_{\nu}}\left|\right. l_{\nu} \leq x^{\nu} \leq u_{\nu} \right\}$, where $l_\nu$ or $u_\nu$ may be unbounded; that is, $l_\nu=-\infty$ or $u_\nu=+\infty$ or both.
    
A Nash equilibrium of the GNEP can be defined as follows.
\begin{definition}\label{def_gne}	
A collection of strategies $\mathbf{x}^{\ast}=\left(x^{1,\ast},\ldots,x^{N,\ast}\right)$ is a (pure-strategy) generalized Nash equilibrium (GNE) if for every $\nu=1,\ldots,N$,
	\[
	\theta_{\nu}\left(x^{\nu,\ast},x^{-\nu,\ast}\right)\leq\theta_{\nu} \left(x^{\nu},x^{-\nu,\ast}\right),\quad\forall x^{\nu}\in \mathcal{F}_{\nu} (x^{-\nu,\ast}),
	\]
i.e.,  $\mathbf{x}^{\ast}=(x^{1,\ast},\ldots,x^{N,\ast})$ is a GNE, if and only if no player has incentive to unilaterally deviate from $x^{\nu,\ast}$ when other players choose $x^{-\nu,\ast}$.
\end{definition}
	
We make the following assumption on the functions throughout the paper.
\begin{assumption}\label{assumption1} 
	For every $\nu \in \mathcal{N}$ and fixed ${x}^{-\nu}$, objective function $\theta_{\nu}(x^{\nu},x^{-\nu})$
	and constraint functions $g_{i}^{\nu}(x^{\nu},x^{-\nu})$, $i=1,\ldots,m_{\nu}$, are continuously differentiable and convex with respect to $x^{\nu}$. 
\end{assumption}

Note that  $\theta_{\nu}(x^{\nu},x^{-\nu})$ and $g_{i}^{\nu}(x^{\nu},x^{-\nu})$ are possibly nonconvex with respect to some other players' decisions $x^{\nu^{\prime}}\in x^{-\nu}$, and $g_{i}^{\nu}(x^{\nu},x^{-\nu})$ are not necessarily shared by all players (called \emph{non-shared coupling constraints}). 
	
Under Assumption \ref{assumption1}, Problem $\eqref{eq:of}$ is known as a very general form of GNEP \citep{dreves2011solution} (We call it \emph{general} GNEP in this work). In this paper, we aim to provide and analyze the first \emph{distributed} primal-dual algorithm, based on a novel form of Lagrangian, to compute an equilibrium of the \textit{general} GNEP, provided that equilibria of generalized Nash game exist.

We also make the following two standard assumptions; Lipschitz gradient continuity of the objective and constraint functions (smoothness) and coercivity of the objective functions.

\begin{assumption}[Uniform Lipschtz gradient continuity]\label{assumption_lipschitz}
For every $\nu=1,\ldots,N$, the gradients of $\theta_\nu(x^{\nu},x^{-\nu})$ and $g^\nu(x^{\nu},x^{-\nu})$ are uniformly Lipschitz continuous with constants; there exist constants $L_\nu(\theta_{\nu}), L_\nu(g^{\nu}), L_{-\nu}(\theta_{\nu}),  L_{-\nu}({g^{\nu}})>0$ such that 
\begin{subequations}
	\begin{align}
		\left\Vert \nabla_{x^\nu}\theta_{\nu}(x_{1}^{\nu},x^{-\nu})-\nabla_{x^{\nu}}\theta_{\nu}(x_{2}^{\nu},x^{-\nu})\right\Vert 
		& \leq  
		L_\nu(\theta_{\nu})
		\left\Vert x_{1}^{\nu}-x_{2}^{\nu}\right\Vert,  \quad \forall x_{1}^{\nu},x_{2}^{\nu}\in\mathcal{X}_{\nu}, \label{eq:assumption_lipschitz_1}\\
		\left\Vert \nabla_{x^\nu}g^{\nu}(x_{1}^{\nu},x^{-\nu})-\nabla_{x^\nu}g^{\nu}(x_{2}^{\nu},x^{-\nu})\right\Vert 
		& \leq 
		L_\nu(g^{\nu}) \left\Vert x_{1}^{\nu}-x_{2}^{\nu}\right\Vert, \quad \forall x_{1}^{\nu},x_{2}^{\nu}\in\mathcal{X}_{\nu}. \label{eq:assumption_lipschitz_2}\\
		\left\Vert \nabla_{x^{-\nu}}\theta_{\nu}(x^{\nu},x_1^{-\nu}) - \nabla_{x^{-\nu}}\theta_{\nu}(x^{\nu},x_2^{-\nu})\right\Vert 
		& \leq  
		L_{-\nu}(\theta_{\nu})
		\left\Vert x_{1}^{-\nu}-x_{2}^{-\nu}\right\Vert,  \quad \forall x_{1}^{-\nu},x_{2}^{-\nu}\in\mathcal{X}_{-\nu}, \label{eq:assumption_lipschitz_11}\\
		\left\Vert \nabla_{x^{-\nu}}g^{\nu}(x^{\nu},x_1^{-\nu}) - \nabla_{x^{-\nu}}g^{\nu}(x^{\nu},x_2^{-\nu})\right\Vert 
		& \leq  
		L_{-\nu}(g^{\nu})
		\left\Vert x_{1}^{-\nu}-x_{2}^{-\nu}\right\Vert,  \quad \forall x_{1}^{-\nu},x_{2}^{-\nu}\in\mathcal{X}_{-\nu}. \label{eq:assumption_lipschitz_21}
	\end{align}
In addition, for any fixed $x^\nu \in \mathcal{X}_\nu$, there exist constants $M_{-\nu}(\theta_\nu)>0$ and $M_{-\nu}(g^\nu)>0$ such that
\begin{align}
	\left\Vert \nabla_{x^{\nu}}\theta_{\nu}(x^{\nu},x_{1}^{-\nu})-\nabla_{x^{\nu}}\theta_{\nu}(x^{\nu},x_{2}^{-\nu})\right\Vert 
	&\leq 
	M_{-\nu}(\theta_\nu) \left\Vert x_{1}^{-\nu}-x_{2}^{-\nu}\right\Vert, \quad \forall x_{1}^{-\nu},x_{2}^{-\nu}\in\mathcal{X}_{-\nu},  \label{eq:assumption_lipschitz_3} \\
	\left\Vert 
	\nabla_{x^{\nu}} g^{\nu}(x^{\nu},x_{1}^{-\nu})-\nabla_{x^{\nu}} g^\nu(x^{\nu},x_{2}^{-\nu})\right\Vert 
	&\leq 
	M_{-\nu}(g^\nu) \left\Vert x_{1}^{-\nu}-x_{2}^{-\nu}\right\Vert, \quad \forall x_{1}^{-\nu},x_{2}^{-\nu}\in\mathcal{X}_{-\nu}.  \label{eq:assumption_lipschitz_4}
\end{align}
	\end{subequations}
Here, $\mathcal{X}_{-\nu}:=\prod_{q \neq \nu} \mathcal{X}_q$.
\end{assumption}	


\begin{assumption}[Coercivity of objective function] \label{assumption_coercive}
	For every $\nu=1,\ldots,N$, the objective function 
	$\theta_{\nu}(x^{\nu},x^{-\nu})$ is coercive with respect to $\mathbf{x}=(x^\nu, x^{-\nu}) \in  \mathcal{X}_{\nu} \times \mathcal{X}_{-\nu}$, i.e., $\mathrm{lim}_{\left\|\mathbf{x} \right\| \rightarrow  \infty} \; \theta_{\nu}(\mathbf{x})= \infty$.
\end{assumption}
Assumption \ref{assumption_coercive} implies that the level sets of the objective functions are bounded. 
It is well-known that coercivity assumption on the objective function is a standard assumption in nonconvex settings; see e.g., \cite{li2015global, bot2019proximal,boct2020proximal}. However, we do not impose the coercivity assumption on the feasible strategy sets, which is in contrast to the analysis of the interior-point algorithm for solving general GNEPs in \cite{dreves2011solution}. In \cite{dreves2011solution}, the algorithm relies on the strong assumption that the feasible strategy sets of all players are bounded, i.e., $\mathrm{lim}_{\left\|\mathbf{x} \right\| \rightarrow  \infty}  \left\| g^{\nu}_+(\mathbf{x}) \right\|= +\infty$ where $g^{\nu}_+(\mathbf{x}):=\mathrm{max}\{0,g^{\nu}(\mathbf{x})\}$ for all $\nu=1,\ldots,N$.

\subsection{Motivating Examples}

We give two practical examples that illustrate the versatility of the general GNEP model \eqref{eq:of} and motivate to develop an efficient algorithm.  We refer the readers to \cite{facchinei2010generalized} for more examples.
	
\renewcommand\thetheorem{1}
\begin{example}[Power allocation in telecommunications] \label{example_1}
This model is described in detail in  \cite{pang2008distributed} and represents a realistic communication system subject to Quality-of-Service (QoS) constraints. There are $N$ links transmitting to $K$ different Base Stations by using $K$ different channels. Link $\nu$ transmits with power $x^\nu = (x^\nu_1, \ldots, x^\nu_K)$, and denote by $\mathbf{x} = (x^1,\dots,x^N)$ the power allocation of all links. The game-theoretical model is defined by 
		\[
		\underset{x^{\nu}}{\mathrm{minimize}}  \quad\sum_{i=1}^{K}x_{i}^{\nu} \ \
		\mathrm{\mathrm{subject\:to}} 
		\ \ \sum_{i=1}^{K}\mathrm{log}_{2}\left(1+\frac{h_{i}^{\nu\nu}x_{i}^{\nu}}{\left(\sigma_{i}^{\nu}\right)^2+\underset{\mu\neq\nu}{\sum}h_{i}^{\nu\mu}x_{i}^{\mu}}\right)\geq L^{\nu}, \quad x^{\nu}\geq0,
		\]
where $h^{\nu \mu}_i$ denotes the power gain between transmitter $\mu$ and receiver $\nu$ on the $i$-th channel, $\left( \sigma_i^\nu \right)^2$ is the noise of link $\nu$ on the $i$-th channel, and $L^\nu$ is the minimum transmission rate (target rate) for link $\nu$.	

In this game-theoretical model, the QoS constraints are nonconvex in the other links' decision variables, and each link has different coupling QoS constraints. Thus, this problem can be viewed as a case of the general GNEP \eqref{eq:of}. Moreover, due to the complicated coupling constraints, it is hard to compute a Nash equilibrium of the model efficiently.
\end{example}	

\renewcommand\thetheorem{2}
\begin{example}[Arrow-Debreu general equilibrium model] \label{example_2}
This equilibrium model is introduced by \cite{arrow1954existence} and also described as a general GNEP in \cite{facchinei2010generalized}. There are $I$ consumers, $J$ firms, and one market player (a fictitious player) in this equilibrium model. Both consumers and firms deal with $K$ goods. The market player sets (normalized) prices $p \in \mathbb{R}^K_+$ for solving a market clearing problem. The $j$-th firm maximizes its profit by deciding how much to produce $y^j \in Y_j$, where $Y_j \subseteq \mathbb{R}^K$ is a production set. The $i$-th consumer decides how much of each good to buy $x^i \in X_i$ to maximize its utility, where $X_i \subseteq \mathbb{R}^K$ is a consumption set. The GNEP is defined as the following set of problems of the three types of players: 
\begin{xxalignat}{3}
\underset{y^j}{\mathrm{max}} & \ \ p^T y^j 
& \underset{x^i}{\mathrm{max}} & \ \ u_i(x^i) 
& \underset{p}{\mathrm{max}} & \ \ p^T \left(\sum_{i=1}^I x^i - \sum_{j=1}^J y^j - \sum_{i=1}^I \xi^i\right) \\
\mathrm{\mathrm{s.t.}}   & \ \ y^j \in Y_j,
&	\mathrm{\mathrm{s.t.}} & \ \ p^T x^i\leq p^T \left(\xi^i+\sum_{j=1}^J q_{ij} y^j\right), \ \ x^i \in X_i
& \mathrm{\mathrm{s.t.}} & \ \ \sum_{k=1}^K p_k=1, \ \  p_k \geq 0. 
\end{xxalignat}
The first problem corresponds to firm $j$'s production problem, the second problem corresponds to the consumption problem of consumer $i$, and the last problem corresponds to the  market player's problem. Here, $q_{ij}\geq0$ represents the fraction of the profit of the $j$-th production owned by consumer $i$ such that $\sum_{i=1}^{I}q_{ij}=1$, and $\xi^i \in \mathbb{R}_{+}^{K}$ is an initial endowment of goods.
	
\renewcommand\thetheorem{2}
\begin{definition}[Walrasian Equilibrium; see e.g., \cite{jofre2007variational}] 
A Walrasian equilibrium consists of a price vector $\overline{p}$, consumption vectors $\overline{x}^{i}$ for $i=1,\ldots,I$, and production vectors $\overline{y}^{j}$ for $j=1,\ldots,J$, such that 
			
(E1) (Market Nontriviality). $\overline{p}\geq0,$ $\overline{p}\neq0$, 
			
(E2) (Utility Optimization). $\overline{x}^i$ maximizes $u(x^i)$ over $x^i \in X_i$ s.t.  $\overline{p}^T x^i \leq \overline{p}^T \xi^i + \sum_{j=1}^J q_{ij} \overline{p}^T\overline{y}^j,$
			
(E3) (Profit Optimization). $\overline{y}^j$ maximizes $\overline{p}^T y^j$  over $y^j \in Y_j$,
			
(E4) (Market Clearing). Supplies and demands are balanced in the sense that
  \[
  \overline{z}\leq0 \ \ \mathrm{and} \ \ \overline{p}\cdot\overline{z}=0 \ \ \mathrm{for} \ \ \overline{z}=\sum_{i=1}^{I}\overline{x}^{i}-\sum_{j=1}^{J}\overline{y}^{j}-\sum_{i=1}^{I}\xi^{i}.
  \]
\end{definition}
For any $\overline{p}$, conditions $(E2)$ and $(E3)$ can be expressed in terms of 
\[
\begin{aligned}
&
\overline{x}^{i}\in X_{i}\left(\overline{p},\overline{y}^{1},\ldots,\overline{y}^{J}\right) \ \  \mathrm{with} \ \ \overline{y}^{j}\in Y_{j}\left(\overline{p}\right),
\end{aligned}
\]
where $Y_{j}\left(p\right)=\underset{y^{j}\in Y_{j}\left(p\right)}{\mathrm{argmax}}  \ p^{T}y^{j}$ and $X_{i}\left(p,y^{1},\ldots,y^{J}\right)=\underset{x^{i}\in X_{i}\left(p\right)}{\mathrm{argmax}}\left\{ p^T x^{i}\leq p^T \xi^i+ \sum_{j=1}^{J}q_{ij}p^T y^j\right\}$. Note that (E4) comes out as a linear complimentarity condition on $\bar{p}$ and $\bar{z}$. The above expressions lead to the idea of capturing all conditions for equilibrium, including (E4), in terms of a mapping from  $p$ to $z$:
$Z(p) := \left\{z = \sum_{i=1}^{I} x_i + \sum_{j=1}^{J} y_j - \sum_{i=1}^{I} \xi_i \left| \ y_j \in Y_j(p), x_i \in X_i (p, y_1, \ldots, y_J) \right. \right\}.$
Clearly, we have  
\[
\bar{p}\in{P} \textrm{\ \ yields \ equilibrium}\Longleftrightarrow\exists\bar{z}\in{Z}(\bar{p})\ \ \textrm{such \ that} \ \ \bar{z}\leq0, \ \bar{p}\cdot\bar{z}=0.
\] 
We thus see that the feasible strategy set of each player depends on the other players' selfish decisions. Furthermore, each player's feasible set does not depend on all other players' decisions (\textit{i.e., non-shared coupling constraints}). Hence, the Arrow-Debreu equilibrium model is a case of the \emph{general} GNEPs \eqref{eq:of}. 	

\end{example}

\subsection{Literature Review}
	
The concept of GNEP was originally addressed by \citet{debreu1952social} and \citet{arrow1954existence} in the early 1950s, where a GNEP was called a social equilibrium problem or abstract economy. An important subclass of GNEPs, known as jointly-convex GNEPs, was first investigated by  \citet{rosen1965existence} where all players share the same convex coupling constraints (i.e., $g^{1}=\cdots=g^{N}$). Although early studies on GNEPs have been primarily concerned with economics, recent decades have witnessed a growing interest in the GNEP as a modeling framework and a solution concept in various application areas. Some examples include electricity market models \cite{jing1999spatial,contreras2004numerical,hobbs2007nash}, power allocation in telecommunications \cite{pang2008distributed}, environmental pollution applications \cite{krawczyk2000relaxation,breton2006game}, transportation systems \cite{stein2018noncooperative}, and mobile cloud computing \cite{cardellini2016game}, to name a few.
		
Many approaches have been proposed to compute a GNE in the literature. A common approach is to transform a GNEP into a variational inequality (VI) and to apply algorithms designed to find a solution of a VI reformulation (see e.g., \cite{facchinei2007generalized, nabetani2011parametrized, yin2011nash, kulkarni2012variational}). Another approach is to reformulate a GNEP into a global optimization problem using Nikaido-Isoda (NI) function and then solve the resulting optimization problem by the so-called relaxation algorithms \cite{uryas1994relaxation,von2009relaxation} or gradient-based algorithms \cite{von2009optimization}. However, the theoretical and algorithmic properties of both approaches are only established for the class of jointly-convex GNEPs. In particular, VI-based methods require the monotonicity assumption on the variational mapping that generally does not hold in \emph{general} GNEPs. 

The equilibrium computation of GNEPs beyond the class of jointly-convex GNEPs remains a very challenging task. This is mainly due to interdependence between each player's strategy and some other players' strategies through coupling constraints and potential nonconvexity of each player's optimization problem with regard to the strategies chosen by other players. A few algorithms have indeed been proposed, including penalty-type methods \cite{pang2005quasi, facchinei2010penalty}, interior point algorithm \cite{dreves2011solution} and augmented Lagrangian method \cite{kanzow2016augmented}. In all such methods, it is assumed that the extended Mangasarian-Fromovitz constraint qualification (EMFCQ), an extension of the MFCQ for infeasible points, holds with respect to $\mathbf{x}=(x^\nu,x^{-\nu})$ for each player $\nu$, at every limit point of the sequence $\left\lbrace \mathbf{x}^k \right\rbrace $ generated by the algorithms. This MFCQ is a restrictive assumption since it is equivalent to the boundedness of the multiplier set of each player, and it cannot be justified to hold for every player in the GNEP in general due to the nature of GNEPs. More specifically, $\nabla_{\bf{x}}g^\nu (x^\nu,x^{-\nu})\lambda^\nu = 0$ and  the multipliers $\lambda^\nu \rightarrow \infty$ might occur since it is possible that $\nabla_{\bf{x}}g^\nu (x^\nu,x^{-\nu})=0$ from the strategic interactions among players through the coupling constraint. Thus, GNEPs may have an unbounded Lagrange multiplier set of each player.
	
Penalty-based algorithms reduce the GNEP to a standard Nash equilibrium problem (NEP) by penalizing coupling constraints and focus on updating the penalty parameter. In particular, the exact penalty method in \cite{facchinei2010penalty} results in nonsmooth subproblems, so it obtains a GNE under various differentiability assumptions on the objective functions and constraints. This lack of differentiability is a serious problem for designing efficient algorithms. To address the drawbacks of penalty-based methods, Kanzow and Steck proposed an augmented Lagrangian method in \cite{kanzow2016augmented}. This approach requires an algorithmic assumption that there exists a limit point of the primal sequence $\{{\mathbf{x}^k}\}$. However, this assumption is not clear without the compactness of each player's private set.

Furthermore, it is noteworthy to point out that even with our coercivity assumption on the objective functions, the augmented Lagrangian (AL) algorithm in \cite{kanzow2016augmented} does not guarantee that the primal sequence and/or dual sequence remains bounded. To ensure the boundedness of the sequences with the coercivity assumption, the bounded level sets of the AL functions are required. However, these level sets are typically unbounded. This is mainly related to the behavior of the multiplier sequence $\{\lambda^{,\nu,k}\}$. Specifically, the AL method in \cite{kanzow2016augmented} is of min-max dynamics (due to the increase in the dual variable) and by nature, the AL function value alternatively increases and decreases, and the dual sequence $\{\lambda^{\nu,k}\}$ might be unbounded. Hence, the coercivity does not imply the boundedness of the primal and dual sequences in the AL algorithm. This can be illustrated by a simple example.

\begin{example}
Consider the two players GNEP with $n_1=n_2=1$:
	\[ \label{eq:example3}
	\mathrm{P}_1(x_2) \quad 
	\begin{aligned}
	\underset{x_1 \in \mathbb{R}}{\mathrm{min}} & \ \  x_1^2   \\
	\mathrm{s.\:t.} & \ \  x_1^2+x_2^2 \leq 1, 
	\end{aligned} 
	\qquad \qquad \quad
	\mathrm{P}_2(x_1) \quad
	\begin{aligned}
	\underset{x_2 \in \mathbb{R}}{\mathrm{min}} & \ \   x_2^2  \\
	\mathrm{s.\:t.} & \ \ (x_1-2)^2 + x_2^2 \leq 1, 
	\end{aligned}
	\]
where both players' objective functions are coercive. We see that at the unique equilibrium (and only feasible point) given by $\mathbf{x}^\ast=(x_1^\ast, x_2^\ast) = (1,0)$, the gradients of the constraints are linearly dependent, and hence the MFCQ does not hold. As a result, the AL algorithm in \cite{kanzow2016augmented} may generate an unbounded multiplier sequence, which results in an unbounded sequence of the AL function values and thus failure to have limit points. 
\end{example}
In a computational perspective, a major limitation of the methods in \cite{facchinei2009penalty,kanzow2016augmented} is that they cannot be implemented in a distributed way since they need to solve subproblems represented by a large system of nonlinear equations (variational inequality) at each iteration. Thus, they are computationally expensive.

\subsection{Our Contributions}
	
The objective of this paper is to propose a new algorithmic framework for computing an equilibrium of a \textit{general} GNEP under general assumptions without imposing boundedness assumptions on the generated primal and dual sequences and the (feasible) strategy sets of all players. To achieve such a goal, we introduce a novel form of Lagrangian, termed as \emph{Proximal-Perturbed Lagrangian} (P-Lagrangian), and utilize a quadratic approximation of the P-Lagrangian.
	
The key ideas underlying our approach are as follows.
First, \emph{perturbation} variables $z_{i}^{\nu}$ are introduced to form constraints $g_{i}^{\nu}\left(x^{\nu},{x}^{-\nu,\ast}\right)\leq z_{i}^{\nu}$ and  $z_{i}^{\nu}=0$, which can be relaxed into the objective function with Lagrange multipliers. This reformulation allows the use of $\frac{\alpha_\nu}{2}\left\Vert z_{i}^{\nu}\right\Vert ^{2}$ as simple penalty terms and exploiting a \textit{proximal regularization} on the Lagrange multipliers, which provides a strongly concave function in the multipliers. The next ingredient is to employ a quadratic approximation that is strongly convex in all the players' strategies. This enables to simply deal with the nonconvexity of  $\theta_{\nu}\left(x^{\nu},{x}^{-\nu}\right)$ and $g_{i}^{\nu}\left(x^{\nu},{x}^{-\nu}\right)$ in some $x^{\nu^{\prime}}\in {x}^{-\nu}$, and further leads to a Jacobi-type \textit{decomposition} scheme for updating primal variables at each iteration.

This paper makes the following contributions to the literature:
\vspace{-0.015in}
\begin{itemize} 
	\item We introduce a new form of Lagrangian function that has a favorable structure; it is strongly concave with respect to the Lagrange multipliers, and it does not include penalty terms for handling coupling constraints. Consequently, the proposed algorithm guarantees to generate the bounded sequence of the Lagrange multipliers with the bounded primal sequence without requiring the MFCQ assumption. Moreover, this leads to an easy-to-implement algorithm by removing the computational effort in updating the penalty parameter as in \cite{facchinei2010penalty} and \cite{kanzow2016augmented}.
    
	\item The proposed algorithm can deal with the nonconvexity of each player's functions in other players' decisions by employing a quadratic approximation of the original P-Lagrangian in $\mathbf{x}$. More importantly, the use of the quadratic approximation offers a Jacobi-type decomposition scheme that allows distributed simultaneous updates of primal variables, which, to the best of our knowledge, leads to the first  \emph{distributed} algorithm to solve \textit{general} GNEPs.
    
	\item We prove that our algorithm is convergent to a saddle point of P-Lagrangian under standard assumptions. In contrast to the existing methods for solving general GNEPs, our analysis does not impose any boundedness assumptions on the iterates generated by the algorithm. In particular, we do not make use of a priori assumption that a limit point of primal iterates $\mathbf{x}^k$  exists, and safeguarding technique \cite{andreani2007augmented,andreani2008augmented} to bound multiplier iterates as in \cite{kanzow2016augmented}. In addition, we establish the global convergence under an additional assumption that the objective and constraint functions satisfy the \emph{Kurdyka-{\L}ojasiewicz} inequality.
\end{itemize}

\subsection{Notation and Outline of the Paper}
	
\subsubsection*{Notation.} We use $\mathbb{R}^{n_\nu}$ and $\mathbb{R}^{m_\nu}$ to denote the $n_\nu$-dimensional Euclidean vector space and $m_\nu$-dimensional Euclidean vector space, respectively. For two vectors $x,y \in \mathbb{R}^{n_\nu}$, the inner product is denoted by $x^T y$, and the standard Euclidean norm is denoted by $\left\| x  \right\| = \sqrt{x^T x}$. For a real scalar $z \in \mathbb{R}$, we define $\left[ z \right]^+ =\textrm{max}\left( z,0 \right)$. We use $\mathbb{R}^{m_\nu}_+$ to denote the nonnegative orthant of $\mathbb{R}^{m_\nu}$. 
\vspace{-0.05in}	
\subsubsection*{Outline of the paper.}This paper is organized as follows. In section 2, we introduce the P-Lagrangian function, describe its characteristics, and reformulate the GNEP as a saddle point problem using the P-Lagrangian. Section 3 presents a distributed primal-dual algorithm based on a quadratic approximation. In Section 4, we establish convergence of the proposed algorithm. Numerical results are presented in Section 5.

\section{Proximal-Perturbed Lagrangian Formulation} \label{sec2}

Before introducing Proximal-Perturbed Lagrangian (P-Lagrangian), we recall that under Assumption \ref{assumption1} and suitable constraint qualifications, a GNE $\mathbf{x}^{\ast}=\left(x^{1,\ast},\ldots,x^{N,\ast}\right)$ can be characterized by the Karush-Kuhn-Tucker (KKT) conditions (see e.g., \citeauthor{dreves2011solution} \cite{dreves2011solution} and \citeauthor{bueno2019optimality}  \cite{bueno2019optimality}): 
\vspace{0.05in}

\hspace{-0.1in}\textbf{The KKT conditions}.
Assume that a suitable constraint qualification holds. If there exists a point $\mathbf{x}^{\ast}=(x^{1,\ast},\ldots,x^{N,\ast})$ together with some Lagrange multipliers $\eta^{\nu,\ast}$ satisfying the KKT conditions
	\begin{equation} \label{eq:KKT}
	\begin{cases}
	0\in\nabla_{x^{\nu}}L_{0}^{\nu}(x^{\nu,\ast},x^{-\nu,\ast},\eta^{\nu,\ast}) + \mathcal{N}_{\mathcal{X}_{\nu}}\left( {x}^{\nu,\ast} \right), \quad x^{\nu,\ast} \in \mathcal{X}_{\nu}, \\
	\eta_{i}^{\nu,\ast}\geq0,\quad g_{i}^{\nu}(x^{\nu,\ast},x^{-\nu,\ast})\leq0,\quad \eta_{i}^{\nu,\ast}g_{i}^{\nu}(x^{\nu,\ast},x^{-\nu,\ast})=0, \quad\forall i=1,\ldots,m_{\nu},
	\end{cases}
	\end{equation}
	for every $\nu=1,\ldots,N$, then $\mathbf{x}^{\ast}=(x^{1,\ast},\ldots,x^{N,\ast})$ is a generalized Nash equilibrium (GNE). Here,  $L_{0}^{\nu}(x^{\nu},x^{-\nu},\eta^{\nu}) := \theta_{\nu}(x^{\nu},x^{-\nu})+\sum_{i=1}^{m_{\nu}}(\eta_{i}^{\nu}) g_{i}^{\nu}(x^{\nu},x^{-\nu})$ is each player $\nu$'s  Lagrangian function, and {\small $\mathcal{N}_{\mathcal{X}_{\nu}}( {x}^{\nu,\ast}) := \left\{\mathrm{d}_{\nu}\in\mathcal{X}_{\nu}\left|\right. \mathrm{d}_{\nu}^{T}(x^{\nu}-x^{\nu,\ast})\leq0, \forall x^{\nu} \in \mathcal{X}_{\nu} \right\}$} is the normal cone to $\mathcal{X}_{\nu}$ at $\mathbf{x}^{\ast}$.  
	
It is well known  \cite[Theorem 4.6]{facchinei2010generalized} that under the convexity assumption and a constraint qualification(CQ), the KKT conditions \eqref{eq:KKT} are necessary and sufficient optimality conditions for problem $\eqref{eq:of}$. In addition, convex optimization problem $\eqref{eq:of}$ is equivalent to solving the dual formulation, i.e., 
\begin{equation}
	\theta_{\nu}\left(\mathbf{x}^{\ast}\right) =\underset{\eta^{\nu}\geq0}{\mathrm{max}}\left(D_{0}^{\nu}\left(\eta^{\nu}\right) :=\underset{x^{\nu}\in\mathcal{X}_{\nu}}{\mathrm{min}}L_{0}^{\nu}\left(x^{\nu},x^{-\nu,\ast},\eta^{\nu}\right)\right).
\end{equation}

In general GNEP model, the set of Lagrange multipliers of each player is possibly unbounded (assuming it is nonempty) even if it satisfies the KKT conditions. This is due to the interdependency between $x^{\nu}$ and $x^{-\nu}$ through the coupling constraints $g_{i}^{\nu}\left(x^{\nu}, x^{-\nu} \right)\leq 0$, $i=1,\ldots,m_\nu$. The general GNEP setting thus requires a CQ weaker than MFCQ that allows unbounded multiplier sets (e.g., Guignard CQ, cone continuity property, and constant positive linear dependence CQ); see \cite{bueno2019optimality} for a detailed discussion about various CQs for GNEPs. This unboundedness makes the computation of a GNE very hard, and thus boundedness of the multipliers is one of the key issues when solving GNEPs. 
Our motivation for introducing a new Lagrangian is to address this challenge.
	
This section first introduces a new form of Lagrangian that has a desirable structure for equilibrium computation. We then present a reformulation of problem \eqref{eq:of} as a P-Lagrangian dual problem and show that computing a saddle point of the P-Lagrangian is equivalent to finding an equilibrium of the GNEP \eqref{eq:of}.

\subsection{The Proximal-Perturbed Lagrangian}
	
Motivated by the reformulation techniques in  \cite[Chapter~3.4]{bertsekas1989parallel}  and \cite[Chapter~3.2]{bertsekas2014constrained}, we start by transforming problem \eqref{eq:of} into an equivalent extended formulation by introducing perturbation variables $z^{\nu}=(z_{1}^{\nu},\ldots,z_{m_{\nu}}^{\nu})=0$ as additional constraints and letting $g^{\nu}(x^{\nu},x^{-\nu})\leq z^{\nu}$ given $x^{-\nu}$:
    \begin{equation} \label{eq:ef}
	\mathrm{EP}_{\nu}(x^{-\nu}) \quad\begin{aligned}
	\underset{x^{\nu} \in \mathcal{X}_{\nu}, \, z^{\nu} \in \mathbb{R}^{m_{\nu}}} {\mathrm{minimize}} & \quad\theta_{\nu}(x^{\nu},x^{-\nu})\\
	\mathrm{\mathrm{subject\:to}} & \quad g^{\nu}(x^{\nu},x^{-\nu})\leq z^{\nu}, \\
	& \hspace*{1em}z^{\nu}=0.
	\end{aligned}
    \end{equation}
Obviously, for $z^{\nu}=0$, the above extended formulation $\eqref{eq:ef}$ is equal to problem \eqref{eq:of}. 
	
Noting that the reformulation \eqref{eq:ef} allows the use of $\frac{\alpha_\nu}{2}\left\Vert{z}^{\nu}\right\Vert^{2}$ as a penalty term, first consider the following partially augmented Lagrangian for every $\nu=1,\ldots,N$:  
    \[
    {L}_{\alpha}^{\nu}(x^{\nu},x^{-\nu},z^{\nu},\lambda^{\nu},\mu^{\nu})=\theta_{\nu}(x^{\nu},x^{-\nu})+\left(\lambda^{\nu}\right)^{T}(g^{\nu}\left(x^{\nu},x^{-\nu}\right)-z^{\nu})+(\mu^{\nu})^{T}z^{\nu}+\frac{\alpha_{\nu}}{2}\left\Vert z^{\nu}\right\Vert^{2},
    \]
where $\lambda^{\nu}=(\lambda_{i}^{\nu},\ldots,\lambda_{m_{\nu}}^{\nu}) \in \mathbb{R}^{m_\nu}_{+}$ and $\mu^{\nu}=(\mu_{i}^{\nu},\ldots,\mu_{m_{\nu}}^{\nu}) \in \mathbb{R}^{m_\nu}$ are the Lagrange multipliers associated with constraints $g^{\nu}(x^{\nu},x^{-\nu})-z^{\nu}\leq0$ and $z^{\nu}=0$, respectively. $\alpha_{\nu}>0$ is a penalty parameter. Observe that given $(\lambda^{\nu},\mu^{\nu})$, minimizing $L^{\nu}_{\alpha}$ with respect to $z^{\nu}$ gives $$z^{\nu}(\lambda^{\nu},\mu^{\nu})=\frac{1}{\alpha_{\nu}}(\lambda^{\nu}-\mu^{\nu}),$$ 
which implies that $\lambda^\nu=\mu^\nu$ at the unique solution $z^{\nu,\ast}=0$. Based on this relation of $\lambda^{\nu}$ and $\mu^{\nu}$ from the optimality condition for $z^{\nu}$, we add a proximal term $-\frac{\beta_{\nu}}{2}\left\Vert\lambda^{\nu}-\mu^{\nu}\right\Vert^{2}$ to define \emph{Proximal-Perturbed Lagrangian} (P-Lagrangian): 
	
	\begin{equation} \label{eq:MAL}
	\begin{aligned}
	\mathcal{L}_{\alpha\beta}^{\nu}(x^{\nu},x^{-\nu},z^{\nu},\lambda^{\nu},\mu^{\nu}) 
	& := 
	\theta_{\nu}(x^{\nu},x^{-\nu})+\left(\lambda^{\nu}\right)^{T} (g^{\nu}\left(x^{\nu},x^{-\nu}\right)-z^{\nu}) + (\mu^{\nu})^{T}z^{\nu} \\
	& \quad 
	+ \frac{\alpha_{\nu}}{2}\left\Vert z^{\nu}\right\Vert ^{2} - \frac{\beta_{\nu}}{2}\left\Vert \lambda^{\nu} - \mu^{\nu}\right\Vert^{2},
	\end{aligned}
	\end{equation}
where $\beta_{\nu}>0$ is a proximal regularization parameter.
	
We observe that the structure of the P-Lagrangian $\mathcal{L}_{\alpha\beta}^{\nu}$ in \eqref{eq:MAL} differs from the standard augmented Lagrangian and its variants (\citealp{hestenes1969multiplier,powell1969method,bertsekas2014constrained,birgin2014practical}, and references therein). First, it is characterized by the absence of penalty term for handling the coupling constraint $g^{\nu}(x^{\nu},x^{-\nu})-z^{\nu}\leq0$. Only additional constraint $z^{\nu}=0$ is penalized with a quadratic penalty term $\frac{\alpha_{\nu}}{2}\left\Vert z^{\nu}\right\Vert ^{2}$, while the constraint $g^{\nu}(x^{\nu},x^{-\nu})-z^{\nu}\leq0$ is merely relaxed into the objective with the corresponding multiplier. Second, the P-Lagrangian is strongly concave in $\lambda^{\nu}$ (for fixed $\mu^{\nu}$) and in $\mu^{\nu}$ (for fixed $\lambda^{\nu}$) due to the presence of the negative quadratic term $-\frac{\beta_{\nu}}{2}\left\Vert \lambda^{\nu}-\mu^{\nu}\right\Vert ^{2}$. Note that, as will be shown later, this quadratic term $-\frac{\beta_{\nu}}{2}\left\Vert  \lambda^{\nu}-\mu^{\nu}\right\Vert^{2}$ plays an important role that it does not allow for the next iterate $\lambda^{\nu,k+1}$ to deviate far from $\mu^{\nu,k}$ when updating the multiplier $\lambda^\nu$ via an exact maximization scheme.

\subsection{Equivalence between a Saddle Point of P-Lagrangian  and a GNE}

Now consider the following P-Lagrangian dual problem for given $x^{-\nu}$:
	\begin{equation}\label{eq:modfiedLagrangeDual}
	\underset{\lambda^{\nu}\in\mathbb{R}_{+}^{m_{\nu}},\mu^{\nu}\in\mathbb{R}^{m_{\nu}}}{\mathrm{max}}\left\{ \mathcal{D}_{\alpha\beta}^{\nu} (\lambda^{\nu},\mu^{\nu}):= \underset{x^{\nu}\in \mathcal{X}_{\nu},z^{\nu}\in\mathbb{R}^{m_{\nu}}}{\mathrm{min}}\mathcal{L}_{\alpha\beta}^{\nu} (x^{\nu},x^{-\nu},z^{\nu},\lambda^{\nu},\mu^{\nu})\right\}.
	\end{equation}
Since $\mathcal{L}_{\alpha\beta}^{\nu}(\bullet,x^{-\nu,\ast},z^{\nu},\lambda^{\nu},\mu^{\nu})$ is convex, the primal-dual solutions of problem \eqref{eq:modfiedLagrangeDual}, $(x^{\nu,\ast},x^{-\nu,\ast},z^{\nu,\ast})$ and $(\lambda^{\nu,\ast},\mu^{\nu,\ast})$ given $x^{-\nu}=x^{-\nu,\ast}$, can be characterized by the saddle point of the P-Lagrangian.
	
\renewcommand\thetheorem{3}
\begin{definition} 
Given $x^{-\nu,\ast}$, a point  $(x^{\nu,\ast},x^{-\nu,\ast},z^{\nu,\ast},\lambda^{\nu,\ast},\mu^{\nu,\ast})$ is said to be a (parametrized) saddle point of the Proximal-Perturbed Lagrangian for $\alpha_{\nu}>0$ and $\beta_{\nu}>0$ if for every $\nu=1, \ldots, N,$
\begin{equation} \label{eq:saddle}
	\mathcal{L}_{\alpha\beta}^{\nu}(x^{\nu,\ast},x^{-\nu,\ast},z^{\nu,\ast},\lambda^{\nu},\mu^{\nu})
	\leq\mathcal{L}_{\alpha\beta}^{\nu}(x^{\nu,\ast},x^{-\nu,\ast},z^{\nu,\ast},\lambda^{\nu,\ast},\mu^{\nu,\ast})
	\leq\mathcal{L}_{\alpha\beta}^{\nu}(x^{\nu},x^{-\nu,\ast},z^{\nu},\lambda^{\nu,\ast},\mu^{\nu,\ast}),
\end{equation}
for all $(x^{\nu},z^{\nu},\lambda^{\nu},\mu^{\nu})\in \mathcal{X}_{\nu}(x^{-\nu,\ast})\times\mathbb{R}^{m_{\nu}}\times\mathbb{R}_{+}^{m_{\nu}}\times\mathbb{R}^{m_{\nu}}$. Here, $x^{-\nu,\ast}$ are viewed as parameters. 
\end{definition}
	
We establish the equivalence between computing a saddle point of $\mathcal{L}_{\alpha\beta}^{\nu}$ and finding an equilibrium of the GNEP \eqref{eq:of}, by proving the following two theorems.  
	
\renewcommand\thetheorem{1}
\begin{theorem}\label{lem_saddle_to_eq} 
	Let  $\left(x^{\nu,\ast},\boldsymbol{x}^{-\nu,\ast},z^{\nu,\ast},\lambda^{\nu,\ast},\mu^{\nu,\ast}\right)$ be a saddle point of  $\mathcal{L}_{\alpha\beta}^{\nu}\left(x^{\nu},\boldsymbol{x}^{-\nu},z^{\nu},\lambda^{\nu},\mu^{\nu}\right)$ for a given $\boldsymbol{x}^{-\nu}=\boldsymbol{x}^{-\nu,\ast}$ and for some $\alpha_{\nu},\beta_{\nu}>0$. Then,  $\mathbf{x}^{\ast}=\left(x^{\nu,\ast},\boldsymbol{x}^{-\nu,\ast}\right)$ is an equilibrium of the GNEP \eqref{eq:of}.
\end{theorem}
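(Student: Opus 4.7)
The plan is to extract from the two-sided saddle-point inequality \eqref{eq:saddle} exactly the KKT system \eqref{eq:KKT} for player $\nu$'s subproblem $\mathrm{P}_\nu(x^{-\nu,\ast})$, and then conclude via the standard sufficiency argument (using Assumption \ref{assumption1}).

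First I would exploit the \emph{left} inequality of \eqref{eq:saddle}, which states that $(\lambda^{\nu,\ast},\mu^{\nu,\ast})$ maximizes $\mathcal{L}_{\alpha\beta}^{\nu}(x^{\nu,\ast},x^{-\nu,\ast},z^{\nu,\ast},\cdot,\cdot)$ over $\mathbb{R}_+^{m_\nu}\times\mathbb{R}^{m_\nu}$. Because the P-Lagrangian is strongly concave in $(\lambda^\nu,\mu^\nu)$ (via the $-\tfrac{\beta_\nu}{2}\|\lambda^\nu-\mu^\nu\|^2$ term), first-order optimality gives: (i) the projected-gradient condition in $\lambda^\nu\ge 0$,
\[
\lambda^{\nu,\ast}\ge 0,\quad g^{\nu}(x^{\nu,\ast},x^{-\nu,\ast})-z^{\nu,\ast}-\beta_\nu(\lambda^{\nu,\ast}-\mu^{\nu,\ast})\le 0,
\]
together with complementarity of these two vectors; and (ii) the unconstrained stationarity in $\mu^\nu$,
\[
z^{\nu,\ast}+\beta_\nu(\lambda^{\nu,\ast}-\mu^{\nu,\ast})=0.
\]
Next I would use the \emph{right} inequality of \eqref{eq:saddle} to get stationarity in the primal slack $z^\nu$ (unconstrained): $-\lambda^{\nu,\ast}+\mu^{\nu,\ast}+\alpha_\nu z^{\nu,\ast}=0$, i.e.\ $z^{\nu,\ast}=\tfrac{1}{\alpha_\nu}(\lambda^{\nu,\ast}-\mu^{\nu,\ast})$.

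The key algebraic step is to combine the two expressions for $z^{\nu,\ast}$: from (ii) we have $z^{\nu,\ast}=-\beta_\nu(\lambda^{\nu,\ast}-\mu^{\nu,\ast})$, and from the $z^\nu$-stationarity $z^{\nu,\ast}=\tfrac{1}{\alpha_\nu}(\lambda^{\nu,\ast}-\mu^{\nu,\ast})$. Adding these and using $\alpha_\nu,\beta_\nu>0$ forces
\[
\lambda^{\nu,\ast}=\mu^{\nu,\ast}\quad\text{and}\quad z^{\nu,\ast}=0.
\]
Plugging $z^{\nu,\ast}=0$ and $\lambda^{\nu,\ast}=\mu^{\nu,\ast}$ back into the $\lambda^\nu$-optimality conditions yields primal feasibility $g^{\nu}(x^{\nu,\ast},x^{-\nu,\ast})\le 0$, dual feasibility $\lambda^{\nu,\ast}\ge 0$, and the complementarity $(\lambda^{\nu,\ast})^{T}g^{\nu}(x^{\nu,\ast},x^{-\nu,\ast})=0$, i.e., the last line of \eqref{eq:KKT}.

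Finally I would use the right inequality of \eqref{eq:saddle} once more, but now with the two reductions $z^{\nu,\ast}=0$ and $\lambda^{\nu,\ast}=\mu^{\nu,\ast}$ already in hand. Evaluating $\mathcal{L}_{\alpha\beta}^{\nu}$ at the saddle point collapses the extra terms and gives $\mathcal{L}_{\alpha\beta}^{\nu}(x^{\nu,\ast},x^{-\nu,\ast},0,\lambda^{\nu,\ast},\mu^{\nu,\ast})=\theta_\nu(x^{\nu,\ast},x^{-\nu,\ast})$ using the complementarity shown above. Specializing the right inequality to arbitrary $x^\nu\in\mathcal{X}_\nu$ with the choice $z^\nu=0$ therefore yields
\[
\theta_{\nu}(x^{\nu,\ast},x^{-\nu,\ast})\le\theta_{\nu}(x^{\nu},x^{-\nu,\ast})+(\lambda^{\nu,\ast})^{T}g^{\nu}(x^{\nu},x^{-\nu,\ast}),\qquad\forall x^{\nu}\in\mathcal{X}_\nu.
\]
Restricting to $x^\nu\in\mathcal{F}_\nu(x^{-\nu,\ast})$ makes $g^\nu(x^\nu,x^{-\nu,\ast})\le 0$, and since $\lambda^{\nu,\ast}\ge 0$ the extra term is nonpositive, leaving $\theta_\nu(x^{\nu,\ast},x^{-\nu,\ast})\le\theta_\nu(x^\nu,x^{-\nu,\ast})$ for every feasible $x^\nu$. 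Repeating for each $\nu$ gives exactly Definition \ref{def_gne}, so $\mathbf{x}^{\ast}$ is a GNE. The only subtle step is the algebraic identification $\lambda^{\nu,\ast}=\mu^{\nu,\ast}$, $z^{\nu,\ast}=0$ from the two disparate stationarity conditions on $z^\nu$ and $\mu^\nu$; everything else is routine saddle-point reasoning together with Assumption \ref{assumption1} to ensure the sufficient-optimality direction holds.
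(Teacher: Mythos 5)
Your proof is correct. It reaches the same two milestones as the paper's proof (first: $z^{\nu,\ast}=0$, $\lambda^{\nu,\ast}=\mu^{\nu,\ast}$, primal feasibility and complementarity; second: the optimality of $x^{\nu,\ast}$ via the right-hand saddle inequality restricted to feasible $x^{\nu}$), but it gets to the first milestone by a genuinely different mechanism. The paper establishes feasibility by contradiction — if $g_i^{\nu}(\mathbf{x}^{\ast})>0$ for some $i$, sending $\lambda_i^{\nu}=\mu_i^{\nu}\to\infty$ drives $\mathcal{L}_{\alpha\beta}^{\nu}$ to $+\infty$ and violates the left inequality — and then extracts $(\lambda^{\nu,\ast})^Tg^{\nu}(\mathbf{x}^{\ast})=0$ and $\lambda^{\nu,\ast}=\mu^{\nu,\ast}$ from the supremum characterization of the dual pair, working throughout with the reduced Lagrangian \eqref{eq:reducedAL}. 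You instead write down the first-order optimality conditions in $\mu^{\nu}$ (unconstrained), $\lambda^{\nu}$ (over $\mathbb{R}_+^{m_\nu}$, giving sign and complementarity conditions on the partial gradient) and $z^{\nu}$ (unconstrained), and observe that the two stationarity identities $z^{\nu,\ast}=-\beta_\nu(\lambda^{\nu,\ast}-\mu^{\nu,\ast})$ and $z^{\nu,\ast}=\tfrac{1}{\alpha_\nu}(\lambda^{\nu,\ast}-\mu^{\nu,\ast})$ force $(\tfrac{1}{\alpha_\nu}+\beta_\nu)(\lambda^{\nu,\ast}-\mu^{\nu,\ast})=0$, hence $\lambda^{\nu,\ast}=\mu^{\nu,\ast}$ and $z^{\nu,\ast}=0$; feasibility and complementarity then fall out of the reduced $\lambda^{\nu}$-condition $g^{\nu}(\mathbf{x}^{\ast})\le 0$, $(\lambda^{\nu,\ast})^Tg^{\nu}(\mathbf{x}^{\ast})=0$ with no limiting argument. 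This KKT-extraction route is self-contained and arguably cleaner, and it makes explicit the role of the perturbation/proximal structure (it is precisely the mismatch in sign between the $z$- and $\mu$-stationarity coefficients that pins down $\lambda^{\nu,\ast}=\mu^{\nu,\ast}$); the paper's route avoids differentiating in the dual variables and is the more classical Lagrangian-duality argument. Two minor imprecisions that do not affect validity: $\mathcal{L}_{\alpha\beta}^{\nu}$ is not \emph{jointly} strongly concave in $(\lambda^{\nu},\mu^{\nu})$ (the quadratic $-\tfrac{\beta_\nu}{2}\Vert\lambda^{\nu}-\mu^{\nu}\Vert^2$ vanishes along $\lambda^{\nu}=\mu^{\nu}$), but you only need first-order \emph{necessary} conditions at the maximizer, which require no concavity at all; and Assumption \ref{assumption1} (convexity) is in fact not needed in this direction — the saddle-point inequality delivers global optimality of $x^{\nu,\ast}$ directly, exactly as in your last display.
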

\begin{proof} 
     See Appendix \ref{A.1}.
\end{proof}

\renewcommand\thetheorem{2}
\begin{theorem}\label{lem_eq_to_saddle}
	Assume that $\mathbf{x}^{\ast}=\left(x^{1,\ast},\ldots,x^{N,\ast}\right)$ is an equilibrium of the GNEP \eqref{eq:of} at which the KKT conditions \eqref{eq:KKT} hold with some Lagrange multipliers $\eta^{\nu,\ast}$ for all players' optimization problems, given $\boldsymbol{x}^{-\nu}=\boldsymbol{x}^{-\nu,\ast}$. Then for every $\nu=1,\ldots,N$, there exist Lagrange multipliers $\left(\lambda^{\nu,\ast},\mu^{\nu,\ast}\right)$ such that 
\begin{equation}\label{eq:lem_eq_to_saddle_sp}
	\mathcal{L}_{\alpha\beta}^{\nu}\left(\mathbf{x}^{\ast},z^{\nu,\ast},\lambda^{\nu},\mu^{\nu}\right)\leq\mathcal{L}_{\alpha\beta}^{\nu}\left(\mathbf{x}^{\ast},z^{\nu,\ast},\lambda^{\nu,\ast},\mu^{\nu,\ast}\right)\leq\mathcal{L}_{\alpha\beta}^{\nu}\left(x^{\nu},\boldsymbol{x}^{-\nu,\ast},z^{\nu},\lambda^{\nu,\ast},\mu^{\nu,\ast}\right)
\end{equation}
for any $(x^{\nu},z^{\nu},\lambda^{\nu},\mu^{\nu})\in  \mathcal{X}_{\nu}(x^{-\nu,\ast})\times\mathbb{R}^{m_{\nu}}\times\mathbb{R}_{+}^{m_{\nu}}\times\mathbb{R}^{m_{\nu}}$.
\end{theorem}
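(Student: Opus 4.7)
The plan is to exhibit explicit multipliers and a perturbation value that make the two-sided saddle-point inequality \eqref{eq:lem_eq_to_saddle_sp} immediate from convexity and the KKT data. Given the GNE $\mathbf{x}^\ast$ and the KKT multipliers $\eta^{\nu,\ast}$, I would take
\[
z^{\nu,\ast}=0,\qquad \lambda^{\nu,\ast}=\eta^{\nu,\ast},\qquad \mu^{\nu,\ast}=\eta^{\nu,\ast}.
\]
Plugging these into \eqref{eq:MAL} and using complementary slackness $(\eta^{\nu,\ast})^{T}g^{\nu}(\mathbf{x}^\ast)=0$ yields $\mathcal{L}_{\alpha\beta}^{\nu}(\mathbf{x}^\ast,z^{\nu,\ast},\lambda^{\nu,\ast},\mu^{\nu,\ast})=\theta_{\nu}(\mathbf{x}^\ast)$, which is the pivot value both inequalities in \eqref{eq:lem_eq_to_saddle_sp} will be compared against.

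For the left (dual) inequality, I would fix the primal iterate at $(\mathbf{x}^\ast,0)$ so that
\[
\mathcal{L}_{\alpha\beta}^{\nu}(\mathbf{x}^\ast,0,\lambda^{\nu},\mu^{\nu})=\theta_{\nu}(\mathbf{x}^\ast)+(\lambda^{\nu})^{T}g^{\nu}(\mathbf{x}^\ast)-\tfrac{\beta_{\nu}}{2}\|\lambda^{\nu}-\mu^{\nu}\|^{2}.
\]
Since $\mathbf{x}^\ast$ is feasible, $g^{\nu}(\mathbf{x}^\ast)\le 0$, and together with $\lambda^{\nu}\ge 0$ this gives $(\lambda^{\nu})^{T}g^{\nu}(\mathbf{x}^\ast)\le 0$; the proximal term is also non-positive. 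Hence the displayed expression is bounded above by $\theta_{\nu}(\mathbf{x}^\ast)$, which is exactly the value attained at $(\lambda^{\nu,\ast},\mu^{\nu,\ast})=(\eta^{\nu,\ast},\eta^{\nu,\ast})$. This establishes the left-hand inequality.

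For the right (primal) inequality, I would fix the dual iterate at $(\eta^{\nu,\ast},\eta^{\nu,\ast})$; the $\mu^{\nu}$-linear and $\lambda^{\nu}$-linear parts involving $z^{\nu}$ then cancel, leaving
\[
\mathcal{L}_{\alpha\beta}^{\nu}(x^{\nu},\mathbf{x}^{-\nu,\ast},z^{\nu},\eta^{\nu,\ast},\eta^{\nu,\ast})
=L_{0}^{\nu}(x^{\nu},\mathbf{x}^{-\nu,\ast},\eta^{\nu,\ast})+\tfrac{\alpha_{\nu}}{2}\|z^{\nu}\|^{2}.
\]
Dropping the non-negative penalty term reduces the problem to showing $\theta_{\nu}(\mathbf{x}^\ast)\le L_{0}^{\nu}(x^{\nu},\mathbf{x}^{-\nu,\ast},\eta^{\nu,\ast})$ for every $x^{\nu}\in\mathcal{X}_{\nu}$. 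Under Assumption~\ref{assumption1}, $L_{0}^{\nu}(\cdot,\mathbf{x}^{-\nu,\ast},\eta^{\nu,\ast})$ is convex in $x^{\nu}$ (since $\eta^{\nu,\ast}\ge 0$), and the KKT stationarity condition $0\in\nabla_{x^{\nu}}L_{0}^{\nu}(\mathbf{x}^\ast,\eta^{\nu,\ast})+\mathcal{N}_{\mathcal{X}_{\nu}}(x^{\nu,\ast})$ then certifies that $x^{\nu,\ast}$ is a global minimizer of $L_{0}^{\nu}(\cdot,\mathbf{x}^{-\nu,\ast},\eta^{\nu,\ast})$ on $\mathcal{X}_{\nu}$; combined with complementary slackness giving $L_{0}^{\nu}(\mathbf{x}^\ast,\eta^{\nu,\ast})=\theta_{\nu}(\mathbf{x}^\ast)$, the desired bound follows.

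There is no genuine obstacle here beyond making the right guess of multipliers; the proof is essentially a verification. The only subtlety worth flagging is the bookkeeping that shows the cross terms $-(\lambda^{\nu,\ast})^{T}z^{\nu}+(\mu^{\nu,\ast})^{T}z^{\nu}$ cancel on the primal side (because $\lambda^{\nu,\ast}=\mu^{\nu,\ast}$), which is precisely what justifies discarding $z^{\nu}$ from the minimization and reducing the right inequality to the classical optimality of $x^{\nu,\ast}$ for player $\nu$'s convex problem at $\mathbf{x}^{-\nu,\ast}$.
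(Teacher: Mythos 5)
Your proof is correct and follows essentially the same route as the paper's: take $z^{\nu,\ast}=0$ and $\lambda^{\nu,\ast}=\mu^{\nu,\ast}$ equal to the KKT multiplier $\eta^{\nu,\ast}$, identify the pivot value $\theta_{\nu}(\mathbf{x}^{\ast})$ via complementary slackness, bound the dual side using feasibility ($g^{\nu}(\mathbf{x}^{\ast})\le 0$, $\lambda^{\nu}\ge 0$) together with the nonpositive proximal term, and reduce the primal side to global optimality of $x^{\nu,\ast}$ for the convex Lagrangian $L_{0}^{\nu}(\cdot,\mathbf{x}^{-\nu,\ast},\eta^{\nu,\ast})$ on $\mathcal{X}_{\nu}$ via KKT stationarity. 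The only cosmetic difference is that you instantiate the multipliers directly as $\eta^{\nu,\ast}$ and evaluate $\mathcal{L}_{\alpha\beta}^{\nu}$ at the fixed $z^{\nu,\ast}=0$, whereas the paper first argues through the maximizers of the reduced dual function $\mathcal{D}_{\alpha\beta}^{\nu}$ (with $z^{\nu}$ minimized out) and only afterwards identifies them with $\eta^{\nu,\ast}$; your direct verification is slightly cleaner.
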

\begin{proof}
	See Appendix \ref{A.2}.
\end{proof}

\section{Algorithm}

In this section, we propose a simple primal-dual algorithm for computing a saddle point of $\mathcal{L}_{\alpha\beta}^{\nu}$ based on a quadratic approximation of $\mathcal{L}_{\alpha\beta}^{\nu}$ for every $\nu=1,\ldots,N$.
	
\subsection{Motivation for Approximation of Subproblems}

We begin by describing briefly why we need to consider an approximation scheme for updating $\mathbf{x} = (x^{\nu}, x^{-\nu})$. To compute a saddle point of $\mathcal{L}_{\alpha\beta}^{\nu}(\mathbf{x}, z^{\nu}, \lambda^{\nu}, \mu^{\nu})$ for every $\nu=1,\ldots,N$, we should be able to  determine a point  $\widetilde{\mathbf{x}} = (\widetilde{x}^{\nu}, \widetilde{x}^{-\nu})$ that satisfies the following first-order optimality (or simultaneous stationarity) condition of subproblems for fixed $(z^{\nu},\lambda^{\nu},\mu^{\nu})$:
    \[
    \nabla_{x^{\nu}} \mathcal{L}_{\alpha\beta}^{\nu} (\widetilde{\mathbf{x}},z^{\nu},\lambda^{\nu},\mu^{\nu})^T(x^{\nu} -\widetilde{x}^{\nu}) \geq 0, \ \ \ \ \forall x^\nu \in \mathcal{X}_\nu \ \ \textrm{for} \ \textrm{all} \ \nu=1,\ldots,N.
    \] 
It is well known (Facchinei and Pang \cite{facchinei2007finite}) that for given $(z^{\nu,k},\lambda^{\nu,k},\mu^{\nu,k})$, computing such a stationary point is equivalent to the  variational inequality (VI) problem  of finding $\widetilde{\mathbf{x}}\in \mathbf{X}$ such that
    \[
    \mathbf{L} \left(\widetilde{\mathbf{x}},z^k, \lambda^k, \mu^k \right)^{T}
    \left(\mathbf{x} -\widetilde{\mathbf{x}} \right) \geq 0, \quad \forall \mathbf{x} \in \mathbf{X},
    \]
where $\mathbf{X} := \prod_{\nu=1}^{N} \mathcal{X}_{\nu}$, the Cartesian product of the private strategy sets of all players, and the mapping $\mathbf{L} \left(\mathbf{x},z^k, \lambda^k, \mu^k \right): \mathbf{X} \rightarrow \mathbb{R}^n$ is given by 
    \[
    \mathbf{L} \left(\mathbf{x},z^k, \lambda^k, \mu^k \right) =
    \begin{bmatrix}
    \nabla_{x^{1}} \mathcal{L}_{\alpha\beta}^{1} \left( x^{1},x^{-1},z^{1,k},\lambda^{1,k},\mu^{1,k}\right)\\
    \vdots \\
    \nabla_{x^{N}} \mathcal{L}_{\alpha\beta}^{N} \left( x^{N},x^{-N},z^{N,k},\lambda^{N,k},\mu^{N,k}\right)
    \end{bmatrix},
    \]
with $z= \left[ (z^1)^T,\ldots,(z^N)^T\right]$, $\lambda=\left[ (\lambda^1)^T,\ldots,(\lambda^N)^T\right]^T$ and  $\mu=\left[ (\mu^1)^T,\ldots,(\mu^N)^T\right]^T$.

However, it may be difficult to compute the point $\widetilde{\mathbf{x}}$ using descent methods. In the GNEP setting, the monotonicity of the mapping  $\mathbf{L} (\mathbf{x},z^{k}, \lambda^{k}, \mu^{k})$ with respect to $\mathbf{x}=\left(x^{\nu},x^{-\nu}\right)$ does not hold in general \cite[Section 5.2]{facchinei2010generalized} even if  each component $\nabla_{x^{\nu}} \mathcal{L}_{\alpha\beta}^{\nu} ( x^{\nu},x^{-\nu},z^{\nu},\lambda^{\nu},\mu^{\nu})$ is convex in $x^{\nu}$. The nonconvexity of each P-Lagrangian with respect to the other players' variables makes it hard to preserve a descent direction for the convergence to the stationary point $\widetilde{\mathbf{x}}$ that satisfies all components of the variational inequality.

\subsection{Construction of Quadratic Approximation Model}

To overcome such a computational difficulty, we consider a  monotone approximation, denoted by $\widehat{\mathbf{L}}^k$, to the nonmonotone mapping $\mathbf{L}$ in $\mathbf{x}$. The monotone approximation $\widehat{\mathbf{L}}^k$ of the mapping  $\mathbf{L}$ can be always chosen even if $\mathbf{L}$ is nonmonotone (see e.g., \citet{chung2010subproblem, luna2014class}). Furthermore, strongly monotone approximation mapping can be derived by replacing each player's $\mathcal{L}_{\alpha\beta}^{\nu}$ by a simple approximate function and then constructing an approximation $\widehat{\mathbf{L}}^k$.  

To this end, inspired by \citet{beck2009fast} and \citet{bolte2014proximal}, we first employ the following quadratic approximation $\widehat{\mathcal{L}_{\alpha\beta}^{\nu}}$ in only $\mathbf{x}$, at a given point $\mathbf{y}$:
\begin{multline}\label{eq:approx_AL}
\widehat{\mathcal{L}_{\alpha\beta}^{\nu}}(\mathbf{x},z^{\nu},\lambda^{\nu},\mu^{\nu};\mathbf{y})  
:= \mathcal{L}_{\alpha\beta}^{\nu}(\mathbf{y},z^{\nu},\lambda^{\nu},\mu^{\nu})
+\nabla_{x^{\nu}}\mathcal{L}_{\alpha\beta}^{\nu}(\mathbf{y},z^{\nu},\lambda^{\nu},\mu^{\nu})^{T} (x^{\nu}-y^{\nu}) +\frac{\gamma_{\nu}}{2}\left\Vert x^{\nu}-y^{\nu}\right\Vert ^{2}\\
+\sum_{\nu^{\prime}\neq\nu}\nabla_{x^{\nu^{\prime}}}\mathcal{L}_{\alpha\beta}^{\nu}(\mathbf{y},z^{\nu},\lambda^{\nu},\mu^{\nu})^{T} (x^{\nu^{\prime}}-y^{\nu^{\prime}})+\frac{\gamma_{\nu}}{2}\sum_{\nu^{\prime}\neq\nu}\left\Vert x^{\nu^{\prime}}-y^{\nu^{\prime}}\right\Vert ^{2},
\end{multline}
namely, the linearized P-Lagrangian $\mathcal{L}_{\alpha\beta}^{\nu}$ at the point $\mathbf{y}$ combined with quadratic proximal terms that measure the local error in the linear approximation. Here, $\gamma_\nu>0$ is a proximal parameter. The term $\sum_{\nu^{\prime}\neq\nu}\nabla_{x^{\nu^{\prime}}}\mathcal{L}_{\alpha\beta}^{\nu}\left(\mathbf{y},z^{\nu},\lambda^{\nu},\mu^{\nu}\right)=\nabla_{x^{-\nu}}\mathcal{L}_{\alpha\beta}^{\nu}\left(\mathbf{y},z^{\nu},\lambda^{\nu},\mu^{\nu}\right)$ represents the gradient at a given point $\mathbf{y}\in\mathbb{R}^{n}$ in other players' strategies, and $\nabla_{x^{\nu}}\mathcal{L}_{\alpha\beta}^{\nu}\left(\mathbf{y},z^{\nu},\lambda^{\nu},\mu^{\nu}\right)$ denotes the gradient of $\mathcal{L}_{\alpha\beta}^{\nu}$ with respect to $x^\nu$ at the point $\mathbf{y}$. 

From the conditions \eqref{eq:assumption_lipschitz_1}--\eqref{eq:assumption_lipschitz_21} in  Assumption \ref{assumption_lipschitz}, we know that  $\theta_\nu$ and $g^\nu$ have Lipschitz continuous gradients; there exist Lipschitz constants $L_{\nabla\theta_{\nu}}>0$ and $L_{\nabla{g^{\nu}}}>0$ such that 
	\begin{subequations}
		\begin{align}
		\left\Vert \nabla_{\mathbf{x}}{\theta_{\nu}\left(\mathbf{x}_{1}\right)}-\nabla_{\mathbf{x}}{\theta_{\nu}\left(\mathbf{x}_{2}\right)}\right\Vert 
		&\leq L_{\nabla\theta_{\nu}}\left\Vert \mathbf{x}_{1}-\mathbf{x}_{2}\right\Vert ,\quad\forall\mathbf{x}_{1},\mathbf{x}_{2}\in\mathbf{X}, \label{eq:uniform_lipsch1} \\
		\left\Vert \nabla_{\mathbf{x}}{g^{\nu}\left(\mathbf{x}_{1}\right)}-\nabla_{\mathbf{x}}{g^{\nu}\left(\mathbf{x}_{2}\right)}\right\Vert 
		&\leq L_{\nabla{g^{\nu}}}\left\Vert \mathbf{x}_{1}-\mathbf{x}_{2}\right\Vert ,\quad\forall\mathbf{x}_{1},\mathbf{x}_{2}\in\mathbf{X}, \label{eq:uniform_lipsch2} 
	\end{align}
\end{subequations}		
where  $L_{\nabla{\theta_{\nu}}}=L_\nu(\theta_\nu) + L_{-\nu}(\theta_{\nu})$ and $L_{\nabla{g^{\nu}}}=L_\nu(g^\nu) + L_{-\nu}(g^{\nu})$ (see \citet[Lemma 2]{nesterov2012efficiency}). Here,  $\nabla_{\mathbf{x}}\theta_{\nu}(\mathbf{x})$ and $\nabla_{\mathbf{x}}g^{\nu}(\mathbf{x})$ represent $\left[\nabla_{x^{1}}\theta_{\nu}(\mathbf{x})^T,\ldots,\nabla_{x^{N}}\theta_{\nu}(\mathbf{x})^T\right]^T$ and $\left[\nabla_{x^{1}}g^{\nu}(\mathbf{x})^T,\ldots,\nabla_{x^{N}}g^{\nu}(\mathbf{x})^T\right]^T$, respectively. 
As a direct consequence of the above Lipschitz  continuity of $\nabla_{\mathbf{x}}\theta_\nu(\mathbf{x})$ and $\nabla_{\mathbf{x}}g^\nu(\mathbf{x})$,   \eqref{eq:uniform_lipsch1} and  \eqref{eq:uniform_lipsch2} respectively, we have the well-known descent Lemma. 
\renewcommand\thetheorem{1}
\begin{lemma}[Bertsekas {\cite[Proposition A.24]{bertsekas1999nonlinear}}]\label{lem_descent} 
For $\nu=1,\ldots,N$ and for any fixed  $\left(z^{\nu},\lambda^{\nu},\mu^{\nu}\right)$, $\nabla_{\mathbf{x}} \mathcal{L}_{\alpha\beta}^{\nu}$ is Lipschitz continuous with constant $L_{\nu}>0$. We thus have
\[
	\mathcal{L}_{\alpha\beta}^{\nu}(\mathbf{x}_{1}) \leq\mathcal{L}_{\alpha\beta}^{\nu}(\mathbf{x}_{2}) +\nabla_{\mathbf{x}}\mathcal{L}_{\alpha\beta}^{\nu}(\mathbf{x}_{2})^{T}(\mathbf{x}_{1}-\mathbf{x}_{2}) +\frac{L_{\nu}}{2}\left\Vert \mathbf{x}_{1}-\mathbf{x}_{2}\right\Vert ^{2},\quad \forall\mathbf{x}_{1},\mathbf{x}_{2}\in\mathbf{X}.
\]
Here, we omit fixed $\left(z^{\nu},\lambda^{\nu}, \mu^{\nu}\right)$ for notational simplicity.
\end{lemma}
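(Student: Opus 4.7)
The plan is to prove the lemma in two steps: first derive an explicit Lipschitz constant $L_\nu$ for $\nabla_{\mathbf{x}}\mathcal{L}_{\alpha\beta}^\nu$ with $(z^\nu,\lambda^\nu,\mu^\nu)$ held fixed, then conclude the descent inequality by invoking the standard quadratic-upper-bound argument (Bertsekas Proposition~A.24). The cited proposition is purely a consequence of gradient Lipschitz continuity on a convex set, so the only content that really needs to be verified for our particular $\mathcal{L}_{\alpha\beta}^\nu$ is the Lipschitz property.

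The first step rests on the structural observation that, in the P-Lagrangian definition \eqref{eq:MAL}, once $(z^\nu,\lambda^\nu,\mu^\nu)$ is fixed, every term other than $\theta_\nu(\mathbf{x})$ and $(\lambda^\nu)^{T} g^\nu(\mathbf{x})$ is constant in $\mathbf{x}$. Consequently
\[
\nabla_{\mathbf{x}}\mathcal{L}_{\alpha\beta}^\nu(\mathbf{x}) \;=\; \nabla_{\mathbf{x}}\theta_\nu(\mathbf{x}) + \nabla_{\mathbf{x}}g^\nu(\mathbf{x})\,\lambda^\nu.
\]
Applying the triangle inequality, the operator-norm bound $\|A\lambda^\nu\|\leq\|A\|\cdot\|\lambda^\nu\|$, and the uniform Lipschitz estimates \eqref{eq:uniform_lipsch1}--\eqref{eq:uniform_lipsch2} (which themselves consolidate the block-wise bounds \eqref{eq:assumption_lipschitz_1}--\eqref{eq:assumption_lipschitz_21} of Assumption~\ref{assumption_lipschitz}), one obtains
\[
\bigl\|\nabla_{\mathbf{x}}\mathcal{L}_{\alpha\beta}^\nu(\mathbf{x}_1)-\nabla_{\mathbf{x}}\mathcal{L}_{\alpha\beta}^\nu(\mathbf{x}_2)\bigr\| \;\leq\; \bigl(L_{\nabla\theta_\nu} + L_{\nabla g^\nu}\|\lambda^\nu\|\bigr)\,\|\mathbf{x}_1-\mathbf{x}_2\|,
\]
so $L_\nu := L_{\nabla\theta_\nu} + L_{\nabla g^\nu}\|\lambda^\nu\|$ is a valid Lipschitz constant.

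For the second step, since $\mathbf{X} = \prod_\nu \mathcal{X}_\nu$ is convex, the classical descent lemma applies verbatim: integrating the identity $\mathcal{L}_{\alpha\beta}^\nu(\mathbf{x}_1) - \mathcal{L}_{\alpha\beta}^\nu(\mathbf{x}_2) = \int_0^1 \nabla_{\mathbf{x}}\mathcal{L}_{\alpha\beta}^\nu(\mathbf{x}_2 + t(\mathbf{x}_1-\mathbf{x}_2))^{T}(\mathbf{x}_1-\mathbf{x}_2)\,dt$ along the segment connecting $\mathbf{x}_2$ and $\mathbf{x}_1$, subtracting $\nabla_{\mathbf{x}}\mathcal{L}_{\alpha\beta}^\nu(\mathbf{x}_2)^{T}(\mathbf{x}_1-\mathbf{x}_2)$, and bounding the residual using the $L_\nu$-Lipschitz estimate gives the stated quadratic upper bound.

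The main point to flag is not really an obstacle but a bookkeeping warning: the Lipschitz constant $L_\nu$ depends linearly on $\|\lambda^\nu\|$ while being insensitive to $\alpha_\nu$ and $\beta_\nu$. Consequently, whenever this lemma is later applied along iterates $(z^{\nu,k},\lambda^{\nu,k},\mu^{\nu,k})$ produced by the algorithm, the convergence analysis will need either a uniform a priori bound on $\|\lambda^{\nu,k}\|$ or an iteration-dependent constant $L_\nu^k$, something that the favourable structure of the P-Lagrangian (its strong concavity in $\lambda^\nu$) is designed to provide.
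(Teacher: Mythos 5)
Your proof is correct and follows exactly the route the paper intends: the paper gives no explicit argument for this lemma, simply citing Bertsekas' Proposition A.24 after observing that \eqref{eq:uniform_lipsch1}--\eqref{eq:uniform_lipsch2} yield Lipschitz continuity of $\nabla_{\mathbf{x}}\mathcal{L}_{\alpha\beta}^{\nu}$, and your identification of $\nabla_{\mathbf{x}}\mathcal{L}_{\alpha\beta}^{\nu}=\nabla_{\mathbf{x}}\theta_{\nu}+\nabla_{\mathbf{x}}g^{\nu}\lambda^{\nu}$, the constant $L_{\nu}=L_{\nabla\theta_{\nu}}+L_{\nabla g^{\nu}}\left\Vert\lambda^{\nu}\right\Vert$, and the standard integral argument is precisely the omitted content. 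Your closing remark that $L_{\nu}$ scales with $\left\Vert\lambda^{\nu}\right\Vert$ and hence demands a uniform bound on the multiplier iterates is also on target --- the paper supplies exactly that in Theorem \ref{thm2_x}(b).
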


Then, with the proximal parameter $\gamma_\nu$ large enough such that $\gamma_\nu \geq L_\nu$, $\widehat{\mathcal{L}_{\alpha\beta}^{\nu}}(\mathbf{x},z^{\nu},\lambda^{\nu},\mu^{\nu};\mathbf{y})$ in \eqref{eq:approx_AL} is an \emph{upper quadratic} approximation of $\mathcal{L}_{\alpha\beta}^{\nu}(\bullet,z^{\nu},\lambda^{\nu},\mu^{\nu})$ around the point $\mathbf{y}$ with respect to $\mathbf{x}=(x^\nu, x^{-\nu})$ and it has the following properties (see e.g., \cite{beck2009fast,razaviyayn2013unified,scutari2016parallel}).
\renewcommand\thetheorem{1}
\begin{remark} [Properties of $\widehat{\mathcal{L}_{\alpha\beta}^{\nu}}$] \label{rem_prop_approx_AL}
The approximation function $\widehat{\mathcal{L}_{\alpha\beta}^{\nu}}$ with $\gamma_\nu \geq L_\nu$ satisfies the properties:
	
\begin{enumerate}[label=(\textrm{P\arabic*})]	
		\item \label{itm:p1} $\widehat{\mathcal{L}_{\alpha\beta}^{\nu}}\left(\mathbf{y},z^{\nu},\lambda^{\nu},\mu^{\nu};\mathbf{y}\right)
		=
		\mathcal{L}_{\alpha\beta}^{\nu}\left(\mathbf{y},z^{\nu},\lambda^{\nu},\mu^{\nu}\right)$
		for $\forall\mathbf{y}\in\mathbf{X}$. 
		
		\item \label{itm:p2}
		$\widehat{\mathcal{L}_{\alpha\beta}^{\nu}}\left(\mathbf{x},z^{\nu},\lambda^{\nu},\mu^{\nu};\mathbf{y}\right)
		\geq
		\mathcal{L}_{\alpha\beta}^{\nu}\left(\mathbf{y},z^{\nu},\lambda^{\nu},\mu^{\nu}\right)$
		for $\forall\mathbf{x},\mathbf{y}\in\mathbf{X}$. 
		
		\item \label{itm:p3} $\widehat{\mathcal{L}_{\alpha\beta}^{\nu}}(\bullet,z^{\nu},\lambda^{\nu},\mu^{\nu};\mathbf{y}) $ is strongly convex in all the players' decisions $\mathbf{x}=(x^{\nu},x^{-\nu})$ with constant $c_{\nu}>0$, i.e., for any $\mathbf{x}_{1},\mathbf{x}_{2}\in\mathbf{X}$,
		 \[
		\left(\nabla_{\mathbf{x}}\widehat{\mathcal{L}_{\alpha\beta}^{\nu}}(\mathbf{x}_{1},z^{\nu},\lambda^{\nu}, \mu^{\nu};\mathbf{y})-
		\nabla_{\mathbf{x}}\widehat{\mathcal{L}_{\alpha\beta}^{\nu}}(\mathbf{x}_{2},z^{\nu},\lambda^{\nu}, \mu^{\nu};\mathbf{y})\right)^{T}\left(\mathbf{x}_{1}-\mathbf{x}_{2} \right) 
		\geq 
		c_{\nu}\left\|\mathbf{x}_{1}-\mathbf{x}_{2}\right\|^{2}.
		\]
	
		\item \label{itm:p4}   $\nabla_{\mathbf{x}}\widehat{\mathcal{L}_{\alpha\beta}^{\nu}}
		=\left[\nabla_{x^{1}}\widehat{\mathcal{L}_{\alpha\beta}^{\nu}}^T,\ldots,
		\nabla_{x^{N}}\widehat{\mathcal{L}_{\alpha\beta}^{\nu}}^T\right]^{T}$ is Lipschitz continuous on $\mathbf{X}$  with some Lipschitz constant $\widehat{L}_{\nu} \geq \gamma_{\nu}$, i.e., for any $\mathbf{x}_{1},\mathbf{x}_{2}\in\mathbf{X}$, 
		\[
		\left\Vert \nabla_{\mathbf{x}}\widehat{\mathcal{L}_{\alpha\beta}^{\nu}}\left(\mathbf{x}_{1},z^{\nu},\lambda^{\nu}, \mu^{\nu};\mathbf{y}\right)-\nabla_{\mathbf{x}}\widehat{\mathcal{L}_{\alpha\beta}^{\nu}}\left(\mathbf{x}_{2},z^{\nu},\lambda^{\nu},\mu^{\nu};\mathbf{y}\right)\right\Vert 
		\leq
		\widehat{L}_{\nu}\left\Vert \mathbf{x}_{1}-\mathbf{x}_{2}\right\Vert.
		\]
\end{enumerate}
\end{remark}
The properties \ref{itm:p1} and \ref{itm:p2} imply that $\widehat{\mathcal{L}_{\alpha\beta}^{\nu}}$ with  $\gamma_\nu \geq L_\nu$ is a tight upper bound of  $\mathcal{L}_{\alpha\beta}^{\nu}$  around the given point $\mathbf{y}$. The properties \ref{itm:p3} and \ref{itm:p4} are from the structure of $\widehat{\mathcal{L}_{\alpha\beta}^{\nu}}$ that is the first-order approximation of $\mathcal{L}_{\alpha\beta}^{\nu}$ in $\mathbf{x}$ at $\mathbf{y}$ with quadratic term $\frac{\gamma_{\nu}}{2}\left\Vert \mathbf{x}-\mathbf{y}\right\Vert ^{2}$.

Given the current iterates
$\mathbf{y}=\mathbf{x}^{k}$ and  $(z^{\nu,k},\lambda^{\nu,k},\mu^{\nu,k})$, since $\widehat{\mathcal{L}_{\alpha\beta}^{\nu}}(\bullet,z^{\nu,k},\lambda^{\nu,k},\mu^{\nu,k};\mathbf{x}^{k})$ is uniformly strongly convex on $\mathbf{X}$, there must exist a unique  minimizer $\widehat{\mathbf{x}}^k = (\widehat{x}^{\nu,k}, \widehat{x}^{-\nu,k})$ at each iteration $k$ such that  
\[
\nabla_{x^{\nu}} \widehat{\mathcal{L}_{\alpha\beta}^{\nu}} \left(\widehat{\mathbf{x}}^k, z^{\nu,k},\lambda^{\nu,k},\mu^{\nu,k};\mathbf{x}^k\right)^T
\left(x^{\nu} - \widehat{x}^{\nu,k}\right)  
\geq 0, \quad \nu=1,\ldots,N.
\]
It also follows from \ref{itm:p1} that
\[
\widehat{\mathcal{L}_{\alpha\beta}^{\nu}} \left(\widehat{\mathbf{x}}^k, z^{\nu,k},\lambda^{\nu,k},\mu^{\nu,k};\mathbf{x}^k\right) \leq  \mathcal{L}_{\alpha\beta}^{\nu} \left(\mathbf{x}^k, z^{\nu,k},\lambda^{\nu,k},\mu^{\nu,k}\right).
\]
We can construct a (strongly) monotone approximation mapping $\widehat{\mathbf{L}}^{k}: \mathbf{X} \rightarrow \mathbb{R}^{n}$ given by
\[
 \widehat{\mathbf{L}}^{k} \left(\mathbf{x},z^{k}, \lambda^{k}, \mu^{k};\mathbf{x}^{k} \right) := 
 \begin{bmatrix}
 \nabla_{x^{1}} \widehat{\mathcal{L}_{\alpha\beta}^{1}} \left( x^{1},x^{-1},z^{1,k},\lambda^{1,k},\mu^{1,k}; \mathbf{x}^{k}\right) \\
 \vdots \\
 \nabla_{x^{N}} \widehat{\mathcal{L}_{\alpha\beta}^{N}} \left( x^{N},x^{-N},z^{N,k},\lambda^{N,k},\mu^{N,k};\mathbf{x}^{k}\right) 
 \end{bmatrix}.
\]
Let us now consider solving the following approximate variational inequality problem  $\mathrm{VI}^k(\mathbf{X},\widehat{\mathbf{L}}^k)$  of finding  $\widehat{\mathbf{x}}^{k}$: 
 \begin{equation} \label{eq:approx_VI}
  \mathrm{VI}^k(\mathbf{X},\widehat{\mathbf{L}}^k): \quad
    \widehat{\mathbf{L}}^k \left(\widehat{\mathbf{x}}^k, z^k, \lambda^k, \mu^k; \mathbf{x}^k \right)^T \left(\mathbf{x} -\widehat{\mathbf{x}}^k \right) \geq 0, \quad \forall \mathbf{x} \in \mathbf{X}.
 \end{equation}
It is well known (\cite[Proposition 1.5.8]{facchinei2007finite}) that $\widehat{\mathbf{x}}^{k}$ is also a solution to the system of fixed-point subproblem (or system of nonlinear projected equations) at iteration $k$:
\begin{equation} \label{eq:fixed_point}
    \widehat{\mathbf{x}}^{k}
    -\mathcal{P}_{\mathbf{X}}\left[\widehat{\mathbf{x}}^{k} - \sigma \widehat{\mathbf{L}}^{k}\left( \widehat{\mathbf{x}}^{k}, z^{k}, \lambda^{k}, \mu^{k}; \mathbf{x}^{k}\right) \right]  = 0,
\end{equation}
where $\mathcal{P}_{\mathbf{X}}(x)=\textrm{argmin}\left\{\left\| x - y \right\| \left|\right.   y \in \mathbf{X} \right\} $ denotes the Euclidean projection operator onto the set $\mathbf{X}$ and $\sigma>0$ is a constant. 

For fixed $(\mathbf{x}^{k},z^k, \lambda^{k}, \mu^{k})$ at iteration $k$, we use the following gradient projection to generate a sequence  $\left\lbrace \mathbf{u}^{k,l} \right\rbrace$ in  inner iterations $l=0,1,2,\ldots$
 \begin{equation} \label{eq:PD_VI}
  {\mathbf{u}}^{k,l+1}=\mathcal{P}_{\mathbf{X}}\left[\mathbf{u}^{k,l} - \sigma \widehat{\mathbf{L}}^{k}\left( \mathbf{u}^{k,l}, z^{k}, \lambda^{k}, \mu^{k}; \mathbf{x}^{k}\right)  \right],
 \end{equation}
equivalently,
\begin{equation} \label{eq:distributed_pd_vi}
\begin{aligned} 
\mathbf{u}^{k,l+1} = 
\begin{pmatrix}
u^{1,k,l+1} \\
\vdots \\
u^{\nu,k,l+1}\\
\vdots \\
u^{N,k,l+1}
\end{pmatrix}=
\begin{pmatrix}
& \mathcal{P}_{\mathcal{X}_{1}} \left[ u^{1,k,l} - \sigma  \left(\nabla_{x^{1}} \mathcal{L}_{\alpha\beta}^{1}\left(\mathbf{x}^{k},z^{1,k},\lambda^{1,k},\mu^{1,k}\right) + \gamma_{1} \left(u^{1,k,l}-x^{1,k}\right) \right) \right] \\
& \vdots \\
& \mathcal{P}_{\mathcal{X}_{\nu}} \left[ u^{\nu,k,l} - \sigma  \left(\nabla_{x^{\nu}} \mathcal{L}_{\alpha\beta}^{\nu}\left(\mathbf{x}^{k},z^{\nu,k},\lambda^{\nu,k},\mu^{\nu,k}\right) + \gamma_{\nu} \left(u^{\nu,k,l}-x^{\nu,k}\right) \right) \right]  \\
&  \vdots \\ 
& \mathcal{P}_{\mathcal{X}_{N}} \left[ u^{N,k,l} - \sigma  \left(\nabla_{x^{N}} \mathcal{L}_{\alpha\beta}^{N}\left(\mathbf{x}^{k},z^{N,k},\lambda^{N,k},\mu^{N,k}\right) + \gamma_{N} \left(u^{N,k,l}-x^{N,k}\right) \right) \right]
\end{pmatrix}.
\end{aligned}
\end{equation}

Notice that the structure of $\nabla_{x^{\nu}} \widehat{\mathcal{L}_{\alpha\beta}^{\nu}}$ allows for the inner gradient projection scheme \eqref{eq:distributed_pd_vi} to be implemented in a distributed way since each player $\nu$ can update its own $u^{\nu,k,l}$ while keeping the current primal iterates $\mathbf{x}^k=(x^{\nu,k}, x^{-\nu,k})$ fixed. Thus we can allow each player $\nu$ to choose its own step size $\sigma_\nu$, $\nu=1,\ldots, N$.

We also note that when the private strategy set of each player $\nu$ includes functional constraints $c_{j}^\nu(x^\nu)\leq0, \ j=1,\ldots,p_\nu$, they are treated in the same way to handle $g^{\nu}(x^{\nu},x^{-\nu})\leq0$ via the P-Lagrangian. It follows that only the set $\mathcal{X}_{\nu}$ remains as a simple constraint, and hence the projection onto $\mathcal{X}_{\nu}$ is computationally cheap.

The following Lemma shows that the inner gradient projection scheme \eqref{eq:distributed_pd_vi} converges to the solution $\widehat{\mathbf{x}}^{k}$ of the subproblem \eqref{eq:fixed_point} at each iteration $k$ and thus enables us to compute a point satisfying the decrease property for every  $\mathcal{L}_{\alpha\beta}^{\nu}$ during inner iterations..

\renewcommand\thetheorem{2}
\begin{lemma} \label{prop_converge_inner}
Let $\widehat{\mathbf{x}}^{k}$ be the unique solution to the subproblem  \eqref{eq:fixed_point} and $\mathbf{x}^{k} \neq \widehat{\mathbf{x}}^{k}$. Let $\{ \mathbf{u}^{k,l} \}_{l \geq 1}$ be the sequence generated by the inner gradient projection  \eqref{eq:distributed_pd_vi} with the
step size $\sigma_\nu$ for each player $\nu$. Suppose that the parameter $\gamma_\nu>0$ of proximal term $\frac{\gamma_{\nu}}{2}\left\Vert \mathbf{x}-\mathbf{x}^{k}\right\Vert^{2}$ in $\widehat{\mathcal{L}_{\alpha \beta}^{\nu}}$ is chosen such that $\gamma_\nu \geq L_{\nu}$, where $L_{\nu}$ is the Lipschitz constant of  $\nabla_{\mathbf{x}}\mathcal{L}_{\alpha\beta}^{\nu}$. Then,

\begin{enumerate}[label=(\alph*)]

\item \label{itm:inner_a} 
for $\widehat{\sigma}:=\underset{\nu=1,\ldots,N}{\mathrm{max}}\sigma_{\nu}$ satisfying
$0<\widehat{\sigma}<(2\gamma_{\mathrm{min}}^2)/\widehat{L}_{\mathrm{max}}$, where  $\gamma_{\mathrm{min}}=\underset{\nu=1,\ldots,N}{\mathrm{min}}\gamma_{\nu}$,  $\widehat{L}_{\mathrm{max}}=\underset{\nu=1,\ldots,N}{\mathrm{max}}\widehat{L}_{\nu}$, and   $\widehat{L}_{\nu}>0$ is the Lipschitz constant of $\nabla_{\mathbf{x}}\widehat{\mathcal{L}_{\alpha \beta}^{\nu}}$, the sequence $\{ \mathbf{u}^{k,l} \}_{l \geq 1}$ converges to the solution $\widehat{\mathbf{x}}^{k}$. That is, 
\begin{equation} \label{eq:prop_converge_result}
	  \left\| {\mathbf{u}}^{k,l+1} - \widehat{\mathbf{x}}^{k}  \right\|
	  \leq
	  \boldsymbol{\tau} \left\| \mathbf{u}^{k,l} - \widehat{\mathbf{x}}^{k}  \right\|, \quad 0<\boldsymbol{\tau}<1,
	\end{equation}
where $\boldsymbol{\tau} = \sqrt{1-2 \gamma_{\mathrm{min}} \widehat{\sigma} + \widehat{\sigma}^2 \widehat{L}_{\mathrm{max}}}$. 
		
\item \label{itm:inner_b}  thus, the inner gradient projection \eqref{eq:distributed_pd_vi} can compute $\mathbf{u}^{k,l+1}$ close to $\widehat{\mathbf{x}}^{k}$ such that
\begin{equation} \label{eq:inner_b}
	\mathcal{L}_{\alpha\beta}^{\nu}\left(\mathbf{u}^{k,l+1},z^{\nu,k},\lambda^{\nu,k},\mu^{\nu,k}\right)
	<
	\mathcal{L}_{\alpha\beta}^{\nu}\left(\mathbf{x}^{k},z^{\nu,k},\lambda^{\nu,k},\mu^{\nu,k}\right) 
\end{equation}
for every $\nu=1,\ldots,N$ in a finite number of inner iterations.
\end{enumerate}		
\end{lemma}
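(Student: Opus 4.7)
The plan is to prove part~(a) by the standard gradient-projection contraction argument for strongly monotone maps, and then to derive part~(b) from part~(a) together with the sufficient-decrease property that is built into the upper approximation $\widehat{\mathcal{L}_{\alpha\beta}^{\nu}}$.

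For part~(a), I would begin by observing that $\widehat{\mathbf{x}}^{k}$ admits the fixed-point representation $\widehat{\mathbf{x}}^{k}=\mathcal{P}_{\mathbf{X}}\bigl[\widehat{\mathbf{x}}^{k}-\widehat{\sigma}\,\widehat{\mathbf{L}}^{k}(\widehat{\mathbf{x}}^{k},z^{k},\lambda^{k},\mu^{k};\mathbf{x}^{k})\bigr]$, which is exactly \eqref{eq:fixed_point} and equivalent to $\mathrm{VI}^{k}(\mathbf{X},\widehat{\mathbf{L}}^{k})$ by \cite[Proposition~1.5.8]{facchinei2007finite}. Subtracting this identity from the update \eqref{eq:PD_VI} and invoking the nonexpansiveness of $\mathcal{P}_{\mathbf{X}}$ gives
\[
\|\mathbf{u}^{k,l+1}-\widehat{\mathbf{x}}^{k}\|^{2} \le \bigl\|(\mathbf{u}^{k,l}-\widehat{\mathbf{x}}^{k})-\widehat{\sigma}\bigl(\widehat{\mathbf{L}}^{k}(\mathbf{u}^{k,l};\mathbf{x}^{k})-\widehat{\mathbf{L}}^{k}(\widehat{\mathbf{x}}^{k};\mathbf{x}^{k})\bigr)\bigr\|^{2}.
\]
Expanding the square and bounding the cross term via property~\ref{itm:p3} (strong monotonicity of $\widehat{\mathbf{L}}^{k}$ with modulus $\gamma_{\mathrm{min}}$, which each block of $\widehat{\mathbf{L}}^{k}$ inherits from the quadratic proximal term) and the quadratic term via property~\ref{itm:p4} (Lipschitz continuity with constant $\widehat{L}_{\mathrm{max}}$) yields
\[
\|\mathbf{u}^{k,l+1}-\widehat{\mathbf{x}}^{k}\|^{2} \le (1-2\gamma_{\mathrm{min}}\widehat{\sigma}+\widehat{\sigma}^{2}\widehat{L}_{\mathrm{max}})\,\|\mathbf{u}^{k,l}-\widehat{\mathbf{x}}^{k}\|^{2} = \boldsymbol{\tau}^{2}\,\|\mathbf{u}^{k,l}-\widehat{\mathbf{x}}^{k}\|^{2},
\]
which is \eqref{eq:prop_converge_result}. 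The stated restriction on $\widehat{\sigma}$ is precisely what forces $\boldsymbol{\tau}\in(0,1)$, so $\mathbf{u}^{k,l}\to\widehat{\mathbf{x}}^{k}$ linearly.

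For part~(b), I would chain three ingredients. By the strong convexity of $\widehat{\mathcal{L}_{\alpha\beta}^{\nu}}(\cdot,z^{\nu,k},\lambda^{\nu,k},\mu^{\nu,k};\mathbf{x}^{k})$ with constant $c_{\nu}$ (property~\ref{itm:p3}), the fact that $\widehat{\mathbf{x}}^{k}$ is its unique minimizer, and the tight-at-$\mathbf{y}$ identity~\ref{itm:p1},
\[
\widehat{\mathcal{L}_{\alpha\beta}^{\nu}}(\widehat{\mathbf{x}}^{k};\mathbf{x}^{k}) \le \widehat{\mathcal{L}_{\alpha\beta}^{\nu}}(\mathbf{x}^{k};\mathbf{x}^{k})-\tfrac{c_{\nu}}{2}\|\widehat{\mathbf{x}}^{k}-\mathbf{x}^{k}\|^{2} = \mathcal{L}_{\alpha\beta}^{\nu}(\mathbf{x}^{k})-\tfrac{c_{\nu}}{2}\|\widehat{\mathbf{x}}^{k}-\mathbf{x}^{k}\|^{2},
\]
which is strictly less than $\mathcal{L}_{\alpha\beta}^{\nu}(\mathbf{x}^{k})$ because $\mathbf{x}^{k}\neq\widehat{\mathbf{x}}^{k}$. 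The majorization property~\ref{itm:p2} supplies the pointwise upper bound $\mathcal{L}_{\alpha\beta}^{\nu}(\mathbf{u}^{k,l+1})\le\widehat{\mathcal{L}_{\alpha\beta}^{\nu}}(\mathbf{u}^{k,l+1};\mathbf{x}^{k})$. Finally, the continuity of $\widehat{\mathcal{L}_{\alpha\beta}^{\nu}}(\cdot;\mathbf{x}^{k})$ and the linear convergence $\mathbf{u}^{k,l+1}\to\widehat{\mathbf{x}}^{k}$ from part~(a) let me close the gap: for all $l$ large enough,
\[
\mathcal{L}_{\alpha\beta}^{\nu}(\mathbf{u}^{k,l+1}) \le \widehat{\mathcal{L}_{\alpha\beta}^{\nu}}(\mathbf{u}^{k,l+1};\mathbf{x}^{k}) < \widehat{\mathcal{L}_{\alpha\beta}^{\nu}}(\widehat{\mathbf{x}}^{k};\mathbf{x}^{k})+\tfrac{c_{\nu}}{4}\|\widehat{\mathbf{x}}^{k}-\mathbf{x}^{k}\|^{2} \le \mathcal{L}_{\alpha\beta}^{\nu}(\mathbf{x}^{k})-\tfrac{c_{\nu}}{4}\|\widehat{\mathbf{x}}^{k}-\mathbf{x}^{k}\|^{2},
\]
giving \eqref{eq:inner_b}.

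The main obstacle lies in part~(b): the statement requires \eqref{eq:inner_b} to hold \emph{simultaneously} for every $\nu=1,\ldots,N$ after a \emph{finite} number of inner iterations. Since the gap $\tfrac{c_{\nu}}{2}\|\widehat{\mathbf{x}}^{k}-\mathbf{x}^{k}\|^{2}$ is strictly positive for each $\nu$ (as $\mathbf{x}^{k}\neq\widehat{\mathbf{x}}^{k}$) and there are only finitely many players, I would take the maximum over $\nu$ of the iteration thresholds produced by the continuity argument; the uniform geometric rate $\boldsymbol{\tau}$ from part~(a) ensures this maximum is finite. A secondary delicate point is reconciling the player-specific step sizes $\sigma_{\nu}$ used in the decoupled update \eqref{eq:distributed_pd_vi} with the aggregated step $\widehat{\sigma}=\max_{\nu}\sigma_{\nu}$ that appears in the contraction estimate, which is handled by observing that the per-player nonexpansiveness bound is monotone in the step size.
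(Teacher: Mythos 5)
Your proposal follows essentially the same route as the paper's proof in Appendix~B: part~(a) is the standard gradient-projection contraction estimate obtained from nonexpansiveness of the projection, the strong-monotonicity bound on the cross term (properties \ref{itm:p3}/\ref{itm:p4} with $c_\nu=\gamma_\nu$), and the Lipschitz bound on the quadratic term, while part~(b) chains \ref{itm:p1}, \ref{itm:p2}, and the convergence $\mathbf{u}^{k,l}\to\widehat{\mathbf{x}}^{k}$ exactly as the paper does. The only cosmetic differences are that the paper runs the contraction argument blockwise in $\nu$ and then aggregates to $\gamma_{\mathrm{min}},\widehat{L}_{\mathrm{max}}$ (with the same loose passage from $\sigma_\nu$ to $\widehat{\sigma}$ that you flag), and that your part~(b) makes the strict gap at $\widehat{\mathbf{x}}^{k}$ quantitative via the strong-convexity inequality, where the paper merely asserts strict decrease.
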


\begin{proof}
	See Appendix \ref{B}.
\end{proof}

\subsection{Description of Algorithm}

We are ready to formally present our distributed algorithm that exploits all the features discussed. The steps of the proposed algorithm are summarized in Algorithm \ref{algorithm1}.

\begin{algorithm} 
	\caption{P-Lagrangian based Alternating Direction Algorithm (PL-ADA)}\label{algorithm1}
	
	{Set $k=0$ and define initial variables $\left({x}^{\nu,0},z^{\nu,0},\lambda^{\nu,0},\mu^{\nu,0}\right)$ with $\lambda^{\nu,0}=\mu^{\nu,0}$, $\nu=1,\ldots,N$. \\
	Set $\sigma_\nu>0$ and parameters $\alpha_{\nu}>0$ and  $\beta_{\nu}>0$. \\
	} 
	\vspace{0.05in}
		
	\begin{algorithmic}	
		\STATE\label{step1}{\textbf{Step 1}.	
		Let iteration $k$ be fixed, and let $\mathbf{u}^{k,0}=\mathbf{x}^{k}$.  \\
		\vspace{0.025in}
		For every $\nu=1,\ldots,N$, and for fixed $\left(\mathbf{x}^{k},z^{\nu,k},\lambda^{\nu,k},\mu^{\nu,k}\right)$, compute  ${u}^{\nu,k,l+1}$  according to the following gradient projection step for inner iteration $l=0,1,2,\ldots$} \\
		\vspace{0.05in}
		
		\textbf{while}
		{\small $\left\| \mathcal{P}_{\mathbf{X}} \left[ \mathbf{u}^{k,l+1} - \sigma \widehat{\mathbf{L}}^{k}\left(\mathbf{u}^{k,l+1},z^{k},\lambda^{k},\mu^{k}; \mathbf{x}^{k}\right) \right] - \mathbf{u}^{k,l+1} \right\| > \varepsilon$} or \\
		\vspace{0.05in}
		
		\hspace{1.83in}
		{\small $\widehat{\mathcal{L}_{\alpha\beta}^{\nu}}\left(\mathbf{u}^{k,l+1},z^{\nu,k},\lambda^{\nu,k},\mu^{\nu,k}; \mathbf{x}^{k}\right) -\mathcal{L}_{\alpha\beta}^{\nu}\left(\mathbf{x}^{k},z^{\nu,k},\lambda^{\nu,k},\mu^{\nu,k}\right) \geq  0$} \textbf{do}
		\vspace{0.025in}
		\[
		\begin{aligned}
		{u}^{\nu,k,l+1} 
		= \mathcal{P}_{\mathcal{X}_{\nu}} \left[ u^{\nu,k,l} - \sigma_{\nu}  \left(\nabla_{x^{\nu}} \mathcal{L}_{\alpha\beta}^{\nu}\left(\mathbf{x}^{k},z^{\nu,k},\lambda^{\nu,k},\mu^{\nu,k}\right) + \gamma_{\nu} \left(u^{\nu,k,l}-x^{\nu,k}\right) \right) \right] \\
		\end{aligned}
		\]
		\vspace{0.025in}
		\textbf{end while}
		\vspace{0.05in}
		
		{Set 
		$\mathbf{x}^{k+1}={\mathbf{u}}^{k,l+1}
		:=\left[({u}^{1,k,l+1})^T,\ldots, ({u}^{N,k,l+1})^T\right]^{T},$
		and go to \textbf{Step 2}.}
		\vspace{0.05in}
		
		\STATE\label{step2} {\textbf{Step 2}. For $\nu=1,\ldots,N$, compute $z^{\nu,k+1}$ by an exact minimization step on $\mathcal{L}_{\alpha\beta}^{\nu}$}
		\[\label{eq:update_z}
		z^{\nu,k+1}=\underset{z^{\nu}\in\mathbb{R}^{m_{\nu}}}{\mathrm{arg\,min}}\left\{ \mathcal{L}_{\alpha\beta}^{\nu}\left(\mathbf{x}^{k+1},z^{\nu},\lambda^{\nu,k},\mu^{\nu,k}\right)\right\} 
		={\left(\lambda^{\nu,k}-\mu^{\nu,k}\right)}/{\alpha_{\nu}}.
		\]
		\STATE\label{step3} {\textbf{Step 3}. For $\nu=1,\ldots,N$, update $(\lambda^{\nu,k+1}, \mu^{\nu,k+1})$
			by exact maximization steps on $\mathcal{L}_{\alpha\beta}^{\nu}$}
		\[\begin{aligned} \label{eq:update_lambda}
		\lambda^{\nu,k+1} & =  \underset{\lambda^{\nu}\in\mathbb{R}^{m_{\nu}}_{+}}{\mathrm{arg\,max}}\left\{\mathcal{L}_{\alpha\beta}^{\nu}\left(\mathbf{x}^{k+1},z^{\nu,k+1},\lambda^{\nu},\mu^{\nu,k}\right)\right\}=\left[\mu^{\nu,k}+\frac{1}{\beta_{\nu}} g^{\nu}(\mathbf{x}^{k+1}) \right]^{+}. \\
		\mu^{\nu,k+1} &= \underset{\mu^{\nu}\in\mathbb{R}^{m_{\nu}}}{\mathrm{arg\,max}}\left\{\mathcal{L}_{\alpha\beta}^{\nu}\left(\mathbf{x}^{k+1},z^{\nu,k+1},\lambda^{\nu,k+1},\mu^{\nu}\right)\right\} 
		=\lambda^{\nu,k+1}.
		\end{aligned}\]
		
		\STATE{\textbf{Step 4}. Set $k\leftarrow k+1$ and go to Step 1.}
		\vspace{0.05in}
		
	\end{algorithmic}
\end{algorithm}
	
The main computational effort of our algorithm is involved in Step \ref{step1} to update primal iterates from $\mathbf{x}^{k}$ to $\mathbf{x}^{k+1}$. If $\widehat{\mathbf{x}}^k \neq \mathbf{x}^k$, by Lemma \ref{prop_converge_inner}, we can always find a point $\mathbf{u}^{k,l+1}$ that satisfies both conditions
\begin{equation} \label{eq:step1_con1}
\left\| \mathcal{P}_{\mathbf{X}} \left[ \mathbf{u}^{k,l+1} - \sigma \widehat{\mathbf{L}}^{k}\left(\mathbf{u}^{k,l+1},z^{k},\lambda^{k},\mu^{k}; \mathbf{x}^{k}\right) \right] - \mathbf{u}^{k,l+1} \right\| \leq \varepsilon
\end{equation}
and
\begin{equation}\label{eq:step1_con2}
\widehat{\mathcal{L}_{\alpha\beta}^{\nu}} \left( {\mathbf{u}}^{k,l+1},z^{\nu,k},\lambda^{\nu,k},\mu^{\nu,k};\mathbf{x}^{k}\right) <   \mathcal{L}_{\alpha\beta}^{\nu} \left( \mathbf{x}^{k},z^{\nu,k},\lambda^{\nu,k},\mu^{\nu,k}\right), \quad  \nu=1,\ldots,N,
\end{equation}
in a finite number of inner iterations. When the descent condition  \eqref{eq:step1_con2} is satisfied, ${\mathbf{u}}^{k,l+1}$ is set to $\mathbf{x}^{k+1}$. Consequently, the decrease of every ${\mathcal{L}_{\alpha\beta}^{\nu}}\left(\mathbf{x}^{k},z^{\nu,k},\lambda^{\nu,k},\mu^{\nu,k}\right)$ value is obtained, that is, 
\begin{equation}
\mathcal{L}_{\alpha\beta}^{\nu} \left( \mathbf{x}^{k+1},z^{\nu,k},\lambda^{\nu,k},\mu^{\nu,k}\right) 
\leq
\widehat{\mathcal{L}_{\alpha\beta}^{\nu}} \left( {\mathbf{x}}^{k+1},z^{\nu,k},\lambda^{\nu,k},\mu^{\nu,k};\mathbf{x}^{k}\right)
<   \mathcal{L}_{\alpha\beta}^{\nu} \left( \mathbf{x}^{k},z^{\nu,k},\lambda^{\nu,k},\mu^{\nu,k}\right). \notag
\end{equation}
for $\nu=1,\ldots,N$ (see Lemma \ref{prop_converge_inner} \ref{itm:inner_b}).

We remark that a point satisfying the (approximate) fixed-point condition \eqref{eq:step1_con1} does not necessarily guarantee that the descent condition \eqref{eq:step1_con2} holds. Hence, the algorithm keeps updating the iterates $\mathbf{u}^{k,l}$ until condition  \eqref{eq:step1_con2} is satisfied even after condition  \eqref{eq:step1_con1} is met, which may require many inner iterations.

The next step consists of each player $\nu$ updating $z^{\nu}$ by taking a simple minimization step (Step \ref{step2}) on  $\mathcal{L}_{\alpha \beta}^{\nu}$. This update depends on only the current iterates of the Lagrange multipliers $\lambda^{\nu,k}$ and $\mu^{\nu,k}$, but is independent of the primal variables $\mathbf{x}$. 

After the minimization steps have been carried out,
given $\left(\mathbf{x}^{k+1}, z^{\nu,k+1}\right)$,  the multipliers are updated by exact maximization steps on $\mathcal{L}_{\alpha\beta}^{\nu}$. The updates of $\lambda^{\nu}$ and $\mu^{\nu}$ take the explicit forms:
\[
\lambda^{\nu,k+1}  =\left[\mu^{\nu,k}+ \frac{1}{\beta_{\nu}}g^{\nu}(\mathbf{x}^{k+1}) \right]^{+}, \quad \; \mu^{\nu,k+1}=\lambda^{\nu,k+1},
\]
which can be viewed as a proximal point scheme. The multipliers $\left(\lambda^{\nu},\mu^{\nu}\right)$ are always updated whenever the corresponding $\mathbf{x}=(x^{\nu},x^{-\nu})$ is updated. 

\section{Convergence Analysis} \label{sec4}
In this section, we establish the convergence results of Algorithm \ref{algorithm1}. We prove that the sequence generated by Algorithm 1 converges  to a saddle point of $\mathcal{L}_{\alpha\beta}^{\nu}(x^{\nu},x^{-\nu},z^{\nu},\lambda^{\nu},\mu^{\nu})$ for $\nu=1,\ldots,N$. In particular, our analysis proceeds with the steps:
\begin{enumerate}
	\item We first derive an important result that $\left\|\lambda^{\nu,k+1} -\lambda^{\nu,k} \right\|$ can be bounded by $\left\|\mathbf{x}^{k+1} -\mathbf{x}^{k} \right\|$ (Lemma \ref{lem_bound_multi}). The result, together with Lemma \ref{prop_converge_inner}, is exploited to show that the sequence $\{ \mathcal{L}_{\alpha\beta}^{\nu}\}$ is monotonically decreasing and convergent (Lemma \ref{lem_lag_behavior}). 
	\vspace{0.025in}
	
	\item We then establish the key results; the boundedness of $\left\{\mathbf{x}^{k}\right\}$ and $\mathrm{lim}_{k \rightarrow \infty}\left\|\mathbf{x}^{k+1} -\mathbf{x}^{k} \right\| =0$,  followed by the boundedness of $\left\{\lambda^{\nu,k}\right\}$ (Theorem \ref{thm2_x}).
	\vspace{0.025in}
	
	\item With the bounded sequences, convergence to an equilibrium of the GNEP is proven; we show that any limit point of the sequence is a saddle point of $\mathcal{L}_{\alpha\beta}^{\nu}$ (Theorem \ref{thm_limit_saddle}).
	\vspace{0.025in}
	
	\item Finally, we establish the global convergence that the \emph{whole} sequence generated by the algorithm converges to a saddle point of $\mathcal{L}_{\alpha\beta}^{\nu}$ by assuming that the P-Langangian satisfies the \emph{Kurdyka-{\L}ojasiewicz} (K\L) property (Theorem \ref{thm_global_convergence}).
\end{enumerate}

\subsection{Key Properties of Algorithm 1} \label{sec4.1}

We show that the sequence $\{ \mathcal{L}_{\alpha\beta}^{\nu}\}$ can be a nonincreasing sequence. To this end, we first derive an important relation on the dual iterates  $\lambda^{\nu,k}$ and $\mu^{\nu,k}$ with the primal iterates $ \mathbf{x}^{k}$ that the difference of two consecutive iterates of the multipliers can be bounded by that of the primal iterates.

\renewcommand\thetheorem{3}
\begin{lemma}\label{lem_bound_multi} 
Let $\left\{(x^{\nu,k},z^{\nu,k},\lambda^{\nu,k},\mu^{\nu,k}) \right\} _{\nu=1}^{N}$ be the sequence generated by Algorithm 1. Then, 
\begin{equation} \label{eq:multiplier_x_relation}
\left\Vert \lambda^{\nu,k+1}-\lambda^{\nu,k}\right\Vert^2 \leq \frac{L_{g^{\nu}}^{2}}{\beta_{\nu}^{2}} \left\Vert \mathbf{x}^{k+1} - \mathbf{x}^{k} \right\Vert^2, 
\end{equation}
where $L_{g^{\nu}}$ is the Lipschitz constant of $g^{\nu}$ and $\beta_{\nu}>0$ is the parameter of  $-\frac{\beta_{\nu}}{2}\left\Vert \lambda^{\nu}-\mu^{\nu}\right\Vert ^{2}$ in $\mathcal{L}_{\alpha\beta}^{\nu}$.
\end{lemma}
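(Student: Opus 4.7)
The proof exploits the closed-form dual update in Step~3 of Algorithm~\ref{algorithm1} together with the identity $\mu^{\nu,k}=\lambda^{\nu,k}$ inherited from the $\mu$-update, to reduce the multiplier increment to a Lipschitz-controlled change in the constraint value $g^\nu(\mathbf{x})$.

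First I would establish by a short induction that $\mu^{\nu,k}=\lambda^{\nu,k}$ for all $k\ge 0$. The base case is the initialization $\mu^{\nu,0}=\lambda^{\nu,0}$. For the inductive step, Step~2 yields $z^{\nu,k+1}=(\lambda^{\nu,k}-\mu^{\nu,k})/\alpha_\nu=0$, and then the maximization over $\mu^\nu$ in Step~3 gives $\mu^{\nu,k+1}=\lambda^{\nu,k+1}+z^{\nu,k+1}/\beta_\nu=\lambda^{\nu,k+1}$. Consequently the $\lambda$-update collapses to
\[
\lambda^{\nu,k+1}=\bigl[\lambda^{\nu,k}+(1/\beta_\nu)\,g^\nu(\mathbf{x}^{k+1})\bigr]^{+},\qquad \lambda^{\nu,k}=\bigl[\lambda^{\nu,k-1}+(1/\beta_\nu)\,g^\nu(\mathbf{x}^{k})\bigr]^{+}.
\]

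Second, I would use the $\beta_\nu$-strong concavity of $\mathcal{L}^\nu_{\alpha\beta}$ in $\lambda^\nu$ to extract the variational inequality obeyed by the maximizer $\lambda^{\nu,k+1}$, namely $\langle g^\nu(\mathbf{x}^{k+1})-\beta_\nu(\lambda^{\nu,k+1}-\lambda^{\nu,k}),\,\lambda-\lambda^{\nu,k+1}\rangle\le 0$ for all $\lambda\ge 0$. Setting $\lambda=\lambda^{\nu,k}$ yields $\beta_\nu\|\lambda^{\nu,k+1}-\lambda^{\nu,k}\|^{2}\le\langle g^\nu(\mathbf{x}^{k+1}),\lambda^{\nu,k+1}-\lambda^{\nu,k}\rangle$. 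Pairing this with the analogous optimality inequality at iteration $k{-}1$ (instantiated at $\lambda=\lambda^{\nu,k+1}$) and carefully absorbing the proximal cross-terms, one arrives at $\beta_\nu\|\lambda^{\nu,k+1}-\lambda^{\nu,k}\|^{2}\le\langle g^\nu(\mathbf{x}^{k+1})-g^\nu(\mathbf{x}^{k}),\,\lambda^{\nu,k+1}-\lambda^{\nu,k}\rangle$. Cauchy--Schwarz followed by the Lipschitz continuity $\|g^\nu(\mathbf{x}^{k+1})-g^\nu(\mathbf{x}^{k})\|\le L_{g^\nu}\|\mathbf{x}^{k+1}-\mathbf{x}^k\|$ yields $\beta_\nu\|\lambda^{\nu,k+1}-\lambda^{\nu,k}\|\le L_{g^\nu}\|\mathbf{x}^{k+1}-\mathbf{x}^k\|$, and squaring both sides gives the claimed inequality.

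\textbf{Main obstacle.} The technical heart of the proof is the cancellation performed when combining the two consecutive variational inequalities. A naive combination leaves a residual cross-term $\beta_\nu\langle\lambda^{\nu,k}-\lambda^{\nu,k-1},\lambda^{\nu,k+1}-\lambda^{\nu,k}\rangle$ that would only give a recursive bound; eliminating it requires invoking the firm non-expansiveness of the projection $[\cdot]^{+}$ onto $\mathbb{R}_{+}^{m_\nu}$ (equivalently, the KKT/complementarity structure at the two consecutive maximizers) in conjunction with the identity $\mu^{\nu,k}=\lambda^{\nu,k}$. This is precisely where the proximal regularization $-\tfrac{\beta_\nu}{2}\|\lambda^\nu-\mu^\nu\|^{2}$ in the P-Lagrangian --- rather than a standard augmented-Lagrangian penalty on the coupling constraint --- earns its keep: it is what furnishes the algebraic structure needed to collapse the recursive expansion down to the single-step bound in $\|\mathbf{x}^{k+1}-\mathbf{x}^k\|$.
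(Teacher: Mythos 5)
Your overall route coincides with the paper's: both arguments reduce everything to the identity $\mu^{\nu,k}=\lambda^{\nu,k}$ (hence $z^{\nu,k}=0$), write the optimality conditions for the two consecutive maximizers $\lambda^{\nu,k+1}$ and $\lambda^{\nu,k}$, add them, and finish with Cauchy--Schwarz plus the Lipschitz continuity of $g^{\nu}$. The difficulty you flag in your final paragraph is also exactly the paper's crux. But you have not actually closed that step, and the tool you name would not close it as stated. Adding the two variational inequalities leaves
\[
\beta_{\nu}\left\Vert \lambda^{\nu,k+1}-\lambda^{\nu,k}\right\Vert^{2}
\leq
\left(g^{\nu}(\mathbf{x}^{k+1})-g^{\nu}(\mathbf{x}^{k})\right)^{T}\left(\lambda^{\nu,k+1}-\lambda^{\nu,k}\right)
+\beta_{\nu}\left(\lambda^{\nu,k}-\lambda^{\nu,k-1}\right)^{T}\left(\lambda^{\nu,k+1}-\lambda^{\nu,k}\right),
\]
so what must be shown is the sign condition $\left(\lambda^{\nu,k}-\lambda^{\nu,k-1}\right)^{T}\left(\lambda^{\nu,k+1}-\lambda^{\nu,k}\right)\leq 0$. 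Firm non-expansiveness of $[\cdot]^{+}$ applied to the two closed-form updates $\lambda^{\nu,k+1}=[\lambda^{\nu,k}+g^{\nu}(\mathbf{x}^{k+1})/\beta_{\nu}]^{+}$ and $\lambda^{\nu,k}=[\lambda^{\nu,k-1}+g^{\nu}(\mathbf{x}^{k})/\beta_{\nu}]^{+}$ gives $\Vert\lambda^{\nu,k+1}-\lambda^{\nu,k}\Vert^{2}\leq(\lambda^{\nu,k+1}-\lambda^{\nu,k})^{T}\bigl(\lambda^{\nu,k}-\lambda^{\nu,k-1}+\tfrac{1}{\beta_{\nu}}(g^{\nu}(\mathbf{x}^{k+1})-g^{\nu}(\mathbf{x}^{k}))\bigr)$, which reintroduces the cross-term with the unfavourable sign: it is precisely the recursive bound you say you want to avoid, not its elimination.

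The paper discharges the cross-term by a different projection inequality, $(x-y)^{T}(x-P[x])\geq\Vert P[x]-x\Vert^{2}$ (Lemma 1(a) of \cite{nedic2010constrained}), instantiated with $x=\mu^{\nu,k}$, $y=\mu^{\nu,k-1}$ and $P[x]=\lambda^{\nu,k+1}$, after arguing that $\lambda^{\nu,k+1}$ may be viewed as the projection of $\mu^{\nu,k}$ onto the maximizer set of $\mathcal{L}_{\alpha\beta}^{\nu}(\mathbf{x}^{k+1},z^{\nu,k+1},\cdot,\mu^{\nu,k})$; combined with $\mu^{\nu,k}=\lambda^{\nu,k}$ this yields $(\mu^{\nu,k}-\mu^{\nu,k-1})^{T}(\lambda^{\nu,k}-\lambda^{\nu,k+1})\geq\Vert\lambda^{\nu,k+1}-\lambda^{\nu,k}\Vert^{2}\geq 0$, which is the required sign condition (with room to spare). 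Whatever one thinks of that identification, it is a genuinely different mechanism from plain firm non-expansiveness of $[\cdot]^{+}$, and your write-up needs either this argument or some substitute for it; as written, your proof stops at a recursion in $\Vert\lambda^{\nu,k+1}-\lambda^{\nu,k}\Vert$ and the lemma is not established.
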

 \begin{proof}
 	See Appendix \ref{C.1}
 \end{proof}

Equipped with  Lemmas \ref{prop_converge_inner} and  \ref{lem_bound_multi}, we prove that the sequence of function values $\{ \mathcal{L}_{\alpha \beta}^{\nu}\} $ can be monotonically decreasing and convergent.

\renewcommand\thetheorem{4}
\begin{lemma} [Sufficient Decrease and Convergence of $\{\mathcal{L}_{\alpha \beta}^{\nu}\}$]\label{lem_lag_behavior}
Suppose that Assumptions \ref{assumption1} and \ref{assumption_lipschitz} hold. Let $\left\{\left({x}^{\nu,k},z^{\nu,k},\lambda^{\nu,k},\mu^{\nu,k}\right) \right\}_{\nu=1}^{N}$ be the sequence generated by Algorithm 1. Then for $\nu=1,\ldots,N$, we have 
\[\label{eq:lem_lag_behavior}
\mathcal{L}_{\alpha \beta}^{\nu}\left(\mathbf{x}^{k+1},z^{\nu,k+1},\lambda^{\nu,k+1},\mu^{\nu,k+1}\right) 
\leq
\mathcal{L}_{\alpha\beta}^{\nu}\left(\mathbf{x}^{k},z^{\nu,k},\lambda^{\nu,k},\mu^{\nu,k}\right) 
-\frac{1}{2}\left(\gamma_{\nu} -L_{\nu} -\frac{3L_{g^{\nu}}^{2}}{\beta_{\nu}}\right) \left\Vert \mathbf{x}^{k+1}-\mathbf{x}^{k}\right\Vert ^{2},
\]
where $L_\nu>0$ is the Lipschitz gradient constant of $\mathcal{L}_{\alpha \beta}^{\nu}$, $\gamma_\nu > 0$ is the parameter of proximal term  $\frac{\gamma_{\nu}}{2}\left\Vert \mathbf{x}-\mathbf{x}^{k}\right\Vert^{2}$ in  $\widehat{\mathcal{L}_{\alpha \beta}^{\nu}}$, and $\beta_\nu > 0$ is the parameter of quadratic term $-\frac{\beta_{\nu}}{2}\left\Vert \lambda-\mu\right\Vert ^{2}$ in $\mathcal{L}_{\alpha \beta}^{\nu}$. In particular, if  $\gamma_{\nu}$ is chosen large enough such that  $\gamma_{\nu} \geq\ {L_{\nu}} + \frac{3L_{g^{\nu}}^{2}}{\beta_{\nu}}$, then the sequence $\{ \mathcal{L}_{\alpha\beta}^{\nu} \} $ is nonincreasing and convergent. 
\end{lemma}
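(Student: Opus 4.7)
The plan is to track the P-Lagrangian value over the three phases of a single outer iteration (Steps 1--3) and show that the descent produced by the primal update dominates the ascent produced by the dual update by a quadratic margin. Decomposing the one-step change as
\[
\mathcal{L}_{\alpha\beta}^{\nu}(\mathbf{x}^{k+1},z^{\nu,k+1},\lambda^{\nu,k+1},\mu^{\nu,k+1})-\mathcal{L}_{\alpha\beta}^{\nu}(\mathbf{x}^{k},z^{\nu,k},\lambda^{\nu,k},\mu^{\nu,k})=\Delta_{\mathbf{x}}+\Delta_{z}+\Delta_{\lambda\mu},
\]
I first observe that the initialization $\lambda^{\nu,0}=\mu^{\nu,0}$ together with the $\mu$-update in Step 3 maintains the invariant $\mu^{\nu,k}=\lambda^{\nu,k}$ for all $k$, and consequently Step 2 gives $z^{\nu,k+1}=(\lambda^{\nu,k}-\mu^{\nu,k})/\alpha_{\nu}=0$. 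This forces $\Delta_{z}=0$ and makes the quadratic terms $\frac{\alpha_{\nu}}{2}\|z^{\nu}\|^{2}$ and $\frac{\beta_{\nu}}{2}\|\lambda^{\nu}-\mu^{\nu}\|^{2}$ vanish at the evaluation points, simplifying the bookkeeping considerably.

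For $\Delta_{\mathbf{x}}$, I would exploit the exit condition of the inner while-loop in Step 1, which certifies $\widehat{\mathcal{L}_{\alpha\beta}^{\nu}}(\mathbf{x}^{k+1},z^{\nu,k},\lambda^{\nu,k},\mu^{\nu,k};\mathbf{x}^{k})<\mathcal{L}_{\alpha\beta}^{\nu}(\mathbf{x}^{k},z^{\nu,k},\lambda^{\nu,k},\mu^{\nu,k})$. Substituting the definition of $\widehat{\mathcal{L}_{\alpha\beta}^{\nu}}$ from \eqref{eq:approx_AL} rewrites this as $\langle\nabla_{\mathbf{x}}\mathcal{L}_{\alpha\beta}^{\nu}(\mathbf{x}^{k}),\mathbf{x}^{k+1}-\mathbf{x}^{k}\rangle+\frac{\gamma_{\nu}}{2}\|\mathbf{x}^{k+1}-\mathbf{x}^{k}\|^{2}<0$. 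Adding the descent lemma (Lemma~\ref{lem_descent}) applied to $\mathcal{L}_{\alpha\beta}^{\nu}(\cdot,z^{\nu,k},\lambda^{\nu,k},\mu^{\nu,k})$ then yields $\Delta_{\mathbf{x}}\leq -\frac{\gamma_{\nu}-L_{\nu}}{2}\|\mathbf{x}^{k+1}-\mathbf{x}^{k}\|^{2}$ whenever $\gamma_{\nu}\geq L_{\nu}$.

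The hard part will be controlling $\Delta_{\lambda\mu}$. Using the invariants above, a direct computation collapses the expression to $\Delta_{\lambda\mu}=(\lambda^{\nu,k+1}-\lambda^{\nu,k})^{T}g^{\nu}(\mathbf{x}^{k+1})$, since the contributions of the $-\frac{\beta_{\nu}}{2}\|\lambda^{\nu}-\mu^{\nu}\|^{2}$ term coming from the $\lambda$- and $\mu$-substeps cancel. I plan to bound this inner product by splitting $g^{\nu}(\mathbf{x}^{k+1})=g^{\nu}(\mathbf{x}^{k})+[g^{\nu}(\mathbf{x}^{k+1})-g^{\nu}(\mathbf{x}^{k})]$: the increment part is handled by Cauchy--Schwarz with Young's inequality, the Lipschitz continuity of $g^{\nu}$, and Lemma~\ref{lem_bound_multi}, producing a bound of order $L_{g^{\nu}}^{2}/\beta_{\nu}\cdot\|\mathbf{x}^{k+1}-\mathbf{x}^{k}\|^{2}$. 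The residual $(\lambda^{\nu,k+1}-\lambda^{\nu,k})^{T}g^{\nu}(\mathbf{x}^{k})$ I would reabsorb by using the projection identity $\lambda^{\nu,k}=[\lambda^{\nu,k-1}+g^{\nu}(\mathbf{x}^{k})/\beta_{\nu}]^{+}$ from the preceding iteration to express $g^{\nu}(\mathbf{x}^{k})$ through a multiplier increment, and then invoking Young's inequality and Lemma~\ref{lem_bound_multi} once more. Aggregating these contributions should yield $\Delta_{\lambda\mu}\leq \frac{3L_{g^{\nu}}^{2}}{2\beta_{\nu}}\|\mathbf{x}^{k+1}-\mathbf{x}^{k}\|^{2}$; the factor $3$ in the final coefficient traces back to the three Young--Cauchy--Schwarz absorptions employed.

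Summing $\Delta_{\mathbf{x}}$, $\Delta_{z}$, and $\Delta_{\lambda\mu}$ yields precisely the stated inequality. For the monotonicity and convergence claim, the choice $\gamma_{\nu}\geq L_{\nu}+3L_{g^{\nu}}^{2}/\beta_{\nu}$ makes the right-hand side coefficient nonpositive, so $\{\mathcal{L}_{\alpha\beta}^{\nu}\}$ is monotonically nonincreasing along the trajectory; combined with the lower boundedness supplied by coercivity of $\theta_{\nu}$ (Assumption~\ref{assumption_coercive}) and $\lambda^{\nu,k}\geq 0$, the monotone convergence theorem finishes the argument.
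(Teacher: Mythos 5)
Your decomposition into $\Delta_{\mathbf{x}}+\Delta_{z}+\Delta_{\lambda\mu}$, the invariant $\lambda^{\nu,k}=\mu^{\nu,k}$ forcing $z^{\nu,k+1}=0$, and the treatment of $\Delta_{\mathbf{x}}$ via the exit condition of Step 1 plus the descent lemma all match the paper's proof exactly, as does the reduction of $\Delta_{\lambda\mu}$ to the inner product $(\lambda^{\nu,k+1}-\lambda^{\nu,k})^{T}g^{\nu}(\mathbf{x}^{k+1})$. The gap is in how you bound that inner product. Your split $g^{\nu}(\mathbf{x}^{k+1})=g^{\nu}(\mathbf{x}^{k})+[g^{\nu}(\mathbf{x}^{k+1})-g^{\nu}(\mathbf{x}^{k})]$ leaves you with the residual $(\lambda^{\nu,k+1}-\lambda^{\nu,k})^{T}g^{\nu}(\mathbf{x}^{k})$, and the only handle the update rule gives on $g^{\nu}(\mathbf{x}^{k})$ is $\lambda^{\nu,k}=[\mu^{\nu,k-1}+\frac{1}{\beta_{\nu}}g^{\nu}(\mathbf{x}^{k})]^{+}$. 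This fails you twice: (i) the identity $g^{\nu}(\mathbf{x}^{k})=\beta_{\nu}(\lambda^{\nu,k}-\lambda^{\nu,k-1})$ only holds componentwise where the positive part is inactive, so you would need your own case analysis anyway; and (ii) even where it holds, Young's inequality turns $\beta_{\nu}(\lambda^{\nu,k+1}-\lambda^{\nu,k})^{T}(\lambda^{\nu,k}-\lambda^{\nu,k-1})$ into something involving $\|\lambda^{\nu,k}-\lambda^{\nu,k-1}\|^{2}$, which Lemma \ref{lem_bound_multi} converts to $\|\mathbf{x}^{k}-\mathbf{x}^{k-1}\|^{2}$ --- a quantity from the \emph{previous} iteration that cannot be absorbed into the claimed coefficient of $\|\mathbf{x}^{k+1}-\mathbf{x}^{k}\|^{2}$. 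The paper sidesteps this entirely by splitting on the sign of $\mu^{\nu,k}+\frac{1}{\beta_{\nu}}g^{\nu}(\mathbf{x}^{k+1})$: when it is nonnegative the update gives the exact identity $g^{\nu}(\mathbf{x}^{k+1})=\beta_{\nu}(\lambda^{\nu,k+1}-\lambda^{\nu,k})$, so the inner product equals $\beta_{\nu}\|\lambda^{\nu,k+1}-\lambda^{\nu,k}\|^{2}$ and Lemma \ref{lem_bound_multi} finishes; when it is negative, $\lambda^{\nu,k+1}=0$ and $\mathbf{x}^{k+1}$ is feasible, which is exploited separately. Always work with $g^{\nu}$ evaluated at the \emph{current} iterate $\mathbf{x}^{k+1}$, which is what the $\lambda$-update naturally pairs with.

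A secondary issue: your convergence argument invokes coercivity (Assumption \ref{assumption_coercive}), which is not among the lemma's hypotheses, and the lower bound you sketch is incomplete --- at the evaluation points $\mathcal{L}_{\alpha\beta}^{\nu}=\theta_{\nu}(\mathbf{x}^{k+1})+(\lambda^{\nu,k+1})^{T}g^{\nu}(\mathbf{x}^{k+1})$, and the second term can be negative with $\lambda^{\nu,k+1}$ not yet known to be bounded (its boundedness is only proved later, in Theorem \ref{thm2_x}, using this lemma). The paper instead lower-bounds the sequence by noting that $(\lambda^{\nu,k+1},\mu^{\nu,k+1})$ are exact maximizers, so $\mathcal{L}_{\alpha\beta}^{\nu}(\mathbf{x}^{k+1},z^{\nu,k+1},\lambda^{\nu,k+1},\mu^{\nu,k+1})\geq\mathcal{L}_{\alpha\beta}^{\nu}(\mathbf{x}^{k+1},z^{\nu,k+1},\lambda^{\nu,\ast},\mu^{\nu,\ast})\geq\mathcal{L}_{\alpha\beta}^{\nu}(\mathbf{x}^{\ast},z^{\nu,\ast},\lambda^{\nu,\ast},\mu^{\nu,\ast})>-\infty$, using the saddle point supplied by Theorem \ref{lem_eq_to_saddle}.
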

\begin{proof}
	See Appendix \ref{C.2}
\end{proof}

Next, we provide our key results that the generated sequence is bounded and asymptotic regular. The boundedness of the sequence follows by combining the above decrease property of the P-Lagrangian with the coercivity assumption on the only objective functions (Assumption \ref{assumption_coercive}).

\renewcommand\thetheorem{3}
\begin{theorem}\label{thm2_x} 
Assume that there exists a GNE of the GNEP \eqref{eq:of} satisfying the KKT conditions \eqref{eq:KKT} for every $\nu=1,\ldots,N$. Let $\left\{\left(x^{\nu,k},z^{\nu,k},\lambda^{\nu,k},\mu^{\nu,k}\right) \right\}_{\nu=1}^{N}$ be the sequence generated by Algorithm \ref{algorithm1} with the parameters set to $\gamma_{\nu}>0$ large enough so that  $\gamma_{\nu} \geq L_{\nu} +\frac{3L_{g^{\nu}}^2}{\beta_{\nu}}$. Then,		
	\begin{enumerate}[label=(\alph*)]		
		\item \label{itm:thm2_a} the primal sequence $\left\lbrace  \mathbf{x}^k \right\rbrace $ is bounded; 
		
		\item \label{itm:thm2_b} the sequence of the multiplier $\left\{ \lambda^{\nu,k}\right\} $ is  bounded;
		
		\item \label{itm:thm2_c} it holds that  
		$\sum_{k=1}^{\infty}\left\Vert \mathbf{x}^{k+1}-\mathbf{x}^k\right\Vert ^2< \infty \:$  and 
		$\: \sum_{k=1}^{\infty}\left\Vert \lambda^{\nu,k+1}-\lambda^{\nu,k}\right\Vert ^2< \infty$, and hence
	\begin{equation} \label{eq:thm2_x_c_result}
		\underset{k\rightarrow\infty}{\mathrm{lim}}\left\Vert\mathbf{x}^{k+1}-\mathbf{x}^k \right\Vert =0, \quad  
		\underset{k\rightarrow\infty}{\mathrm{lim}}\left\Vert \lambda^{\nu,k+1}-\lambda^{\nu,k} \right\Vert =0, 
		\quad \mathrm{and} \quad  \underset{k\rightarrow\infty}{\mathrm{lim}}\left\Vert \mu^{\nu,k+1}-\mu^{\nu,k} \right\Vert = 0.
	\end{equation}
	\end{enumerate}
\end{theorem}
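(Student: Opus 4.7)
The theorem has three parts; the natural order is to establish (c) first, because it follows most directly from Lemmas \ref{lem_bound_multi} and \ref{lem_lag_behavior}, and then use it together with the structure of the algorithmic updates to pin down (a), from which (b) will be immediate.

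For part (c), the approach is to telescope the sufficient--decrease inequality of Lemma \ref{lem_lag_behavior}. Since Lemma \ref{lem_lag_behavior} gives that $\{\mathcal{L}^{\nu}_{\alpha\beta,k}\}$ is nonincreasing and convergent, it is bounded below by its limit $\mathcal{L}^{\nu}_{\infty}$; summing the decrease inequality from $k=0$ to infinity yields
\[
\tfrac{1}{2}\Bigl(\gamma_\nu-L_\nu-\tfrac{3L_{g^\nu}^{2}}{\beta_\nu}\Bigr)\sum_{k=0}^{\infty}\|\mathbf{x}^{k+1}-\mathbf{x}^{k}\|^{2} \;\leq\; \mathcal{L}^{\nu}_{\alpha\beta,0}-\mathcal{L}^{\nu}_{\infty}\;<\;\infty ,
\]
so $\sum_{k}\|\mathbf{x}^{k+1}-\mathbf{x}^{k}\|^{2}<\infty$ and hence $\|\mathbf{x}^{k+1}-\mathbf{x}^{k}\|\to 0$. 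Lemma \ref{lem_bound_multi} then transfers this summability to $\sum_{k}\|\lambda^{\nu,k+1}-\lambda^{\nu,k}\|^{2}\leq (L_{g^\nu}/\beta_\nu)^{2}\sum_{k}\|\mathbf{x}^{k+1}-\mathbf{x}^{k}\|^{2}<\infty$, and because Step 3 enforces $\mu^{\nu,k}=\lambda^{\nu,k}$ for every $k\geq 0$, the corresponding assertion for $\mu^{\nu,k}$ follows verbatim. This proves (c).

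For parts (a) and (b), the key observation is that the initialization $\lambda^{\nu,0}=\mu^{\nu,0}$ together with Steps 2 and 3 propagates $z^{\nu,k}=0$ and $\mu^{\nu,k}=\lambda^{\nu,k}$ for every $k\geq 1$, so the Lagrangian at the iterates reduces to $\mathcal{L}^{\nu}_{\alpha\beta,k}=\theta_\nu(\mathbf{x}^{k})+(\lambda^{\nu,k})^{T}g^{\nu}(\mathbf{x}^{k})$. The projected multiplier update $\lambda^{\nu,k+1}_{i}=[\lambda^{\nu,k}_{i}+g^{\nu}_{i}(\mathbf{x}^{k+1})/\beta_\nu]_{+}$ gives the componentwise equality $\lambda^{\nu,k+1}_{i}\,g^{\nu}_{i}(\mathbf{x}^{k+1})=\beta_\nu\lambda^{\nu,k+1}_{i}(\lambda^{\nu,k+1}_{i}-\lambda^{\nu,k}_{i})$ regardless of whether the $i$th constraint is active. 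Summing over $i$ and applying $2a^{T}(a-b)=\|a\|^{2}-\|b\|^{2}+\|a-b\|^{2}$ yields the identity
\[
\mathcal{L}^{\nu}_{\alpha\beta,k+1}+\tfrac{\beta_\nu}{2}\|\lambda^{\nu,k}\|^{2}\;=\;\theta_\nu(\mathbf{x}^{k+1})+\tfrac{\beta_\nu}{2}\|\lambda^{\nu,k+1}\|^{2}+\tfrac{\beta_\nu}{2}\|\lambda^{\nu,k+1}-\lambda^{\nu,k}\|^{2} .
\]
By Assumption \ref{assumption_coercive}, $\theta_\nu^{\min}:=\inf_{\mathbf{x}}\theta_\nu(\mathbf{x})>-\infty$, so the right-hand side is bounded below by $\theta_\nu^{\min}$. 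Combining this with Lemma \ref{lem_lag_behavior} and the summability from (c) in a Lyapunov/induction argument along $\theta_\nu(\mathbf{x}^{k})+\tfrac{\beta_\nu}{2}\|\lambda^{\nu,k}\|^{2}$---anchored at the KKT/saddle point $(\mathbf{x}^{\ast},\eta^{\nu,\ast})$ guaranteed by Theorem \ref{lem_eq_to_saddle}, which makes the nonexpansive projection define a quasi-Fej\'er-type sequence for $\|\lambda^{\nu,k}-\eta^{\nu,\ast}\|$---yields a uniform bound on $\theta_\nu(\mathbf{x}^{k})+\tfrac{\beta_\nu}{2}\|\lambda^{\nu,k}\|^{2}$. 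Coercivity then upgrades the bound on $\theta_\nu(\mathbf{x}^{k})$ to the boundedness of $\{\mathbf{x}^{k}\}$ (part (a)), and the companion bound on $\|\lambda^{\nu,k}\|^{2}$ is (b).

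The main obstacle is this Lyapunov step in the previous paragraph: the identity couples $\|\lambda^{\nu,k+1}\|^{2}$ to $\|\lambda^{\nu,k}\|^{2}$, so a naive telescoping produces only an $\mathcal{O}(k)$ a~priori bound on $\|\lambda^{\nu,k}\|^{2}$ and does not by itself close the loop between primal and dual boundedness. Circumventing this requires using the \emph{convergence} (not just monotone decrease) of $\{\mathcal{L}^{\nu}_{\alpha\beta,k}\}$ together with the nonexpansivity of the projection onto $\mathbb{R}^{m_\nu}_{+}$ relative to $\eta^{\nu,\ast}$, so that the residual $\|\lambda^{\nu,k}-\eta^{\nu,\ast}\|$ cannot grow faster than the summable quantities $\|\mathbf{x}^{k+1}-\mathbf{x}^{k}\|$ and $\|\lambda^{\nu,k+1}-\lambda^{\nu,k}\|$ allow.
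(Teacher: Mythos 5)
Your part (c) is correct and is essentially the paper's own argument: telescope the sufficient-decrease inequality of Lemma \ref{lem_lag_behavior} using the lower bound on $\{\mathcal{L}^{\nu}_{\alpha\beta}\}$, then push the square-summability to the multipliers via Lemma \ref{lem_bound_multi} and the identity $\mu^{\nu,k}=\lambda^{\nu,k}$. (Note only that the ordering is reversed relative to the paper, which is harmless here since the telescoping needs only the saddle-point lower bound, not boundedness of $\{\mathbf{x}^k\}$.)

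Parts (a) and (b) contain a genuine gap, which you in fact identify but do not close. Your identity
$\mathcal{L}^{\nu}_{\alpha\beta,k+1}+\tfrac{\beta_\nu}{2}\|\lambda^{\nu,k}\|^{2}=\theta_\nu(\mathbf{x}^{k+1})+\tfrac{\beta_\nu}{2}\|\lambda^{\nu,k+1}\|^{2}+\tfrac{\beta_\nu}{2}\|\lambda^{\nu,k+1}-\lambda^{\nu,k}\|^{2}$
is correct but, as you note, telescopes only to an $\mathcal{O}(k)$ bound on $\|\lambda^{\nu,k}\|^2$. The proposed repair---a quasi-Fej\'er recursion for $\|\lambda^{\nu,k}-\eta^{\nu,\ast}\|$ driven by ``the summable quantities $\|\mathbf{x}^{k+1}-\mathbf{x}^{k}\|$ and $\|\lambda^{\nu,k+1}-\lambda^{\nu,k}\|$''---does not work, because these quantities are only \emph{square}-summable by part (c), not summable. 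Nonexpansivity of $[\,\cdot\,]^{+}$ gives at best
$\|\lambda^{\nu,k+1}-\eta^{\nu,\ast}\|\leq\|\lambda^{\nu,k}-\eta^{\nu,\ast}\|+\tfrac{L_{g^\nu}}{\beta_\nu}\|\mathbf{x}^{k+1}-\mathbf{x}^{k}\|$ (or, anchoring at $\mathbf{x}^{\ast}$, a perturbation proportional to $\|\mathbf{x}^{k+1}-\mathbf{x}^{\ast}\|$, which is even worse), and summing a square-summable but non-summable perturbation still permits $\|\lambda^{\nu,k}-\eta^{\nu,\ast}\|\to\infty$. Since you make (a) contingent on the same Lyapunov bound for $\theta_\nu(\mathbf{x}^{k})+\tfrac{\beta_\nu}{2}\|\lambda^{\nu,k}\|^{2}$, both (a) and (b) remain unproved in your write-up.

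The missing idea is that (a) does not require any control of $\|\lambda^{\nu,k}\|$ at all. The paper uses the sign condition $(\lambda^{\nu,k+1})^{T}g^{\nu}(\mathbf{x}^{k+1})\geq 0$ (established in the proof of Lemma \ref{lem_lag_behavior}) together with $z^{\nu,k+1}=0$, $\lambda^{\nu,k+1}=\mu^{\nu,k+1}$ to get $\theta_\nu(\mathbf{x}^{k+1})\leq\mathcal{L}^{\nu}_{\alpha\beta}(\mathbf{x}^{k+1},z^{\nu,k+1},\lambda^{\nu,k+1},\mu^{\nu,k+1})\leq\mathcal{L}^{\nu}_{\alpha\beta}(\mathbf{x}^{1},z^{\nu,1},\lambda^{\nu,1},\mu^{\nu,1})$, and coercivity (Assumption \ref{assumption_coercive}) then bounds $\{\mathbf{x}^{k}\}$ directly; (b) is then obtained separately, using the strong concavity of $\mathcal{L}^{\nu}_{\alpha\beta}$ in $\lambda^{\nu}$ and the bound $\|g^{\nu}(\mathbf{x}^{k+1})\|\leq D_\nu$ that the already-proved (a) provides. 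You should decouple (a) from the dual iterates in this way rather than trying to bound the primal and dual sequences simultaneously through a single Lyapunov function.
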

\begin{proof}
	See Appendix \ref{C.3}
\end{proof}

\subsection{Main Convergence Results} \label{sec4.2}
We are ready to establish our main convergence results. We first show that any limit point of the sequence produced by Algorithm \ref{algorithm1} is a saddle point of $\mathcal{L}_{\alpha\beta}^{\nu}$ for every $\nu=1,\ldots,N$.

\renewcommand\thetheorem{4}
\begin{theorem}[Subsequence Convergence]\label{thm_limit_saddle}
Let $\left\lbrace ( {x}^{\nu,k},z^{\nu,k},\lambda^{\nu,k},\mu^{\nu,k}) \right\rbrace_{\nu=1}^{N}$ be the sequence generated by Algorithm \ref{algorithm1}. Then, the sequence $\left\lbrace \left({x}^{\nu,k},z^{\nu,k},\lambda^{\nu,k},\mu^{\nu,k}\right) \right\rbrace_{\nu=1}^{N}$ converges to a point $(\overline{\mathbf{x}},\overline{z}^{\nu},\overline{\lambda}^{\nu},\overline{\mu}^{\nu})$ that satisfies the saddle point condition \eqref{eq:saddle}.
\end{theorem}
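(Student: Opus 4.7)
The plan is to extract a convergent subsequence from the bounded iterate sequence, pass to the limit in the first-order optimality conditions implied by Steps 1--3 of Algorithm \ref{algorithm1}, and then use the convex/concave structure of $\mathcal{L}_{\alpha\beta}^{\nu}$ to turn these limiting stationarity conditions into the two saddle-point inequalities \eqref{eq:saddle}.

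First, I would verify boundedness of the \emph{full} iterate vector. Theorem \ref{thm2_x} already gives boundedness of $\{\mathbf{x}^k\}$ and of each $\{\lambda^{\nu,k}\}$. Since Step 3 delivers $\mu^{\nu,k+1}=\lambda^{\nu,k+1}$, the sequence $\{\mu^{\nu,k}\}$ is bounded, and then Step 2 expresses $z^{\nu,k+1}=(\lambda^{\nu,k}-\mu^{\nu,k})/\alpha_{\nu}$, so $\{z^{\nu,k}\}$ is bounded as well. By Bolzano--Weierstrass there is a subsequence $\{k_j\}$ along which $(\mathbf{x}^{k_j},z^{\nu,k_j},\lambda^{\nu,k_j},\mu^{\nu,k_j})\to(\overline{\mathbf{x}},\overline{z}^{\nu},\overline{\lambda}^{\nu},\overline{\mu}^{\nu})$, and by Theorem \ref{thm2_x}\ref{itm:thm2_c} the successor iterates $(\mathbf{x}^{k_j+1},z^{\nu,k_j+1},\lambda^{\nu,k_j+1},\mu^{\nu,k_j+1})$ share the same limit.

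Next I would pass to the limit in Steps 2 and 3, which are easy because they are given in closed form. The $z^{\nu}$-minimization together with the previous Step 3 identity $\mu^{\nu,k}=\lambda^{\nu,k}$ forces $z^{\nu,k+1}=0$, hence $\overline{z}^{\nu}=0$. The $\mu^{\nu}$-update gives $\overline{\mu}^{\nu}=\overline{\lambda}^{\nu}$. The $\lambda^{\nu}$-update becomes, in the limit, $\overline{\lambda}^{\nu}=[\overline{\mu}^{\nu}+g^{\nu}(\overline{\mathbf{x}})/\beta_{\nu}]^{+}$; combined with $\overline{\mu}^{\nu}=\overline{\lambda}^{\nu}$ this is equivalent to primal-dual feasibility and complementarity, namely $\overline{\lambda}^{\nu}\geq0$, $g^{\nu}(\overline{\mathbf{x}})\leq0$, and $(\overline{\lambda}^{\nu})^{T}g^{\nu}(\overline{\mathbf{x}})=0$. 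These three relations are exactly the first-order optimality conditions for the concave maximization over $(\lambda^{\nu},\mu^{\nu})\in\mathbb{R}_{+}^{m_{\nu}}\times\mathbb{R}^{m_{\nu}}$ at the limit point; since $\mathcal{L}_{\alpha\beta}^{\nu}(\overline{\mathbf{x}},\overline{z}^{\nu},\cdot,\cdot)$ is \emph{strongly concave} in $(\lambda^{\nu},\mu^{\nu})$ by construction, they imply the left-hand (maximization) inequality in \eqref{eq:saddle}.

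The delicate step is passing to the limit in Step 1, which produces only an inexact stationary point of the surrogate $\widehat{\mathcal{L}_{\alpha\beta}^{\nu}}$. The termination rule gives, for each $\nu$, an $\varepsilon$-approximate variational inequality
\[
\nabla_{x^{\nu}}\widehat{\mathcal{L}_{\alpha\beta}^{\nu}}\bigl(\mathbf{x}^{k+1},z^{\nu,k},\lambda^{\nu,k},\mu^{\nu,k};\mathbf{x}^{k}\bigr)^{T}(x^{\nu}-x^{\nu,k+1})\geq-O(\varepsilon),\quad\forall x^{\nu}\in\mathcal{X}_{\nu}.
\]
Using
\[
\nabla_{x^{\nu}}\widehat{\mathcal{L}_{\alpha\beta}^{\nu}}(\mathbf{x}^{k+1},z^{\nu,k},\lambda^{\nu,k},\mu^{\nu,k};\mathbf{x}^{k})=\nabla_{x^{\nu}}\mathcal{L}_{\alpha\beta}^{\nu}(\mathbf{x}^{k},z^{\nu,k},\lambda^{\nu,k},\mu^{\nu,k})+\gamma_{\nu}(x^{\nu,k+1}-x^{\nu,k})
\]
together with Theorem \ref{thm2_x}\ref{itm:thm2_c} (which kills the proximal correction) and the uniform Lipschitz continuity from Assumption \ref{assumption_lipschitz} (which allows one to replace $\mathbf{x}^{k}$ by $\overline{\mathbf{x}}$ in the limit), one obtains, after driving the inner tolerance to zero,
\[
\nabla_{x^{\nu}}\mathcal{L}_{\alpha\beta}^{\nu}(\overline{\mathbf{x}},\overline{z}^{\nu},\overline{\lambda}^{\nu},\overline{\mu}^{\nu})^{T}(x^{\nu}-\overline{x}^{\nu})\geq0,\quad\forall x^{\nu}\in\mathcal{X}_{\nu}.
\]
Combined with $\nabla_{z^{\nu}}\mathcal{L}_{\alpha\beta}^{\nu}(\overline{\mathbf{x}},\overline{z}^{\nu},\overline{\lambda}^{\nu},\overline{\mu}^{\nu})=-\overline{\lambda}^{\nu}+\overline{\mu}^{\nu}+\alpha_{\nu}\overline{z}^{\nu}=0$ (limit of Step 2) and with the convexity of $\mathcal{L}_{\alpha\beta}^{\nu}(\cdot,\mathbf{x}^{-\nu,\ast},\cdot,\lambda^{\nu,\ast},\mu^{\nu,\ast})$ in $(x^{\nu},z^{\nu})$ under Assumption \ref{assumption1}, this yields the right-hand (minimization) inequality of \eqref{eq:saddle}, completing the saddle-point characterization.

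The principal obstacle I foresee is exactly the step above: reconciling the inexact surrogate update with the exact stationarity of $\mathcal{L}_{\alpha\beta}^{\nu}$ at $\overline{\mathbf{x}}$. Two ingredients must be used in tandem and handled carefully, namely (i) $\|\mathbf{x}^{k+1}-\mathbf{x}^{k}\|\to0$, which makes the quadratic proximal centering harmless in the limit and also makes the evaluation point of $\nabla_{x^{\nu}}\mathcal{L}_{\alpha\beta}^{\nu}$ (which is at $\mathbf{x}^{k}$ in the surrogate) asymptotically agree with $\overline{\mathbf{x}}$, and (ii) driving the inner tolerance $\varepsilon$ to zero so that the fixed-point residual of $\widehat{\mathbf{L}}^{k}$ becomes an exact projected-gradient equation in the limit; the Lipschitz estimates of Assumption \ref{assumption_lipschitz} are what make both passages legitimate.
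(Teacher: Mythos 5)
Your proposal is correct and follows essentially the same route as the paper's own proof: boundedness of the iterates via Theorem \ref{thm2_x}, extraction of a convergent subsequence whose successor iterates share the limit by Theorem \ref{thm2_x}\ref{itm:thm2_c}, passage to the limit in the fixed-point/closed-form characterizations of Steps 1--3, and then convexity in $(x^{\nu},z^{\nu})$ and concavity in $(\lambda^{\nu},\mu^{\nu})$ to convert the limiting stationarity conditions into the two inequalities of \eqref{eq:saddle}. You are in fact somewhat more explicit than the paper on two points it glosses over --- the derivation of $\overline{z}^{\nu}=0$, $\overline{\mu}^{\nu}=\overline{\lambda}^{\nu}$, and complementarity from the closed-form dual updates, and the reconciliation of the inexact inner loop with exact stationarity at the limit --- but the argument is the same in substance (note only that joint concavity in $(\lambda^{\nu},\mu^{\nu})$, rather than strong concavity, is what the maximization inequality actually requires).
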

\begin{proof}
	See Appendix \ref{D.1}
\end{proof}

We now strengthen the above subsequence convergence result under an additional assumption that the functions satisfy the \emph{Kurdyka-{\L}ojasiewicz (K\L{}) property} (see \citet{lojasiewicz1963propriete} and \citet{kurdyka1998gradients}): The K{\L} property, along with the sufficient decrease of the P-Lagrangian and the boundedness of generated sequence, enables us to establish global convergence of the \emph{whole} sequence $\left\lbrace({x}^{\nu,k},z^{\nu,k},\lambda^{\nu,k},\mu^{\nu,k})\right\rbrace_{\nu=1}^{N}$ to a saddle-point of $\mathcal{L}_{\alpha\beta}^{\nu}$  by showing that the sequence has finite length.

\begin{definition}[K{\L} Property \& K{\L} function] \label{def_KL_pro_func}
Let $\delta \in \left(0, +\infty\right]$. Denote by $\Phi_\delta$ the class of all concave and continuous functions $\varphi:\left[0, \delta\right) \rightarrow \mathbb{R}_{+}$, which satisfy the following conditions:
	\begin{enumerate}[label=(\roman*)]
		\item $\varphi(0)=0$;
		\item $\varphi$ is continuously differentiable ($C^{1}$) on $\left[0, \delta\right)$ and continuous at 0;
		\item for all $s \in (0, \delta): \varphi^{\prime}>0.$
	\end{enumerate}	
A proper and lower semicontinuous function $\Psi: \mathbb{R}^n \rightarrow \left(-\infty, +\infty \right]$ is said to have the \emph{Kurdyka-{\L}ojasiewicz (K{\L}) property} at $\overline{u} \in \mathrm{dom} \: \partial \Psi :=\left\lbrace u \in \mathbb{R}^n: \partial \Psi(u)=\emptyset \right\rbrace $ if there exist $\delta \in \left(0, +\infty\right]$, a neighborhood $U$ of $\overline{u}$ and a function $\varphi \in \Phi_\delta$, such that 
	\[
	  \varphi^{\prime}(\Psi(u) -\Psi(\overline{u})) \cdot \mathrm{dist}(0,\partial \Psi(u)) \geq 1
	\]
for all $u \in U(\overline{u}) \cap \{u: \Psi(\overline{u}) < \Psi({u}) <\Psi(\overline{u}) +\delta \}$. The function $\Psi$ satisfying the K\L{} property at each point of dom$\: \partial \Psi$ is called a \emph{K\L{} function}. 
\end{definition}

\renewcommand\thetheorem{5}
\begin{theorem}[Global Convergence] \label{thm_global_convergence}
	Suppose that $\theta_\nu$ and $g^\nu_i$, $\nu=1,\ldots,N$, $i=1,\ldots,m_\nu$, satisfy the K\L{} property. Let  $\left\lbrace \mathbf{w}^{\nu,k}:=(\mathbf{x}^{k},z^{\nu,k},\lambda^{\nu,k},\mu^{\nu,k})\right\rbrace_{\nu=1}^N$ be the sequence generated by Algorithm \ref{algorithm1}. Then the sequence  $\left\lbrace \mathbf{w}^{\nu,k}= (\mathbf{x}^{k},z^{\nu,k},\lambda^{\nu,k},\mu^{\nu,k})\right\rbrace_{\nu=1}^N $ has finite length, i.e.,  
	$$\sum_{k=1}^{\infty}\left\Vert \mathbf{w}^{\nu,k+1}-\mathbf{w}^{\nu,k}\right\Vert<+ \infty,$$
	and the whole sequence $\left\lbrace ( {x}^{\nu,k},z^{\nu,k},\lambda^{\nu,k},\mu^{\nu,k}) \right\rbrace_{\nu=1}^{N}$ converges to a saddle point  $(\overline{\mathbf{x}},\overline{z}^{\nu},\overline{\lambda}^{\nu},\overline{\mu}^{\nu})$ of $\mathcal{L}_{\alpha \beta}^{\nu}$.
\end{theorem}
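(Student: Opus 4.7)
The plan is to follow the now-standard recipe for K\L{}-based global convergence (as used, e.g., in Attouch-Bolte-Svaiter and Bolte-Sabach-Teboulle): establish (i) a sufficient decrease property, (ii) a relative-error/subgradient bound, and (iii) a continuity condition on a suitable potential function, and then invoke the K\L{} inequality to show that the generated sequence has finite length. Since convergence of a subsequence to a saddle point is already given by Theorem \ref{thm_limit_saddle}, finite length will upgrade this to convergence of the whole sequence.

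First I would choose as potential function, for each player $\nu$, the P-Lagrangian itself,
\[
\Psi_\nu(\mathbf{w}^{\nu}) := \mathcal{L}_{\alpha\beta}^{\nu}(\mathbf{x},z^{\nu},\lambda^{\nu},\mu^{\nu}) + \iota_{\mathbf{X}}(\mathbf{x}) + \iota_{\mathbb{R}_+^{m_\nu}}(\lambda^{\nu}),
\]
where $\iota_S$ is the indicator of $S$. Property (i), sufficient decrease of the form $\Psi_\nu(\mathbf{w}^{\nu,k+1}) \leq \Psi_\nu(\mathbf{w}^{\nu,k}) - a_\nu \|\mathbf{x}^{k+1}-\mathbf{x}^k\|^2$, is exactly the content of Lemma \ref{lem_lag_behavior} for $\gamma_\nu$ chosen large enough. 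Combined with Lemma \ref{lem_bound_multi} and the exact update formulas for $z^{\nu,k+1}$ and $\mu^{\nu,k+1}$, this can be strengthened to a decrease bounded below by $a'_\nu \|\mathbf{w}^{\nu,k+1}-\mathbf{w}^{\nu,k}\|^2$ for some $a'_\nu>0$, since each of $\|z^{\nu,k+1}-z^{\nu,k}\|$, $\|\lambda^{\nu,k+1}-\lambda^{\nu,k}\|$, $\|\mu^{\nu,k+1}-\mu^{\nu,k}\|$ is controlled by $\|\mathbf{x}^{k+1}-\mathbf{x}^k\|$ via \eqref{eq:multiplier_x_relation} and the identity $\mu^{\nu,k+1}=\lambda^{\nu,k+1}$.

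The main technical work is property (ii), the subgradient bound. I would write the first-order optimality conditions at iteration $k+1$ for each of Steps 1--3 of Algorithm \ref{algorithm1}, expressing them as inclusions involving $\nabla_{x^\nu}\mathcal{L}_{\alpha\beta}^\nu$, $\nabla_{z^\nu}\mathcal{L}_{\alpha\beta}^\nu$, $\nabla_{\lambda^\nu}\mathcal{L}_{\alpha\beta}^\nu$, and $\nabla_{\mu^\nu}\mathcal{L}_{\alpha\beta}^\nu$ evaluated at $\mathbf{w}^{\nu,k+1}$. Subtracting the gradients evaluated at the iteration-$k$ base point $\mathbf{x}^k$ and using the Lipschitz continuity of $\nabla_{\mathbf{x}}\mathcal{L}_{\alpha\beta}^\nu$ (Lemma \ref{lem_descent}) together with the boundedness of $\{\lambda^{\nu,k}\}$ (Theorem \ref{thm2_x}\ref{itm:thm2_b}), one produces an element $\xi^{\nu,k+1}\in \partial \Psi_\nu(\mathbf{w}^{\nu,k+1})$ with
\[
\|\xi^{\nu,k+1}\| \leq b_\nu\bigl(\|\mathbf{x}^{k+1}-\mathbf{x}^k\| + \|\lambda^{\nu,k+1}-\lambda^{\nu,k}\| + \|\mu^{\nu,k+1}-\mu^{\nu,k}\| + \|z^{\nu,k+1}-z^{\nu,k}\|\bigr).
\]
Using \eqref{eq:multiplier_x_relation} and the explicit update formulas again, the right-hand side is itself bounded by $b'_\nu \|\mathbf{w}^{\nu,k+1}-\mathbf{w}^{\nu,k}\|$, giving $\mathrm{dist}(0,\partial\Psi_\nu(\mathbf{w}^{\nu,k+1}))\leq b'_\nu\|\mathbf{w}^{\nu,k+1}-\mathbf{w}^{\nu,k}\|$. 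Property (iii), continuity of $\Psi_\nu$ along the sequence, follows from the smoothness of $\theta_\nu$ and $g^\nu$ and the boundedness established in Theorem \ref{thm2_x}.

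With (i)--(iii) in hand, the finite-length argument is standard: since $\theta_\nu$ and $g_i^\nu$ are K\L{} functions, the P-Lagrangian $\mathcal{L}_{\alpha\beta}^\nu$ inherits the K\L{} property (it is obtained from K\L{} functions by sum, product with Lagrange multipliers, and addition of quadratic and indicator terms, all of which preserve the class -- I would cite the standard stability results of Bolte-Daniilidis-Lewis for this). Let $\overline{\mathbf{w}}^\nu$ be any cluster point (which exists and is a saddle point by Theorems \ref{thm2_x} and \ref{thm_limit_saddle}). Applying the K\L{} inequality with desingularizing function $\varphi$ to $\Psi_\nu$ at $\overline{\mathbf{w}}^\nu$, combining with the sufficient decrease and subgradient bound, and using the concavity of $\varphi$ in the telescoping step, one derives
\[
2\|\mathbf{w}^{\nu,k+1}-\mathbf{w}^{\nu,k}\| \leq \|\mathbf{w}^{\nu,k}-\mathbf{w}^{\nu,k-1}\| + \tfrac{b'_\nu}{a'_\nu}\bigl[\varphi(\Psi_\nu(\mathbf{w}^{\nu,k})-\Psi_\nu(\overline{\mathbf{w}}^\nu)) - \varphi(\Psi_\nu(\mathbf{w}^{\nu,k+1})-\Psi_\nu(\overline{\mathbf{w}}^\nu))\bigr],
\]
and summing from $k=K_0$ to $\infty$ yields $\sum_k \|\mathbf{w}^{\nu,k+1}-\mathbf{w}^{\nu,k}\|<+\infty$. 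Hence $\{\mathbf{w}^{\nu,k}\}$ is Cauchy and converges to the (unique) saddle point $\overline{\mathbf{w}}^\nu$. The hardest step I anticipate is the subgradient bound: the presence of the projection onto $\mathbf{X}$ in Step 1, the inner-loop approximate stopping criterion \eqref{eq:step1_con1} (which means $\mathbf{x}^{k+1}$ only approximately satisfies the VI \eqref{eq:approx_VI}), and the coupling across players make the book-keeping delicate, and I expect one has to tighten the inner-loop tolerance $\varepsilon$ to $\varepsilon_k\to 0$ with $\sum_k \varepsilon_k<\infty$ to absorb the inner-loop error into the subgradient bound without destroying finite length.
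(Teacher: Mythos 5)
Your proposal is correct and follows essentially the same route as the paper's proof: Lemma \ref{lem_lag_behavior} supplies the sufficient decrease, the paper's Lemma \ref{lem_upper_bound_gradient} plays the role of your subgradient bound (phrased as a bound $\|\widetilde{\nabla}\mathcal{L}_{\alpha\beta}^{\nu}(\mathbf{w}^{\nu,k+1})\|\leq C_\nu\|\mathbf{x}^{k+1}-\mathbf{x}^{k}\|$ on a projected gradient rather than on $\partial\bigl(\mathcal{L}_{\alpha\beta}^{\nu}+\iota_{\mathbf{X}}+\iota_{\mathbb{R}_{+}^{m_\nu}}\bigr)$, and handling the inexact inner loop by comparing $\mathbf{x}^{k+1}$ with the exact inner solution $\widehat{\mathbf{x}}^{k}$ instead of tightening tolerances), and the uniformized K\L{} property plus the concave-$\varphi$ telescoping gives finite length of $\{\mathbf{x}^{k}\}$, which Lemma \ref{lem_bound_multi} and the explicit $z,\mu$ updates then transfer to the full iterate $\mathbf{w}^{\nu,k}$. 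The only substantive point where you are more careful than the paper is in justifying that $\mathcal{L}_{\alpha\beta}^{\nu}$ inherits the K\L{} property from $\theta_\nu$ and $g_i^\nu$, which the paper simply asserts.
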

\begin{proof}
	See Appendix \ref{D.2}
\end{proof}
We note that verifying the K{\L} property of a function might be a difficult task. However, it is well-known that \emph{semi-algebraic} and \emph{real-analytic} functions, which capture many applications, are classes of functions that satisfy the K{\L} property; see e.g., \cite{attouch2009convergence,attouch2010proximal,attouch2013convergence,xu2013block, li2018calculus} for an in-depth study of the K\L{} functions and illustrating examples.

\section{Computational Results}

In this section, we present computational results to demonstrate the effectiveness of the proposed method. We conducted numerical experiments on test problems using Algorithm 1. The experiments were carried out using MATLAB (R2018a) on a laptop with a Intel Core i5-6300U CPU 2.50GHz 8GB RAM. All the test problems were taken from a library of GNEPs used in the literature \cite{facchinei2010penalty,dreves2011solution,kanzow2016augmented}, and two classes of instances were considered; \emph{general} GNEPs (A.1-A.10) and \emph{jointly-convex} GNEPs (A.11-A.18). We refer the readers to \cite{facchinei2009penalty} for a detailed description of the problems with data. 

In the numerical test, we used the starting points listed in \cite{facchinei2010penalty}, and the other variables' initial points were set to $\left(z^{\nu,0},\lambda^{\nu,0},\mu^{\nu,0} \right)=\left(0,0,0 \right)$ for every $\nu=1,\ldots,N$. As for the parameters, we used fixed parameters set to $\alpha_{\nu}=10$ and $\beta_{\nu}=1$ for each player's P-Lagrangian and for all test problems. In addition, diminishing step size $\sigma_\nu$ was simply used for every player $\nu$ in each problem. The stopping criterion is set as 
\[
  \underset{\nu=1,\ldots,N}{\mathrm{max}} \left\lbrace  \left\|  x^{\nu,k+1} - x^{\nu,k} \right\|_{\infty}, \left\|  \lambda^{\nu,k+1} - \lambda^{\nu,k} \right\|_{\infty} \right\rbrace \leq 10^{-4}. 
\]
The computational results of our algorithm for the test problems are presented in Table 1, where we used the following notations;
the number of players ‘$N$’, the number of variables ‘$n$’, the number of constraints ‘$m$’, starting point ‘$\mathbf{x}^0$’,
total (cumulative) number of inner iterations ‘$Iter.$’, and computation time of CPU seconds ‘Time(s)’.

We summarize the computational results in the following:
\begin{enumerate}
	\item Algorithm 1 was able to solve all the test problems. The experimental results show that Algorithm 1 comes out favorably in terms of the number of problems solved compared to the other methods. In particular, the exact penalty algorithm in \cite{facchinei2010penalty} was unable to find solutions to the problems A.2, A.7, and A.8. In addition, the interior-point algorithm in \cite{dreves2011solution} and the augmented Lagrangian method in \cite{kanzow2016augmented} were unable to find a GNE of the instance A.8 with starting points $\mathbf{x}^{0}=10$ and $\mathbf{x}^{0}=0$, respectively. On the other hand, Algorithm 1 converges to a GNE for the three problems A.2, A.7, and A.8, starting from those points.
	\vspace{0.025in}
	
	\item It is worth noting that Algorithm 1 converges to the same GNE from different starting points in each problem, while the exact penalty algorithm \cite{facchinei2010penalty} converges to different equilibria or generates unbounded sequences in some cases. This difference is because each player's problem is convex in its own variables, and each player solves strongly convex subproblems while keeping the other players' variables fixed. On the other hand, the exact penalty algorithm \cite{facchinei2010penalty} solves nondifferentiable (possible nonconvex) subproblems. This, along with sensitivity to the penalty parameters and starting points, may lead to the convergence to different equilibria or the failure of convergence.
	\vspace{0.025in}
	
	\item Our distributed algorithm required very short CPU times to reach equilibrium for all test instances. The results confirm the efficiency of our algorithm that the computation time per iteration $k$ to solve each subproblem is significantly short for every instance. This advantage is mainly due to the Jacobi decomposition scheme for the $\mathbf{x}$-update with the cheap projection onto the simple set $\mathcal{X}_{\nu}$.
	\vspace{0.025in}
	
	\item For Arrow-Debreu equilibrium problems A.10 (a-e), it is noteworthy that our problem setting is different from test problems in \cite{facchinei2009penalty}. In our setting, the production variables are added to the constraints of consumers' problems, that is $p^{T}x^{i}\leq p^{T} \xi^{i}+\sum_{j=1}^{J}q_{ij}p^{T}y^{j}$, whereas  the constraints were set to $p^{T}x^{i}\leq p^{T}\xi^{i}$ in the test setting in \cite{facchinei2009penalty}. This reflects  the original Arrow-Debreu model better and computational results have shown that Algorithm 1 performs well on the modified instances.
	
\end{enumerate}

\begin{table} 
	\centering{}	
	\caption{Computational results for Algorithm 1.} \label{table 1}
		\begin{tabular}{|c|cccc|c|c|}
			\hline 
			\multirow{1}{*}{\emph{general} GNEP} & 
			$N$ & $n$ & $m$ & $\mathbf{x}^{0}$ & $Iter.$ & {Time (s)}  \tabularnewline
			\hline 
			A.1 & 10 & 10 & 20 & 0.01 & 38 & $<0.01$  \tabularnewline
			& & & &               0.1 & 36 & $<0.01$   \tabularnewline
			& & & &                 1 & 38& $<0.01$   \tabularnewline
			\hline 
			A.2 & 10 & 10 & 24 & 0.01 & 610 & 0.04  \tabularnewline
			& & & &               0.1 & 536 & 0.04    \tabularnewline
			& & & &                 1 & 683 & 0.05    \tabularnewline
			\hline 
			A.3 & 3  & 7  & 18 &    0 & 51  & 0.01  \tabularnewline
			& & & &                 1 & 51  & 0.01    \tabularnewline
			& & & &                10 & 51  & 0.01    \tabularnewline
			\hline 
			A.4 & 3  & 7  & 18 &    0 &  7  &$<0.01$  \tabularnewline
			& & & &                 1 &  7  &$<0.01$    \tabularnewline
			& & & &                10 &  7  &$<0.01$    \tabularnewline
			\hline 
			A.5 & 3  & 7  & 18 &    0 & 82  & 0.02    \tabularnewline
			& & & &                 1 & 82  & 0.02      \tabularnewline
			& & & &                10 & 82  & 0.02     \tabularnewline
			\hline 
			A.6 & 3  & 7 &  21 &    0 &  49  & 0.02   \tabularnewline
			& & & &                 1 &  49  & 0.02     \tabularnewline
			& & & &                10 &  49  & 0.02     \tabularnewline
			\hline 
			A.7 & 4 & 20 & 44 &     0 & 48   & 0.02   \tabularnewline
			& & & &                 1 & 48   & 0.02     \tabularnewline
			& & & &                10 & 48   & 0.02     \tabularnewline
			\hline 
			A.8 & 3 & 3  & 8  &     0 & 45  &$<0.01$ \tabularnewline
			& & & &                 1 & 45  &$<0.01$   \tabularnewline
			& & & &                10 & 45  &$<0.01$   \tabularnewline
			\hline 
			A.9 (a) & 7 & 56 & 63 & 0 & 108 & 0.32  \tabularnewline
			A.9 (b) & 7 & 112& 119& 0 & 135 & 1.24    \tabularnewline
			\hline 
			A.10 (a)& 8 & 24 & 33 & 0 & 780 & 0.10  \tabularnewline
			A.10 (b)& 25& 125& 151& 1 & 1374& 0.67    \tabularnewline
			A.10 (c)& 37& 222& 260& 0 & 2154& 1.12    \tabularnewline
			A.10 (d)& 37& 370& 408& 1 & 3251& 1.35    \tabularnewline
			A.10 (e)& 48& 576& 625& 1 & 4728& 2.54    \tabularnewline
			\hline 
			\hline 
		
			\multirow{1}{*}{\emph{jointly-convex} GNEP} & 
			$N$ & $n$ & $m$ & $\mathbf{x}^{0}$ & $Iter.$ & {Time (s)}  \tabularnewline
			\hline 
			A.11 & 2 & 2 & 2 & 0     & 12 & $<0.01$  \tabularnewline
			\hline 
			A.12 & 2 & 2 & 4 & (2,0) & 10 & $<0.01$  \tabularnewline
			\hline 
			A.13 & 3 & 3 & 9 &    0 & 15 & $<0.01$  \tabularnewline
			\hline 
			A.14 & 10& 10& 20&  0.01&  38  & $<0.01$  \tabularnewline
			\hline 
			A.15 & 3 &  6& 12&    0 & 145 & $<0.01$  \tabularnewline
			\hline 
			A.16 (P=75) & 5  & 5 &  10 & 10 &  52  & 0.02   \tabularnewline
			A.16 (P=100)& 5  & 5 &  10 & 10 &  52  & 0.02   \tabularnewline
			A.16 (P=150)& 5  & 5 &  10 & 10 &  52  & 0.02   \tabularnewline
			A.16 (P=200)& 5  & 5 &  10 & 10 &  52  & 0.02   \tabularnewline
			\hline 
			A.17 & 2 & 3 & 7 &       0 & 9   & $<0.01$   \tabularnewline
			\hline 
			A.18& 2 & 12  & 28  &    0 & 114  & 0.02 \tabularnewline
			\hline 
	\end{tabular}
\end{table}

\subsection*{Illustrative Examples}
To see how well Algorithm 1 performs on GNEPs, we provide numerical results for three important and practical instances with graphical illustrations.  
		
\subsection*{Problem A.9 (a) (Example 1 revisited,  Power allocation in telecommunications)}
This instance sets $\sigma^{\nu}_{i}=0.3162$ for all $\nu$ and $i$, $K=8$, $L^{\nu}=8$ for all players, and the starting point was set to $\left(0,\ldots,0\right)$. The data of coefficient $h$ is given in \cite{facchinei2009penalty}. As shown in \Cref{fig_ex1_1}, the P-Lagrangian function values $\{ \mathcal{L}_{\alpha \beta}^{\nu}\}$, $\nu=1,\ldots,7$, are monotonically decreasing and convergent, as expected. In addition, \Cref{fig_ex1_2} illustrates that the iterates of ${x}^{\nu}$, $\nu=1,3,5$, converge to a limit point satisfying the minimum target rate of 8.
		
Clearly, the computation time crucially depends on how the subproblems are solved efficiently. It is noteworthy that since the nonlinear coupling constraints are relaxed into the objective with the multipliers, the projection on the set  $\mathcal{X}_{\nu} = \left\{ x^{\nu}\in \mathbb{R}^{n_{\nu}}\left|\right.  x^{\nu} \geq 0 \right\}$ can be performed efficiently, which leads to the convergence to a GNE within a significantly short CPU time of 0.32 seconds.
\begin{figure}  [H]
\includegraphics[scale=0.38]{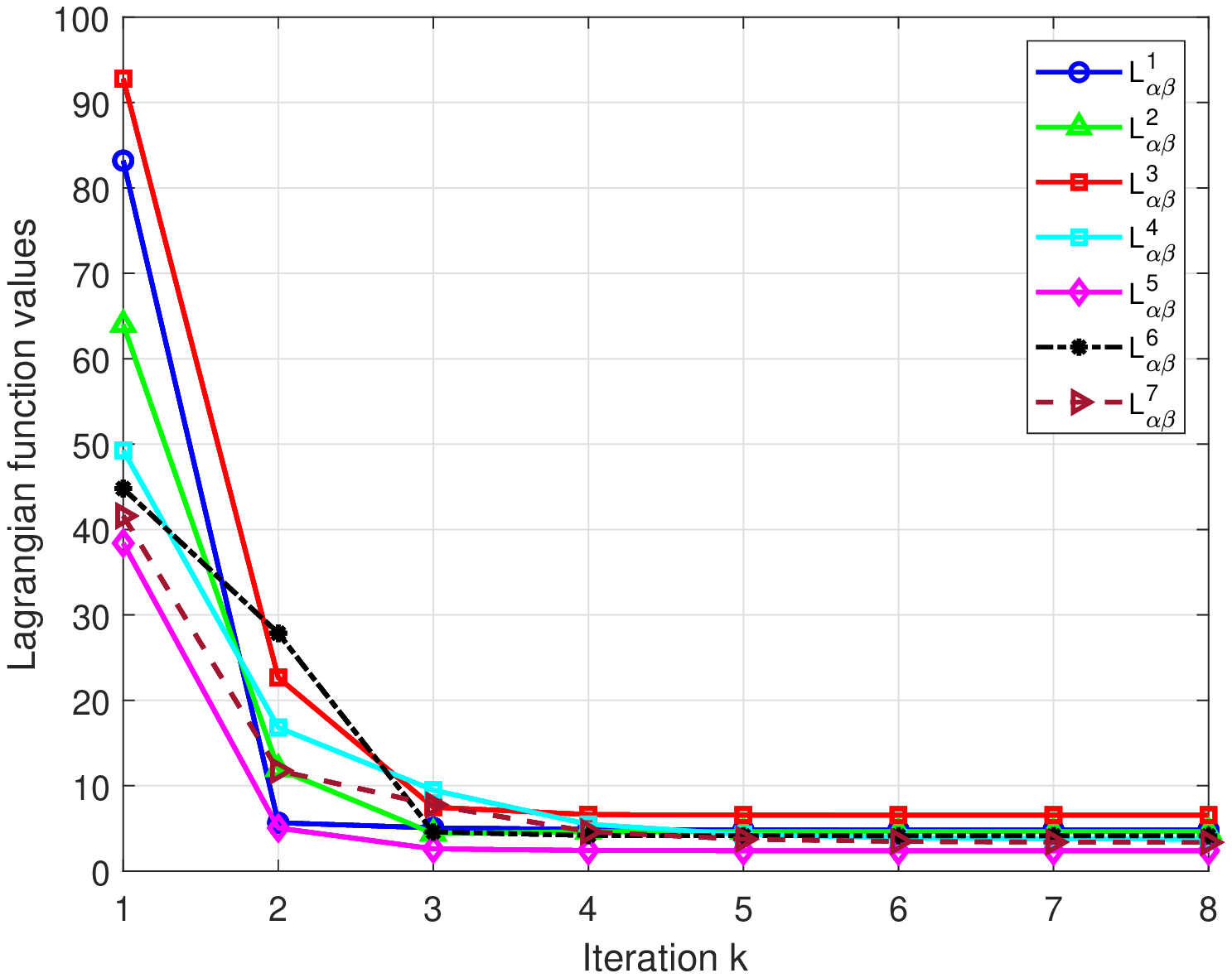}
\caption{Convergence of $\mathcal{L}_{\alpha \beta}^{\nu}$ values, $\nu=1,\ldots,7$.} \label{fig_ex1_1}	
\end{figure}
\begin{figure}[H]	
	\begin{subfigure}[b]{0.35\linewidth}
	\includegraphics[width=\linewidth]{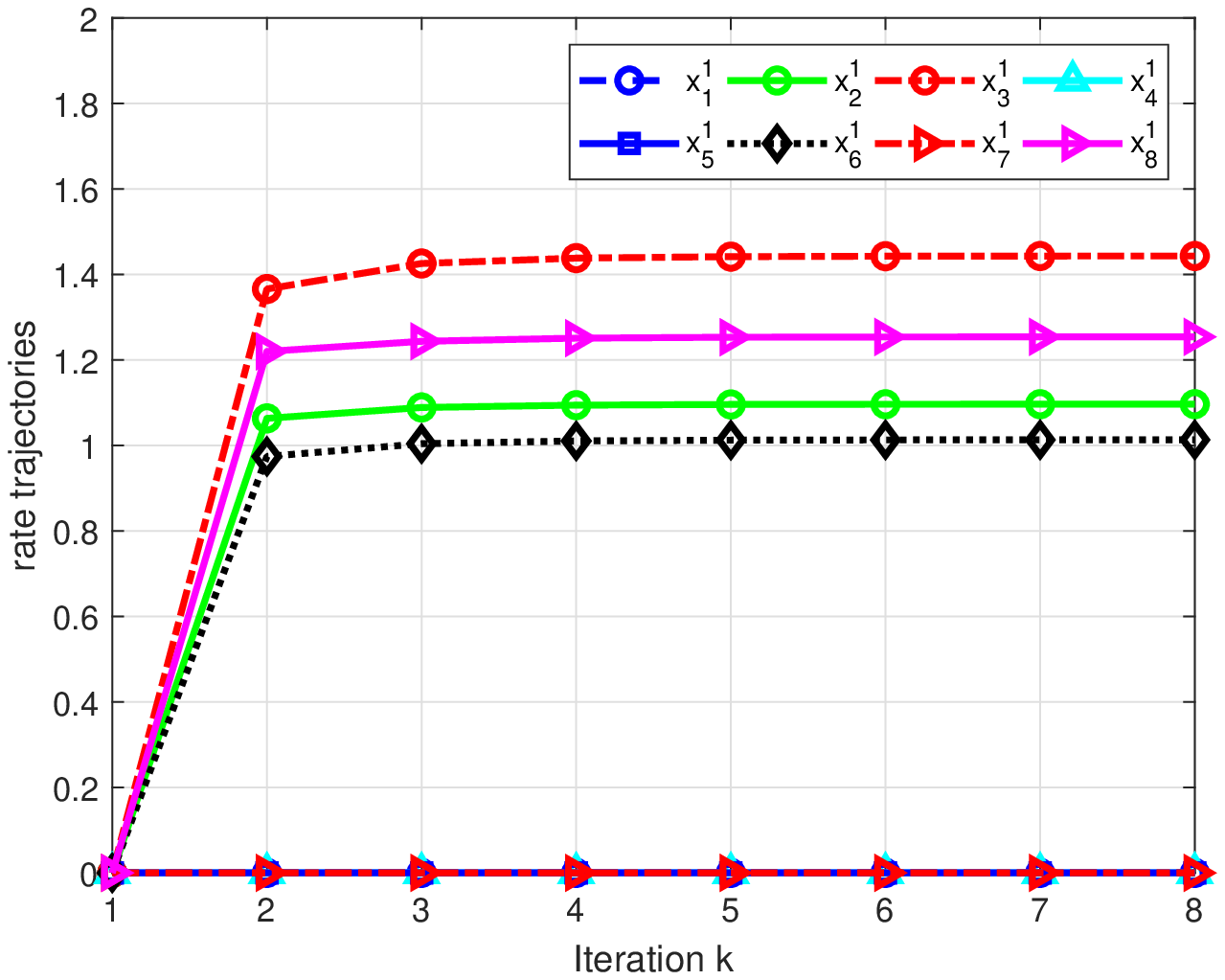}
	\end{subfigure}
	\hspace{-0.19in}\begin{subfigure}[b]{0.35\linewidth}
	\includegraphics[width=\linewidth]{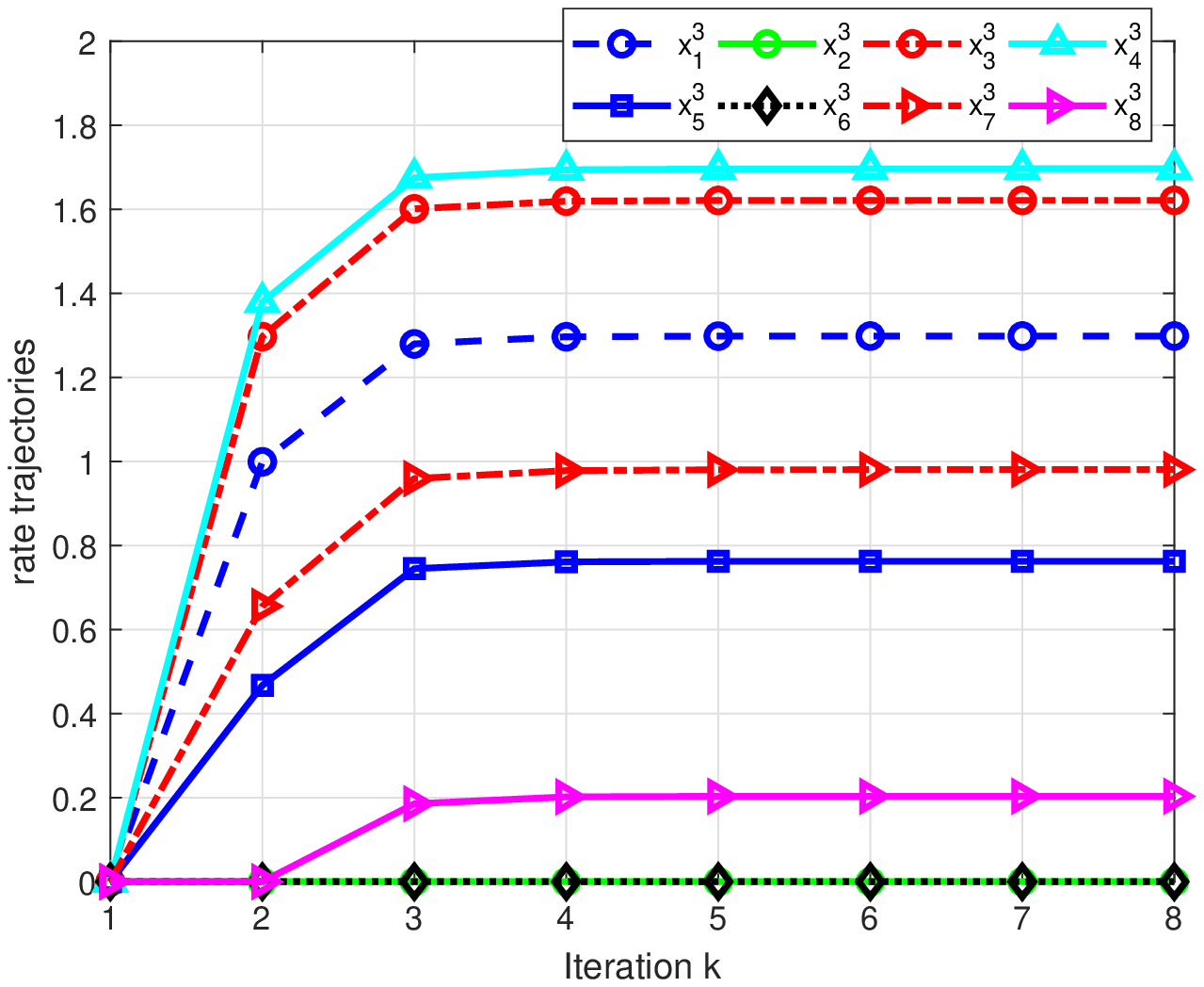}
	\end{subfigure}
	\hspace{-0.19in}\begin{subfigure}[b]{0.35\linewidth}
	\includegraphics[width=\linewidth]{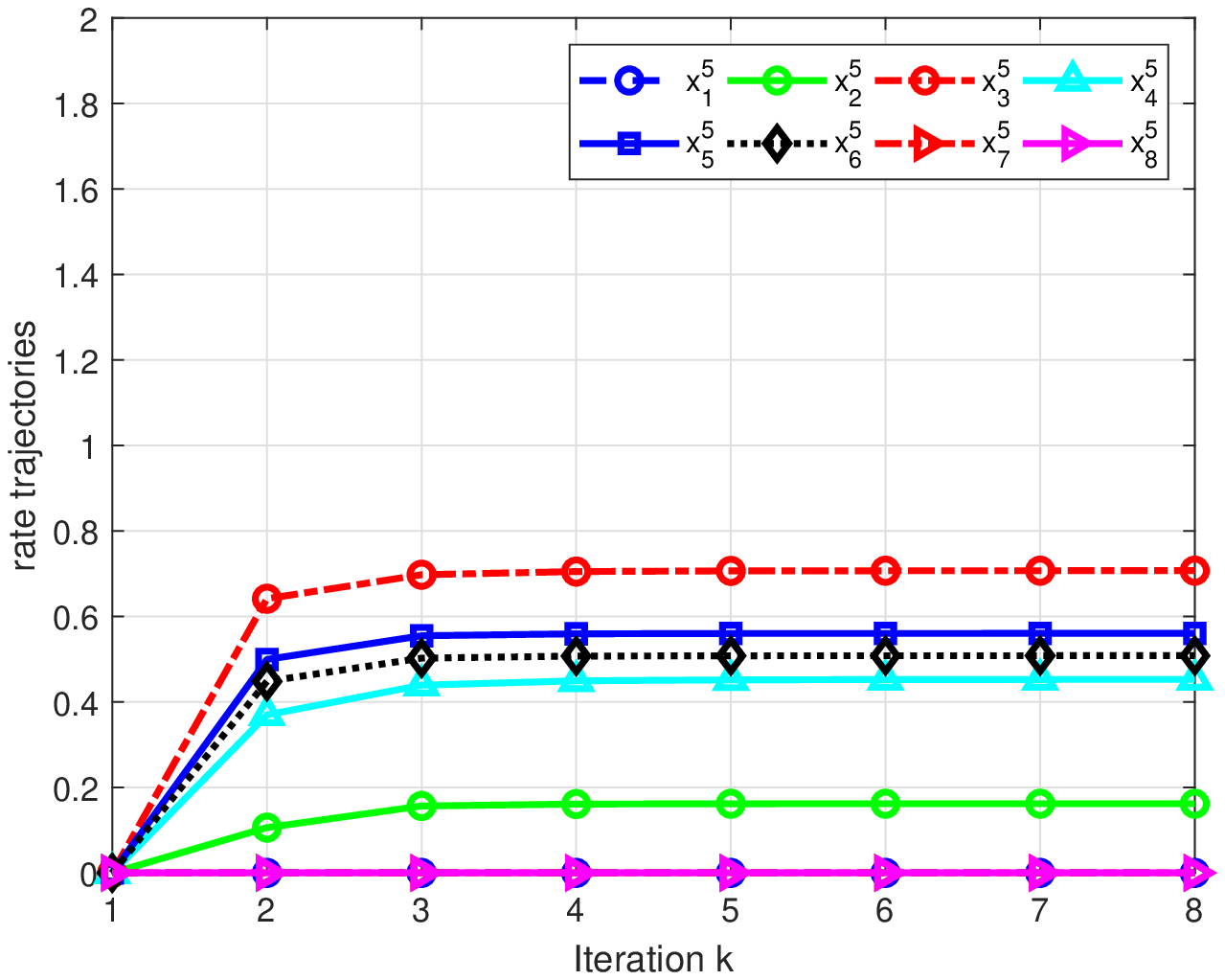}
	\end{subfigure}
	\hspace{-0.19in}\begin{subfigure}[b]{0.35\linewidth}		 
	\includegraphics[width=\linewidth]{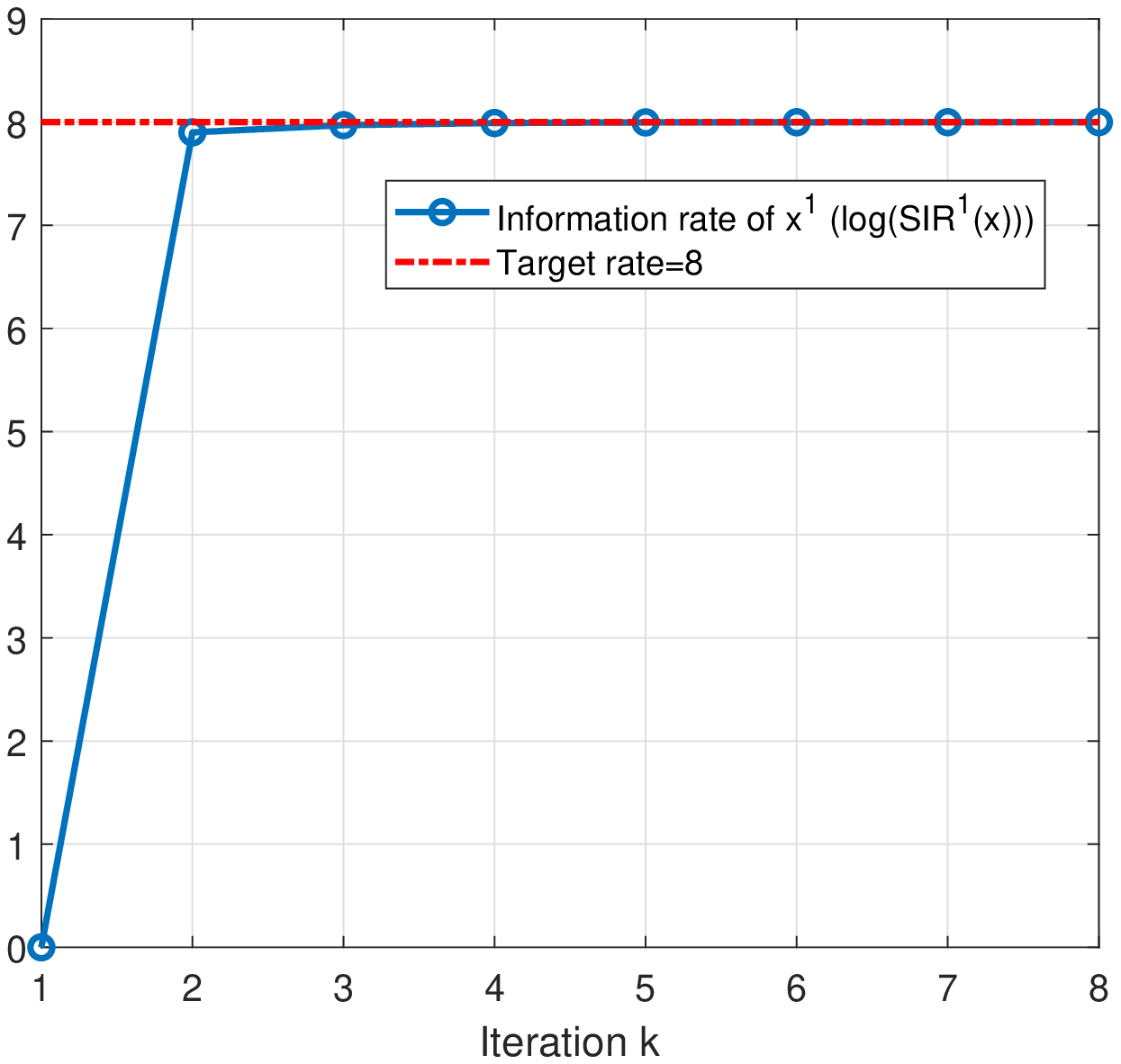}
	\subcaption{{\footnotesize Link 1 ${x}^{1}_{i} $}}
	\end{subfigure}
	\hspace{-0.19in}\begin{subfigure}[b]{0.35\linewidth}
	\includegraphics[width=\linewidth]{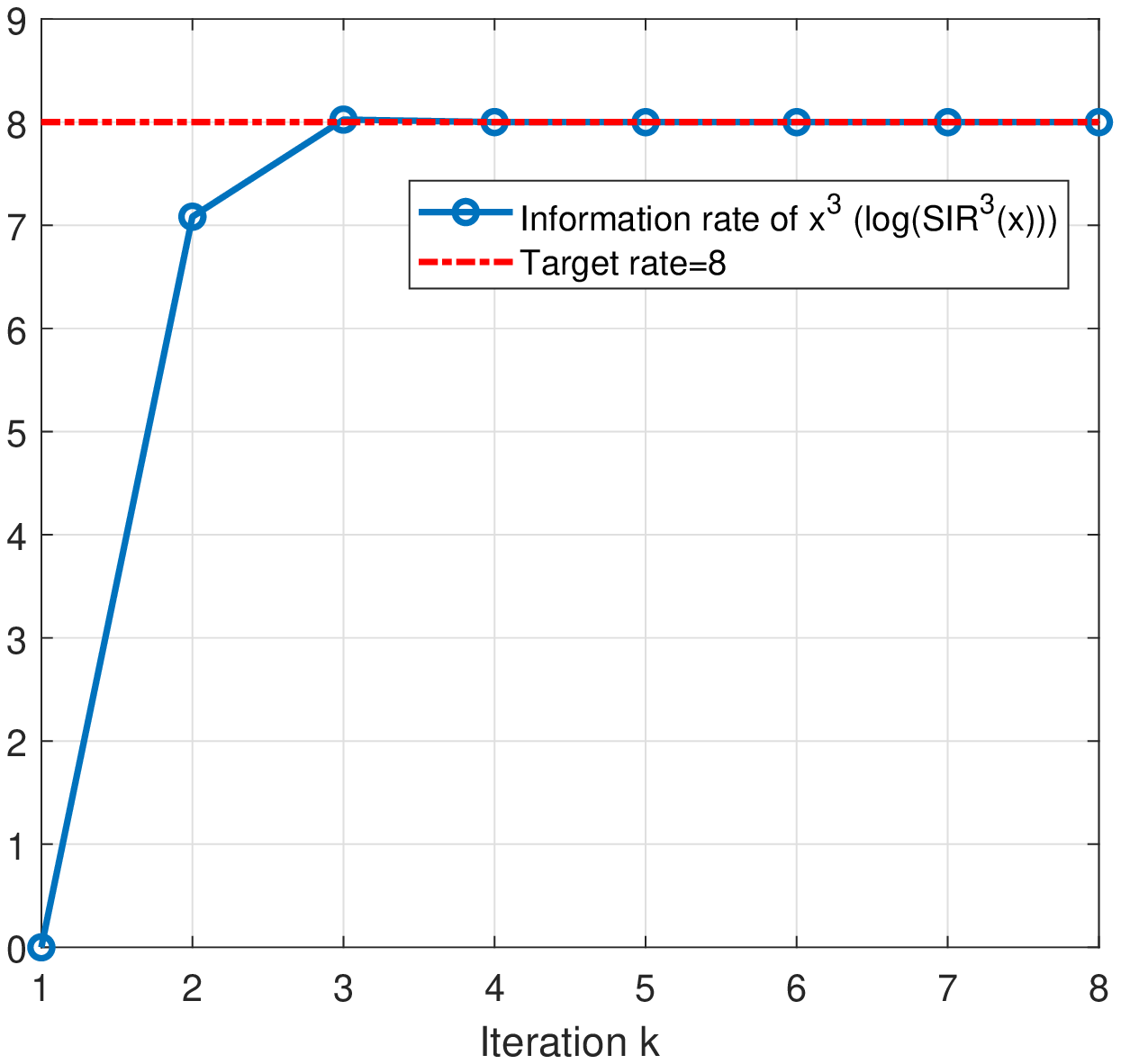}
	\subcaption{{\footnotesize Link 3 ${x}^{3}_{i} $}}
	\end{subfigure}
	\hspace{-0.19in}\begin{subfigure}[b]{0.35\linewidth}			\includegraphics[width=\linewidth]{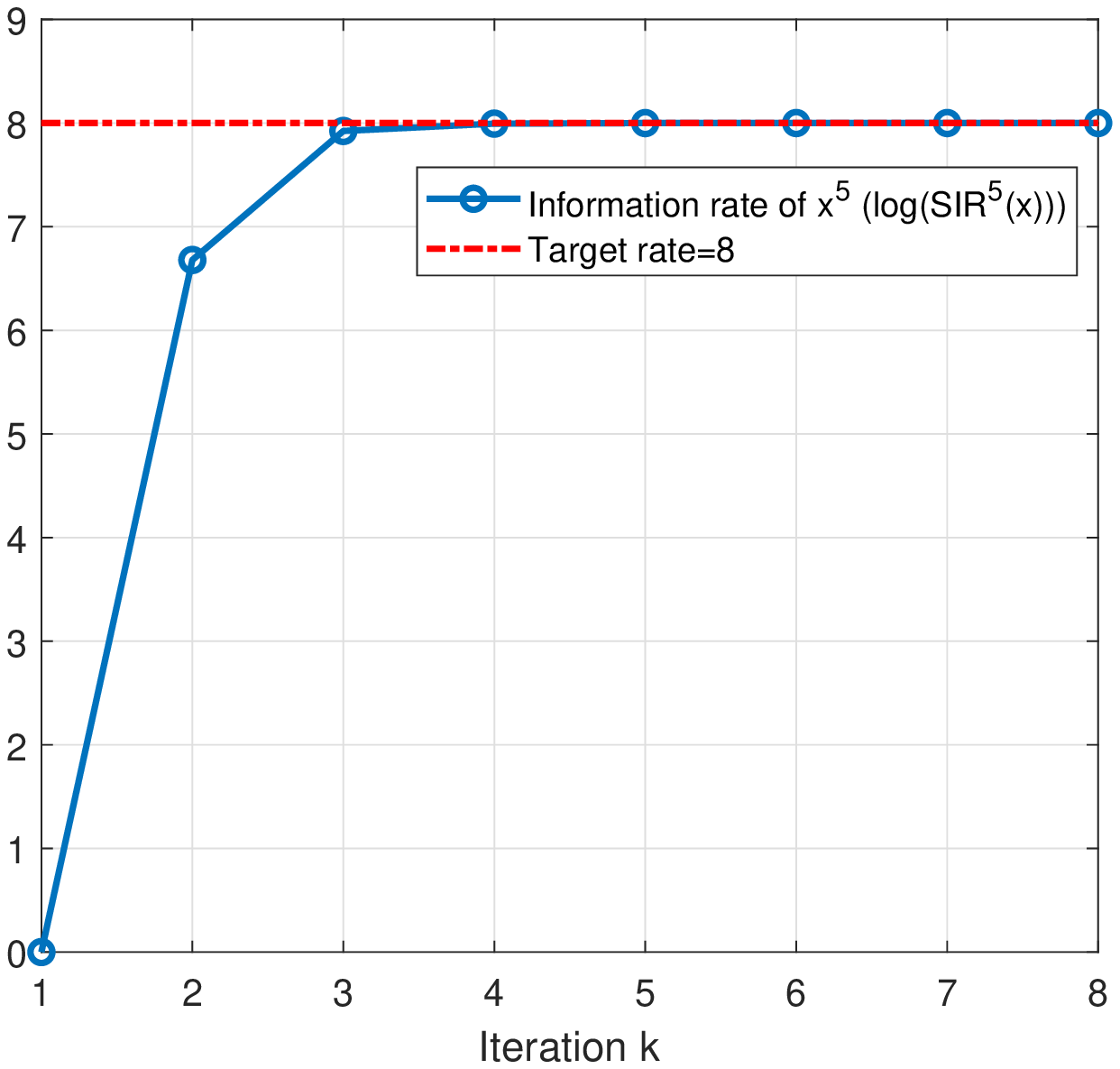}
	\subcaption{{\footnotesize Link 5 ${x}^{5}_{i} $}}
	\end{subfigure}
\caption{Trajectories of the iterates $x^1_i, x^3_i$, and $x^5_i$, $i=1,\ldots,8,$ with information sum-rates.} \label{fig_ex1_2}
\end{figure} 
\vspace{-0.1in}

\subsection*{Problem A.10 (a) (Example 2 revisited, Arrow-Debreu general equilibrium model)}

In this example, there are 8 players ($I=5$, $J=2$ and one market player) and 3 goods ($K=3$). The utility functions $u_{i}$ are quadratic and concave, $u_i(x^i)=-\frac{1}{2} (x^i)^T Q^i x^i + (b^i)^T x^i,$ and $j$-th firm's production set is defined by {\small$Y_j=\left\{y^j \left| y^j \geq 0, \ \sum_{k=1}^{K} (y^j_k)^2 \leq 10*j \right.\right\}$}. The detailed data is given in  \cite{facchinei2009penalty}.

The convergence results are shown in Figures \ref{fig_ex2_1} and \ref{fig_ex2_2}. We see that the numerical results on this example also verifies our theoretical findings. Starting point is set to $x^{i,0}=0$, $y^{j,0}=0$, and $p^{0}=\left(1/3,1/3,1/3\right)$. Figure \ref{fig_ex2_1} demonstrates that all players' P-Langrangian values are decreasing and convergent to finite values. It can be also seen in Figure \ref{fig_ex2_2} that the iterates generated by Algorithm \ref{algorithm1} converge to the equilibrium price vector $\overline{p}=(0.1441,0.5270,0.3289)$ that clears market as well as to the equilibrium productions and consumptions.
	\begin{figure}[H]
		\centering
		\hspace{-0.1in}\begin{subfigure}[b]{0.35\linewidth}
			\includegraphics[width=\linewidth]{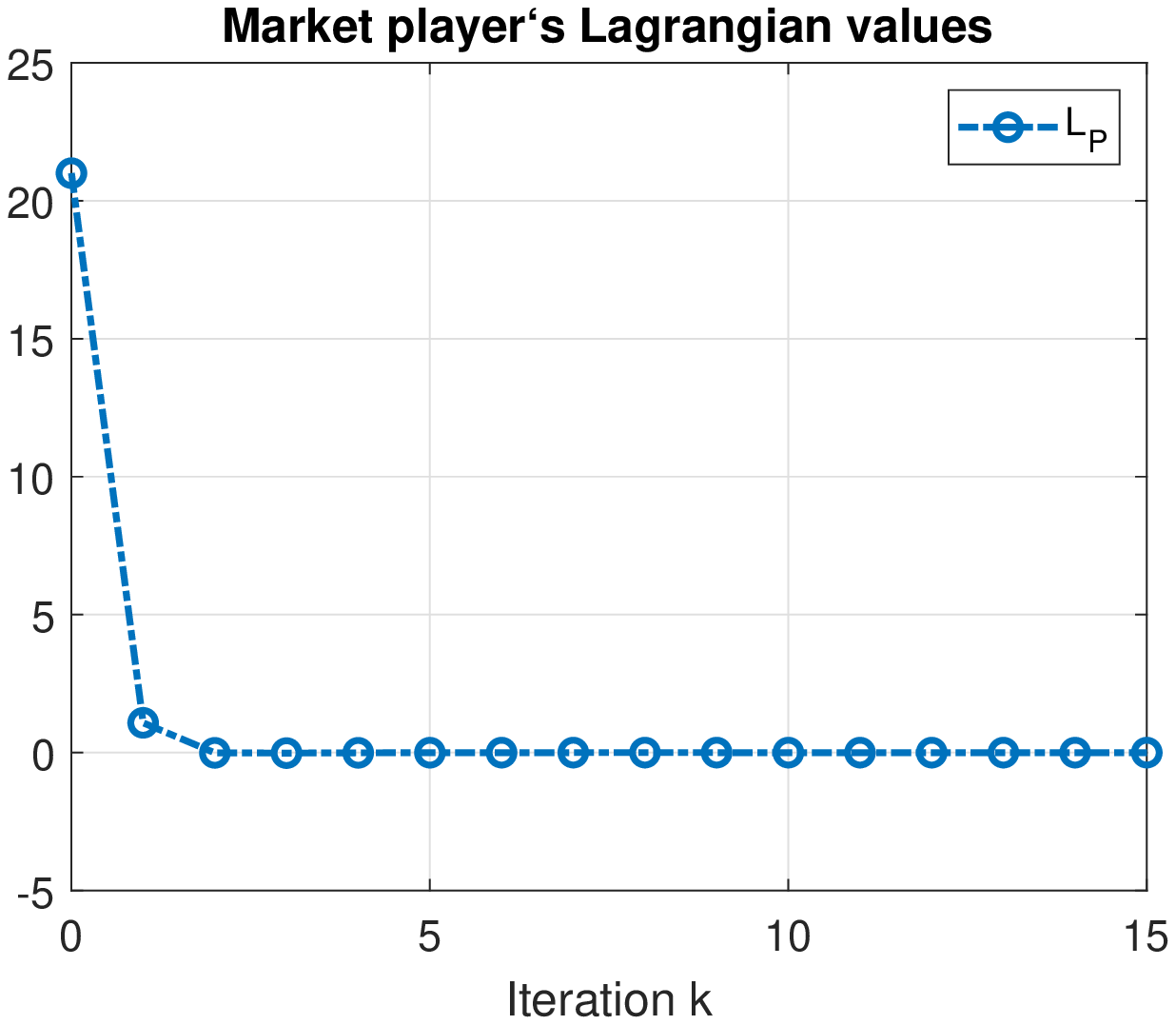}
		\end{subfigure}
		\hspace{-0.2in}\begin{subfigure}[b]{0.35\linewidth}
			\includegraphics[width=\linewidth]{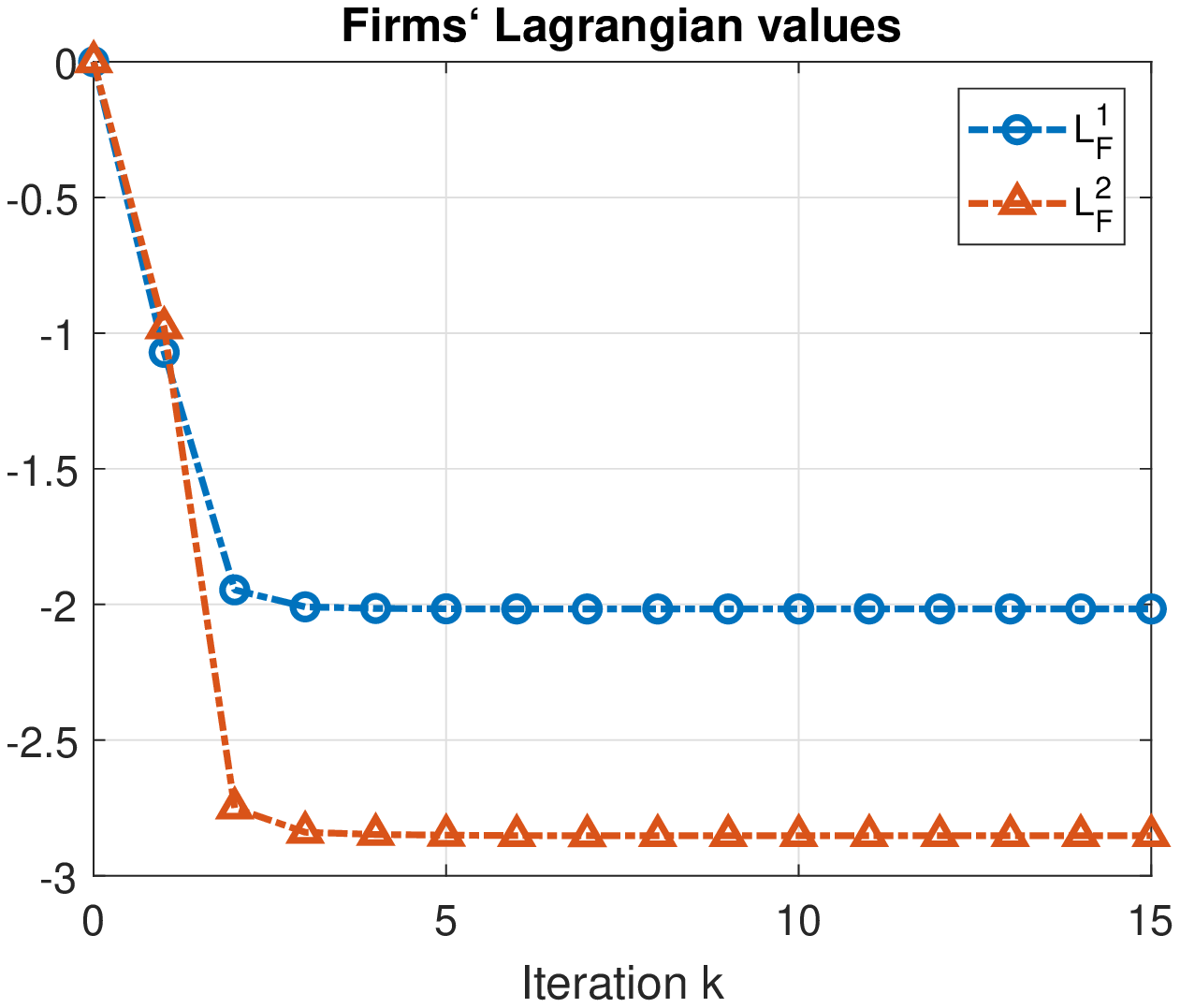}
		\end{subfigure}	
		\hspace{-0.2in}\begin{subfigure}[b]{0.35\linewidth}
			\includegraphics[width=\linewidth]{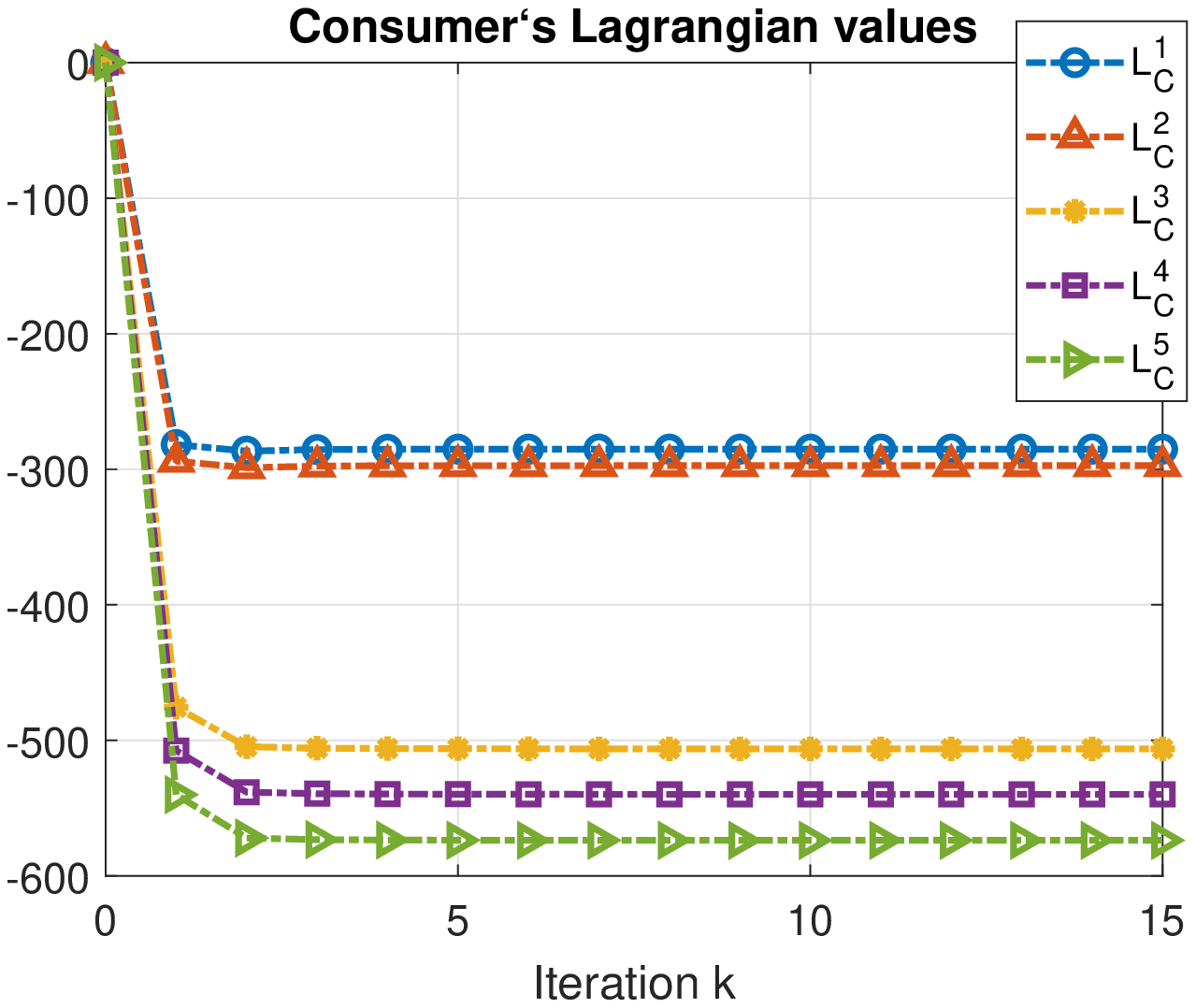}
		\end{subfigure}
		\caption{Convergence of all players' P-Lagrangian values.} \label{fig_ex2_1}
	\end{figure}
\vspace{-0.0in}
	\begin{figure}[H]
		\centering
		\hspace{-0.1in}\begin{subfigure}[b]{0.35\linewidth}
			\includegraphics[width=\linewidth]{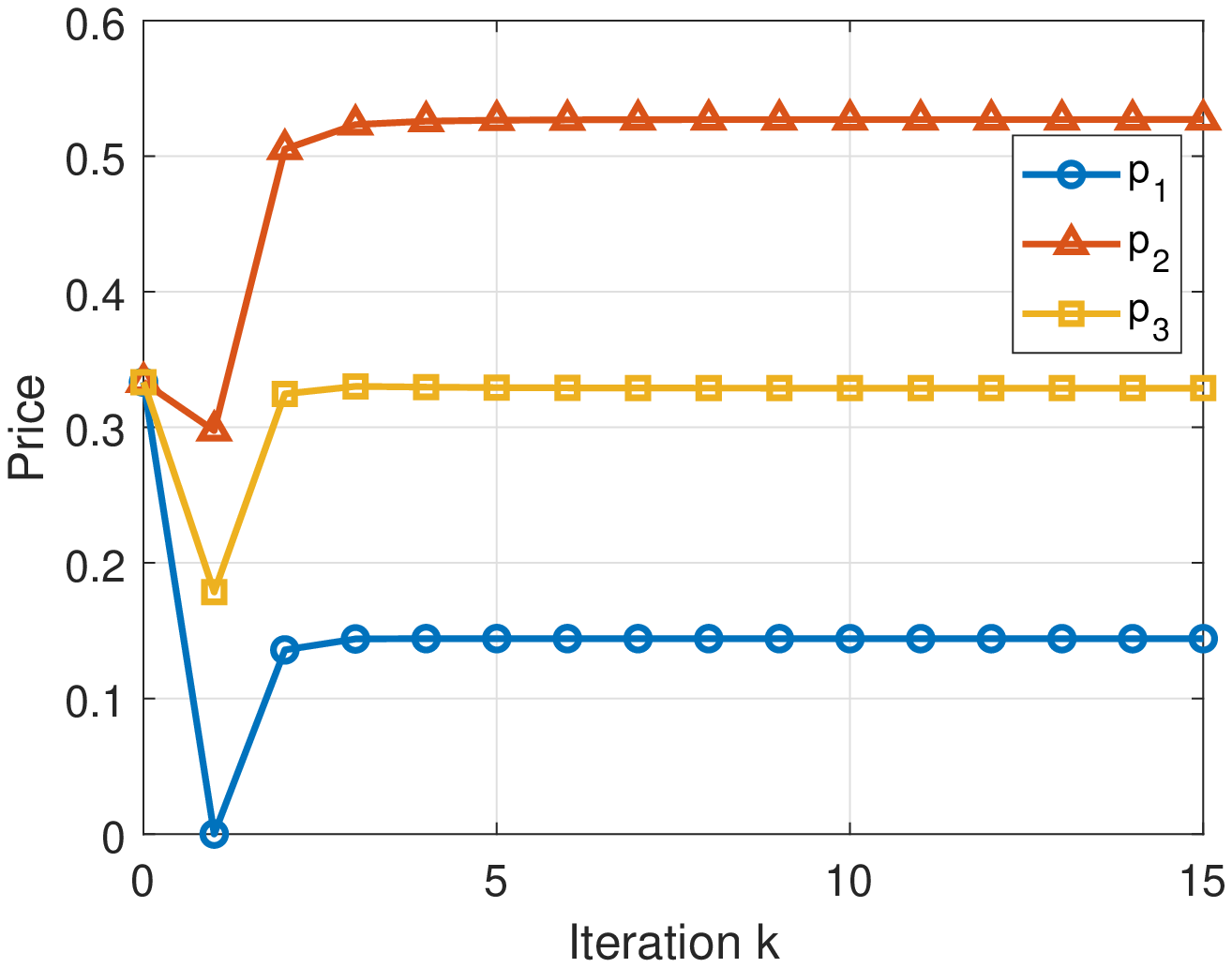}
			\caption{Market player}
		\end{subfigure}
		\hspace{-0.2in}\begin{subfigure}[b]{0.35\linewidth}
			\includegraphics[width=\linewidth]{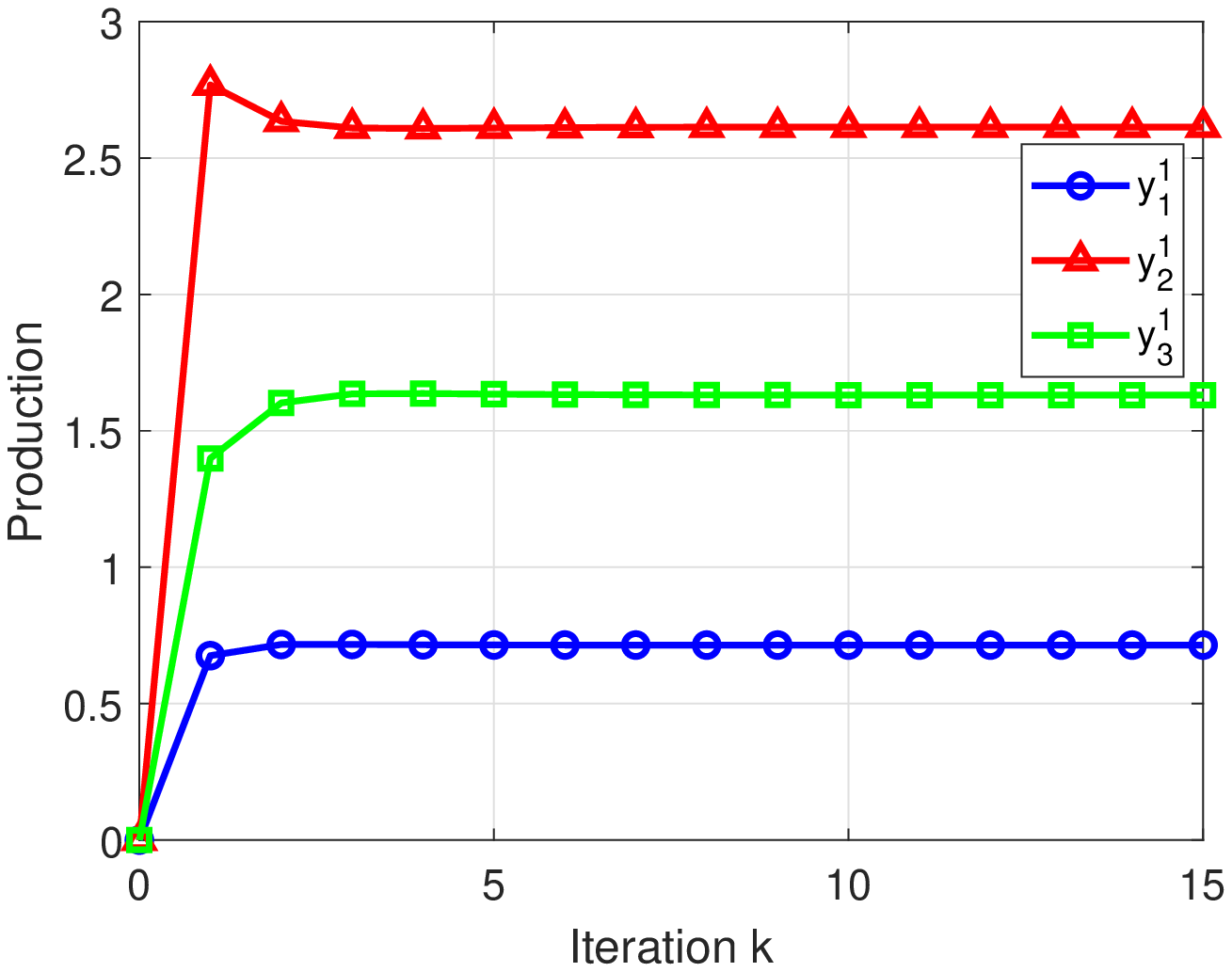}
			\caption{Firm 1}
		\end{subfigure}
		\hspace{-0.2in}\begin{subfigure}[b]{0.35\linewidth}
			\includegraphics[width=\linewidth]{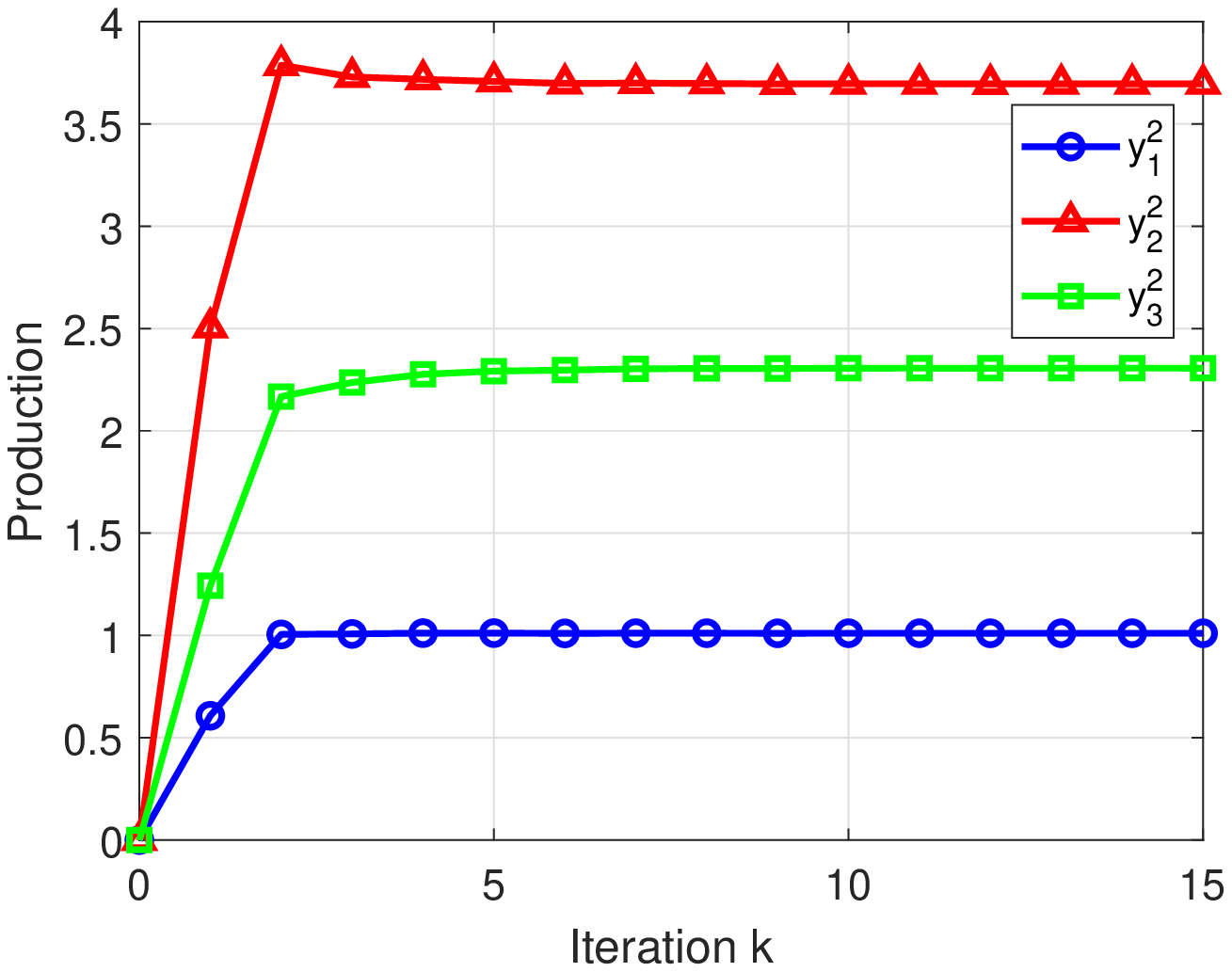}
			\caption{Firm 2}
		\end{subfigure}

	    	\centering
	        \hspace{-0.1in}\begin{subfigure}[b]{0.35\linewidth}
				\includegraphics[width=\linewidth]{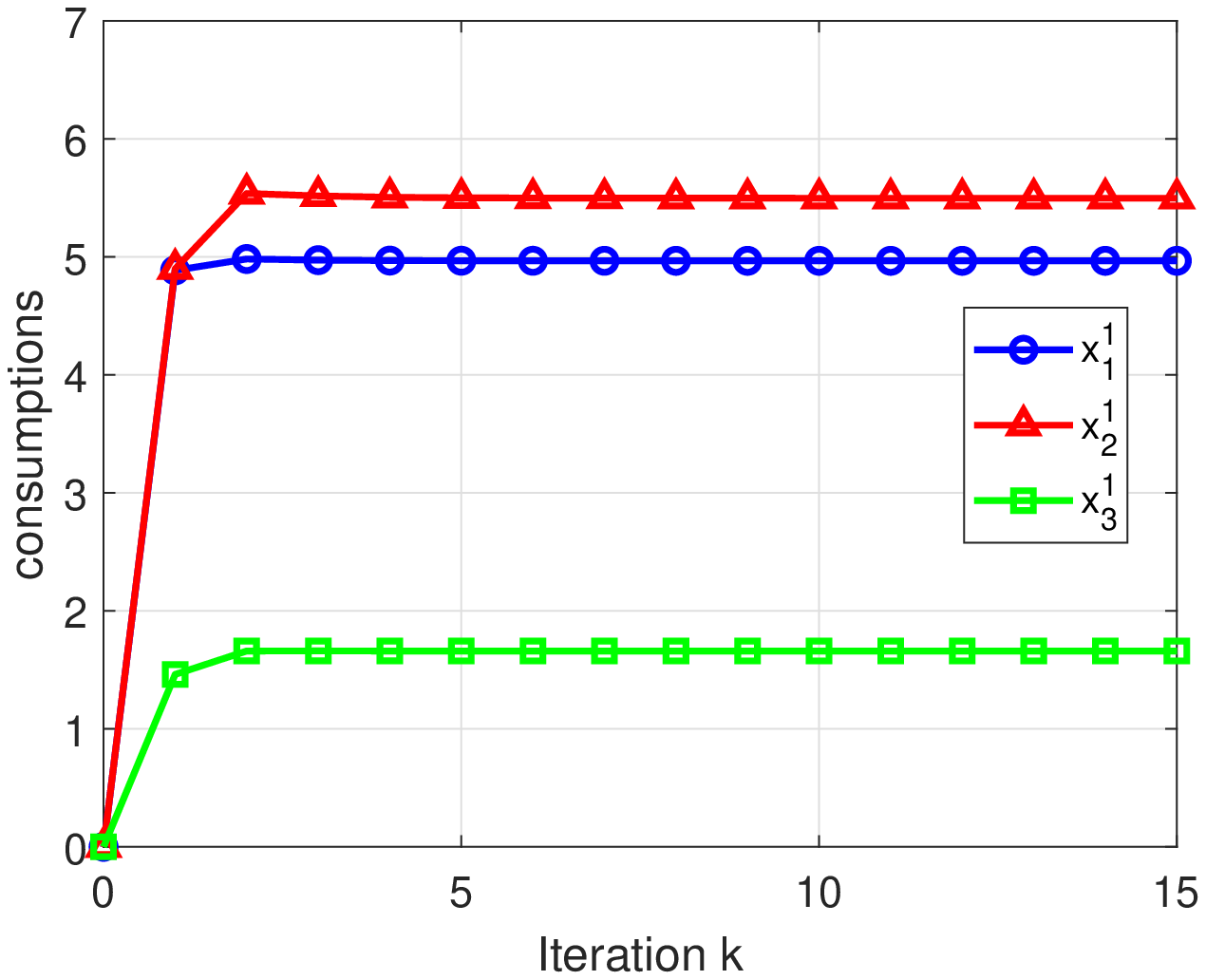}
				\caption{Consumer 1}
			\end{subfigure}
			\hspace{-0.2in}\begin{subfigure}[b]{0.35\linewidth}
				\includegraphics[width=\linewidth]{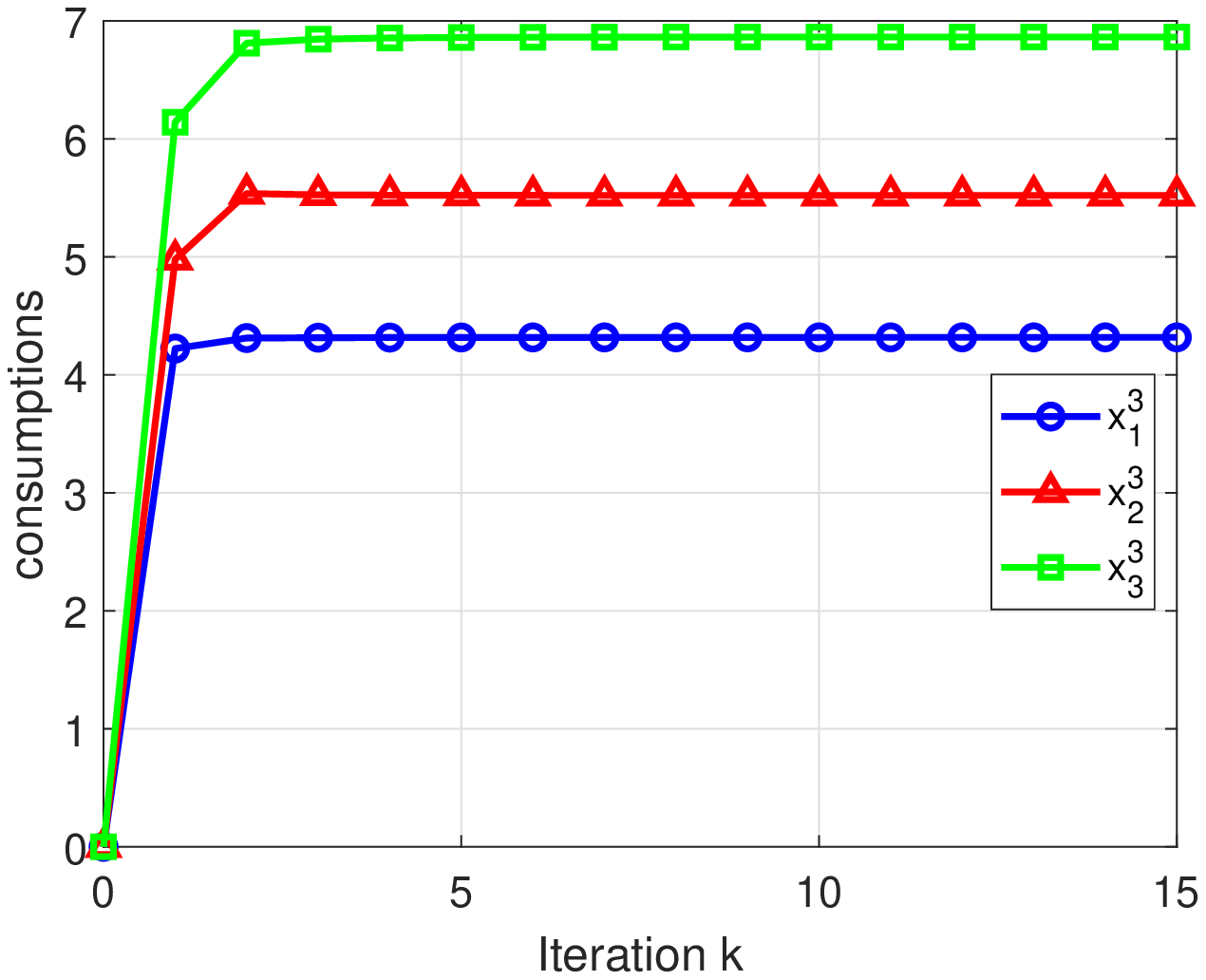}
				\caption{Consumer 3}
			\end{subfigure}	
			\hspace{-0.2in}\begin{subfigure}[b]{0.35\linewidth}
				\includegraphics[width=\linewidth]{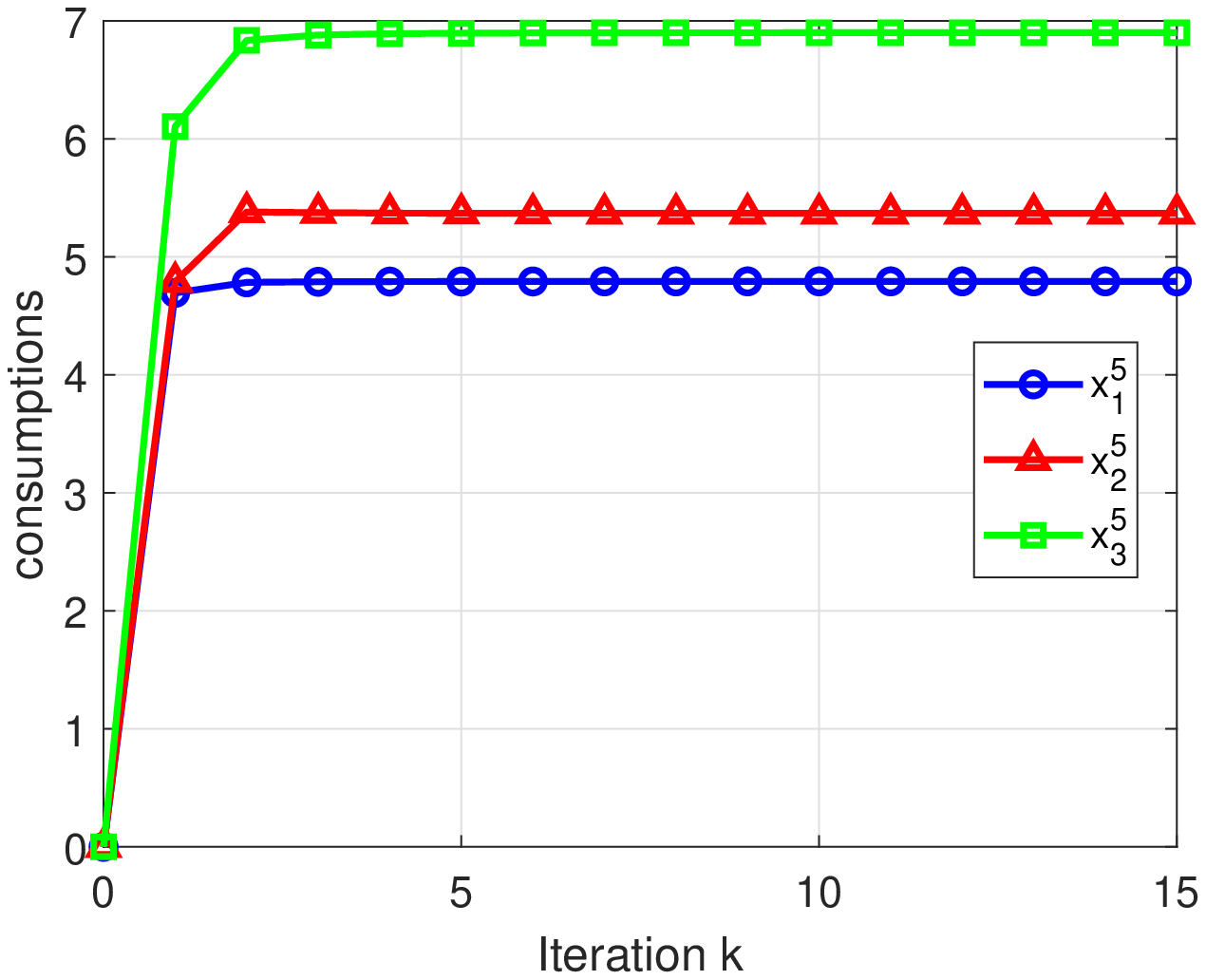}
				\caption{Consumer 5}		
			\end{subfigure}
			\caption{The sequence of decision variables of each player.} \label{fig_ex2_2}
		\end{figure}

\subsection*{Problem A.18 (Electricity market model).}
		
This electricity market model originally proposed by \cite{pang2005quasi}, and further discussed in \cite{nabetani2011parametrized}. There are two companies. Each company has an electricity plant in two out of three possible regions which are represented by the nodes of a graph. The goal is to maximize the profit of the company. The model has 18 variables, but we only present the reduced formulation with 12 variables. The reduction comes from the fact that both companies have plants on only 2 of the 3 nodes. We use the following abbreviations:
\[
\begin{aligned}
S_{1}  =  40-\frac{40}{500}(x_{1}+x_{4}+x_{7}+x_{10}), \
S_{2}  =  35-\frac{35}{400}(x_{2}+x_{5}+x_{8}+x_{11}), \
S_{3}  =  32-\frac{32}{600}(x_{3}+x_{6}+x_{9}+x_{12}).
\end{aligned}
\]
Player 1 has 6 variables $(x_{1},\ldots,x_{6})=(x_{1}^{1},\ldots,x_{6}^{1})$ and minimizes the objective function 
\[
\theta_{1}\left(x\right)=\left(15-S_{1}\right)\left(x_{1}+x_{4}\right)+\left(15-S_{2}\right)\left(x_{2}+x_{5}\right)+\left(15-S_{3}\right)\left(x_{3}+x_{6}\right),
\]
and player 1 has the nonnegativity constraints, $x_{1},\ldots,x_{6} \geq 0$, capacity constraints
\[
x_{1}+x_{2}+x_{3} \leq 100,\quad 
x_{4}+x_{5}+x_{6}  \leq  50,
\]
and coupling constraints $S_{j}-S_{i}\leq1, \ \forall i,j=1,2,3\;\mathrm{with}\;i\neq j$.
		
Player 2 has 6 variables $(x_{7},\ldots,x_{12})=(x_{1}^{2},\ldots,x_{6}^{2})$ and minimizes its objective 
\[
\theta_{2}\left(x\right)=\left(15-S_{1}\right)\left(x_{7}+x_{10}\right)+\left(15-S_{2}\right)\left(x_{8}+x_{11}\right)+\left(15-S_{3}\right)\left(x_{9}+x_{12}\right),
\]
and the nonnegativity constraints, $x_{7},\ldots,x_{12}\geq0$, capacity constraints
\[
x_{7}+x_{8}+x_{9} \leq 100, \quad
x_{10}+x_{11}+x_{12}  \leq  50,
\]
and coupling constraints $S_{j}-S_{i}\leq1, \ \forall i,j=1,2,3\;\mathrm{with}\;i\neq j$.
		
We only report numerical results for player 1 since player 2 has the same results. Figure \ref{fig:ex3} shows that company 1's P-Lagrangian is convergent, and Algorithm \ref{algorithm1} converges to a saddle-point that is a Nash equilibrium. 
	\begin{figure}[H]	
		\centering
		\hspace{-0.2in}	\includegraphics[scale=0.38]{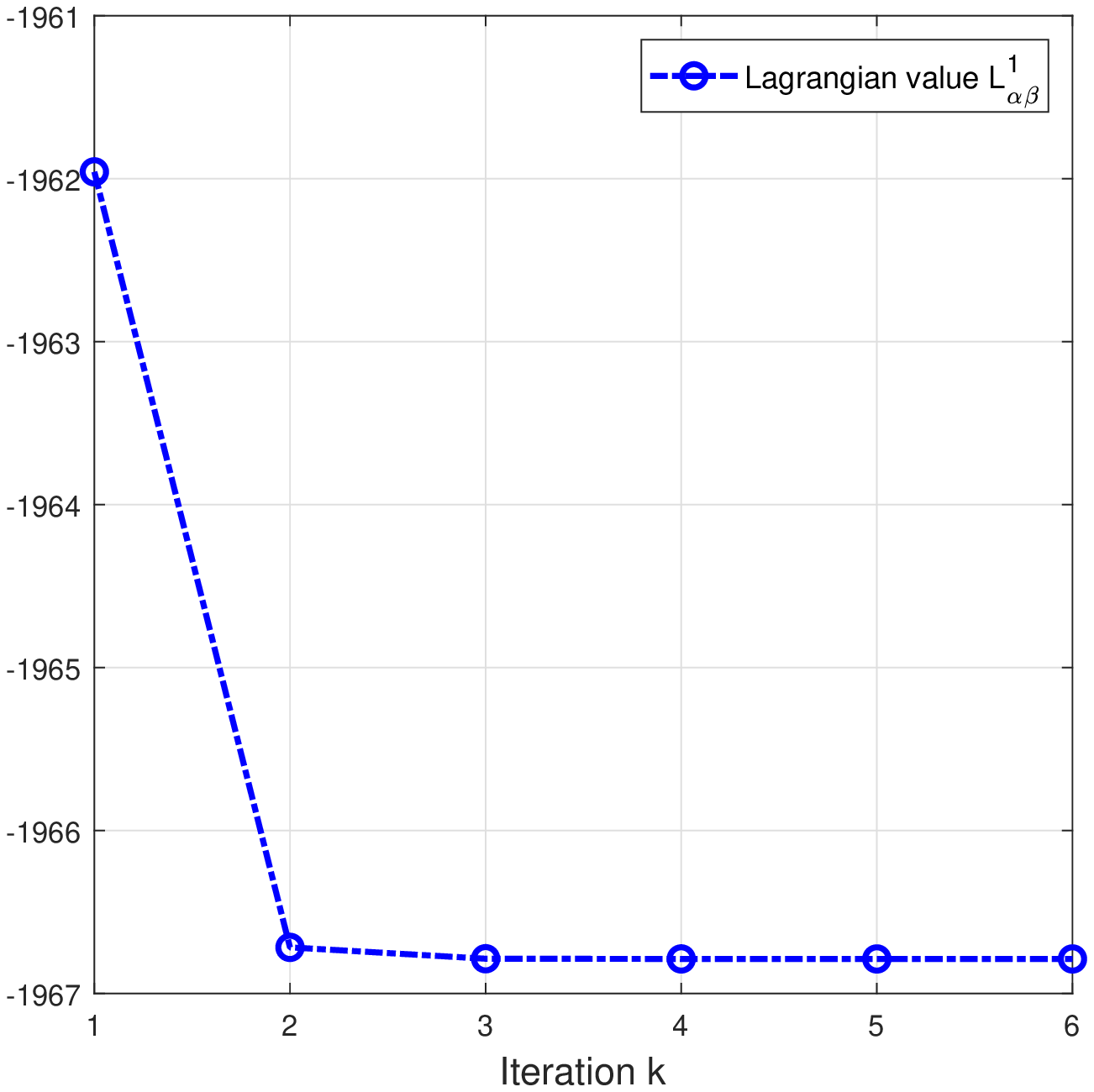}
		\includegraphics[scale=0.38]{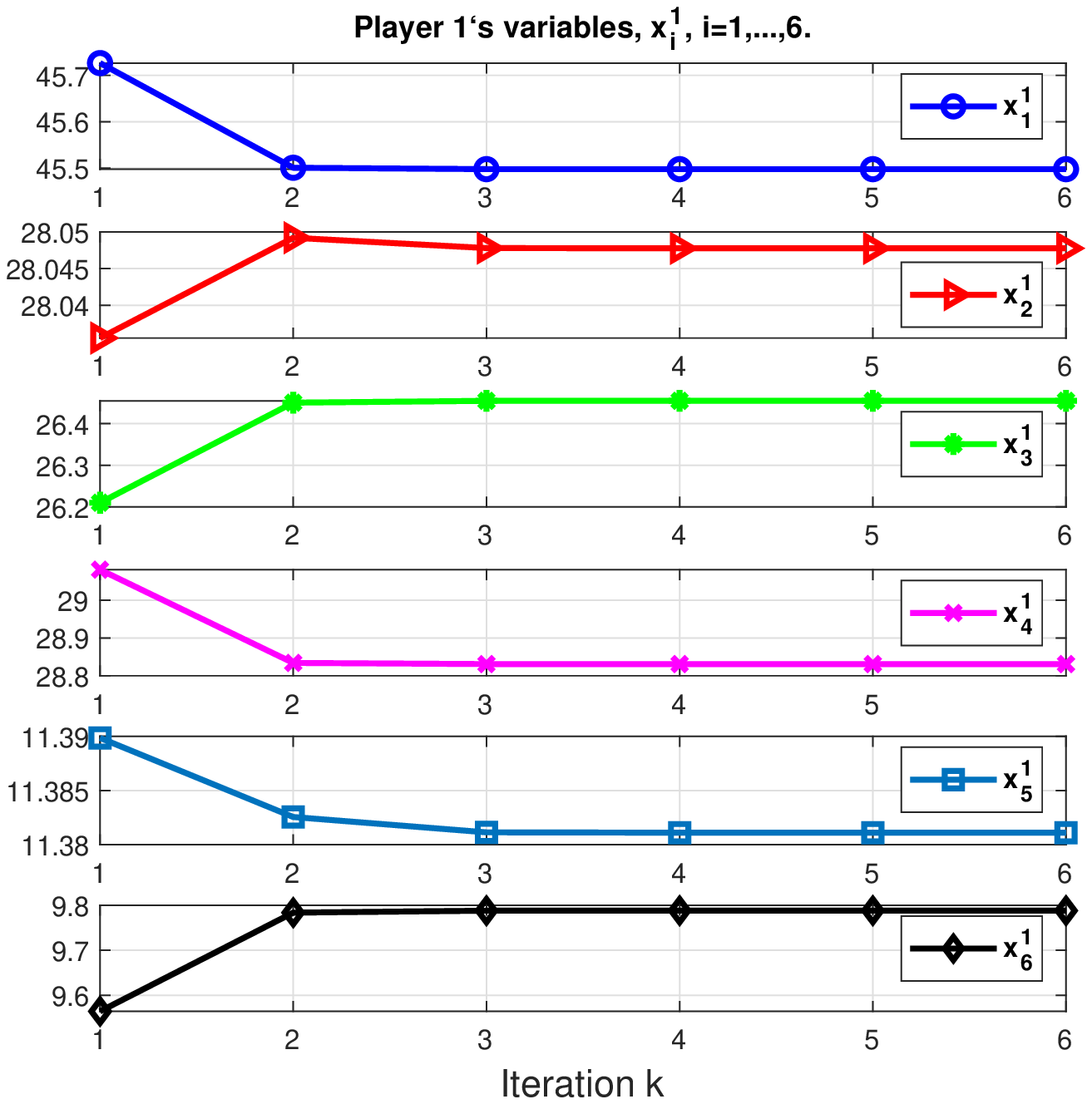}
			\vspace{-0.1in}
			\caption{{\small$\mathbf{x}^{1,0}=0$} converges to  {\small$\mathbf{x}^{1,\ast}=(45.4976,28.0478,26.4547,28.8309,11.3811,9.7880)$.}} \label{fig:ex3}
	\end{figure}
		
\section{Conclusions}

In this paper, we proposed a novel algorithmic framework for computing an equilibrium of generalized continuous Nash games (GNEPs) with theoretical guarantees based on the Proximal-Perturbed Lagrangian function. We have shown that the proposed method has significant advantages over existing approaches in both theoretical and computational perspectives; it does not require any boundedness assumptions and is the first development of an algorithm to solve \emph{general} GNEPs in a \emph{distributed} manner. The numerical results supported our theoretical findings. Possible future research is to extend our methodology to compute equilibria in stochastic Nash games with coupling constraints, which will result in a broader application domain.

\begin{acks}
The author would like to express his deep gratitude to John Birge for numerous insightful discussions and thoughtful suggestions on this work. The author would also like to thank the three anonymous reviewers for their valuable comments and suggestions that helped improve the paper.
\end{acks}

\bibliographystyle{ACM-Reference-Format}
\bibliography{references.bib}

\appendix
\begin{appendices}

\section*{\Large{Appendix}}
\vspace{0.1in}

\section{Proofs of Theorems in Section \ref{sec2}}

Before studying the relation between a saddle point of the P-Lagrangian \eqref{eq:MAL} and an equilibrium of the GNEP \eqref{eq:ef}, let us observe the following properties of $\mathcal{L}_{\alpha\beta}^{\nu}\left(x^{\nu},\boldsymbol{x}^{-\nu},z^{\nu},\lambda^{\nu},\mu^{\nu}\right)$.

\renewcommand\thetheorem{(a)}
\begin{observation}\label{ob_a}
Notice that the inner minimization in (\ref{eq:modfiedLagrangeDual}) can be split into two parts as follows:
\begin{multline}
    \underset{\lambda^{\nu}\in\mathbb{R}_{+}^{m_{\nu}},\mu^{\nu}\in\mathbb{R}^{m_{\nu}}}{\mathrm{max}}  
    \left\{ \underset{x^{\nu}\in\mathcal{X}_{\nu}}{\mathrm{min}} \left\lbrace \theta_{\nu}\left(x^{\nu},x^{-\nu}\right) + \left(\lambda^{\nu}\right)^{T}g^{\nu}\left(x^{\nu},x^{-\nu}\right) \right\rbrace \right.\\
    \left. + \underset{z^{\nu}\in\mathbb{R}^{m_{\nu}}}{\mathrm{min}} \left\lbrace -\left(\lambda^{\nu}-\mu^{\nu}\right)^T z^{\nu}+\frac{\alpha_{\nu}}{2}\left\| z^{\nu}\right\|^2 \right\rbrace - \frac{\beta_{\nu}}{2}\left\| \lambda^{\nu}-\mu^{\nu}\right\|^2 \right\}. \notag
\end{multline}
Denote by $z^{\nu} \left(\lambda^{\nu},\mu^{\nu}\right)$ as a unique solution of the problem, $\underset{z^{\nu}\in\mathbb{R}^{m_{\nu}}}{\mathrm{min}}\left\{ -\left(\lambda^{\nu}-\mu^{\nu}\right)^T z^{\nu} + \frac{\alpha_{\nu}}{2}\left\| z^{\nu} \right\|^2\right\}$ for given $\left(\lambda^{\nu},\mu^{\nu}\right)$. If we  minimize $\left\{ -\left(\lambda^{\nu}-\mu^{\nu}\right)^T z^{\nu}+\frac{\alpha_{\nu}}{2}\left\| z^{\nu} \right\|^2 \right\}$ with respect to $z^{\nu}$, we have
\begin{displaymath}
	z^{\nu}\left(\lambda^{\nu},\mu^{\nu}\right) = \frac{\lambda^{\nu}-\mu^{\nu}}{\alpha_{\nu}} \left(\Longleftarrow \left(\mu^{\nu}-\lambda^{\nu}\right)+\alpha_{\nu}z^{\nu}=0\right).
\end{displaymath}	
Recall that based on the above optimality condition for $z^{\nu}$, we can add a quadratic (regularization) term $-\frac{\beta_{\nu}}{2}\left\Vert \lambda^{\nu}-\mu^{\nu}\right\Vert^{2}$ to make the Lagrangian strongly concave in the multipliers $\lambda^{\nu}$ (for fixed $\mu^{\nu}$) and in  $\mu^{\nu}$ (for fixed $\lambda^{\nu}$) since it vanishes at the unique solution $z^{\nu,\ast}=0$. Substituting $z^{\nu}\left(\lambda^{\nu},\mu^{\nu}\right)$ into $\mathcal{L}_{\alpha\beta}^{\nu}\left(x^{\nu},{x}^{-\nu},z^{\nu},\lambda^{\nu},\mu^{\nu}\right)$, $\mathcal{L}_{\alpha\beta}^{\nu}$ reduces to 
\begin{equation}\label{eq:reducedAL}
	\mathcal{L}_{\alpha\beta}^{\nu}\left(x^{\nu},{x}^{-\nu},z^{\nu}\left(\lambda^{\nu},\mu^{\nu}\right),\lambda^{\nu},\mu^{\nu}\right)
	=\theta_{\nu}\left(x^{\nu},{x}^{-\nu}\right)+\left(\lambda^{\nu}\right)^{T}g^{\nu}\left(x^{\nu},{x}^{-\nu}\right)-\frac{1+\alpha_{\nu}\beta_{\nu}}{2\alpha_{\nu}}\left\Vert \lambda^{\nu}-\mu^{\nu}\right\Vert ^{2}.
\end{equation}
Then the P-Lagrangian dual problem can be expressed as 
\begin{equation}
	\underset{\lambda^{\nu} \in \mathbb{R}_{+}^{m_{\nu}},\mu^{\nu}\in\mathbb{R}^{m_{\nu}}}{\mathrm{max}}\left\{ \mathcal{D}_{\alpha\beta}^{\nu}\left(\lambda^{\nu},\mu^{\nu}\right) := D_{0}^{\nu}\left(\lambda^{\nu}\right)-\frac{1+\alpha_{\nu}\beta_{\nu}}{2\alpha_{\nu}}\left\Vert \lambda^{\nu}-\mu^{\nu}\right\Vert ^{2}\right\},
\end{equation}
where $D_{0}^{\nu}\left(\lambda^{\nu}\right)=\underset{x^{\nu}\in \mathcal{X}_{\nu}}{\mathrm{min}}\left\{ \theta_{\nu}\left(x^{\nu},x^{-\nu}\right)+\left(\lambda^{\nu}\right)^{T}g^{\nu}\left(x^{\nu},x^{-\nu}\right)\right\}$, which is identical to the standard dual function associated with the original problem \eqref{eq:of}. Thus the P-Lagrangian dual function $\mathcal{D}_{\alpha\beta}^{\nu}\left(\lambda^{\nu},\mu^{\nu}\right)$ is maximized jointly in $\lambda^{\nu}$ and $\mu^{\nu}$ if and only if $\lambda^{\nu}$ maximizes $D_{0}^{\nu}\left(\lambda^{\nu}\right)$ and $\lambda^{\nu}=\mu^{\nu}$. This implies that the multiplier  $\lambda^{\nu,\ast}$ for the constraint $g^{\nu}\left(x^{\nu},x^{-\nu}\right)-z^{\nu}\leq0$ in extended problem $\eqref{eq:ef}$ is precisely to the multiplier $\eta^{\nu,\ast}$ for the constraint $g^{\nu}\left(x^{\nu},x^{-\nu}\right)\leq0$ in problem \eqref{eq:of}.
\end{observation}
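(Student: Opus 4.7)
The plan is to leverage the reduced-dual analysis carried out in Observation~(a) rather than verifying the two saddle-point inequalities directly from the KKT data. Since the inner minimization over $z^\nu$ in $\mathcal{D}_{\alpha\beta}^{\nu}$ admits the closed form $z^\nu(\lambda^\nu,\mu^\nu)=(\lambda^\nu-\mu^\nu)/\alpha_\nu$, the dual function collapses to $\mathcal{D}_{\alpha\beta}^{\nu}(\lambda^\nu,\mu^\nu)=D_0^{\nu}(\lambda^\nu)-\tfrac{1+\alpha_\nu\beta_\nu}{2\alpha_\nu}\|\lambda^\nu-\mu^\nu\|^2$. Joint maximization over $(\lambda^\nu,\mu^\nu)\in\mathbb{R}_+^{m_\nu}\times\mathbb{R}^{m_\nu}$ therefore decouples: the quadratic penalty vanishes exactly on the diagonal $\mu^\nu=\lambda^\nu$, and what remains is to maximize the classical dual $D_0^{\nu}$ over $\lambda^\nu\ge 0$. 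This identifies, a priori, the form any saddle-point multiplier pair must take, namely $(\lambda^{\nu,\ast},\mu^{\nu,\ast})=(\widetilde\lambda^\nu,\widetilde\lambda^\nu)$ with $\widetilde\lambda^\nu\in\arg\max_{\lambda\ge 0} D_0^{\nu}(\lambda)$.

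Next I would use the KKT hypothesis to produce such a $\widetilde\lambda^\nu$. Under Assumption~\ref{assumption1}, problem $\mathrm{P}_\nu(x^{-\nu,\ast})$ is convex in $x^\nu$, so the KKT conditions \eqref{eq:KKT} are both necessary and sufficient and, combined with the underlying constraint qualification, yield strong duality for the convex subproblem at fixed $x^{-\nu,\ast}$. This places the given multiplier $\eta^{\nu,\ast}$ in $\arg\max_{\lambda\ge 0} D_0^{\nu}(\lambda)$ with $D_0^{\nu}(\eta^{\nu,\ast})=\theta_\nu(\mathbf{x}^\ast)$. I would then declare the candidate saddle point by setting $\lambda^{\nu,\ast}=\mu^{\nu,\ast}=\eta^{\nu,\ast}$ and, per the elimination in Observation~(a), $z^{\nu,\ast}=(\lambda^{\nu,\ast}-\mu^{\nu,\ast})/\alpha_\nu=0$.

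To verify \eqref{eq:lem_eq_to_saddle_sp}, I would read each inequality off the reduced structure. For the right-hand inequality, freezing $(\lambda^{\nu,\ast},\mu^{\nu,\ast})=(\eta^{\nu,\ast},\eta^{\nu,\ast})$ makes the $z^\nu$-part of $\mathcal{L}_{\alpha\beta}^{\nu}$ reduce to $\tfrac{\alpha_\nu}{2}\|z^\nu\|^2$, uniquely minimized at $z^{\nu,\ast}=0$, after which the remaining problem is $\min_{x^\nu\in\mathcal{X}_\nu} L_0^{\nu}(x^\nu,x^{-\nu,\ast},\eta^{\nu,\ast})$; convexity in $x^\nu$ together with the stationarity condition in \eqref{eq:KKT} certifies $x^{\nu,\ast}$ as a minimizer, yielding the right-hand bound $\mathcal{L}_{\alpha\beta}^{\nu}(\mathbf{x}^\ast,z^{\nu,\ast},\lambda^{\nu,\ast},\mu^{\nu,\ast})=\theta_\nu(\mathbf{x}^\ast)$ via complementarity. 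For the left-hand inequality, freezing $(\mathbf{x}^\ast,z^{\nu,\ast}=0)$ leaves a jointly concave function of $(\lambda^\nu,\mu^\nu)$ whose value at $(\eta^{\nu,\ast},\eta^{\nu,\ast})$ equals $D_0^{\nu}(\eta^{\nu,\ast})=\theta_\nu(\mathbf{x}^\ast)$; by the reduced-dual formula this coincides with $\max_{\lambda\ge 0,\mu}\mathcal{D}_{\alpha\beta}^{\nu}(\lambda,\mu)$, giving the desired upper bound on $\mathcal{L}_{\alpha\beta}^{\nu}(\mathbf{x}^\ast,z^{\nu,\ast},\lambda^\nu,\mu^\nu)$ for every feasible $(\lambda^\nu,\mu^\nu)$.

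The main obstacle is the strong-duality step that certifies $\eta^{\nu,\ast}$ as a maximizer of $D_0^{\nu}$: this must be justified from Assumption~\ref{assumption1} and the CQ implicit in \eqref{eq:KKT}, together with a careful treatment of the constraint $\lambda^\nu\ge 0$ when passing from the joint-maximizer characterization of the reduced dual back to the saddle-point identity in the original $(\lambda^\nu,\mu^\nu,z^\nu)$ variables. Once that linkage is established, the KKT complementarity $(\eta^{\nu,\ast})^T g^\nu(\mathbf{x}^\ast)=0$ makes all terms align and \eqref{eq:lem_eq_to_saddle_sp} follows from the reduced-dual formula without any further computation.
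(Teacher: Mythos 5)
Your proposal is aimed at the wrong statement. What is to be established here is Observation~(a) itself: that the inner minimization in \eqref{eq:modfiedLagrangeDual} splits over $x^{\nu}$ and $z^{\nu}$, that the $z^{\nu}$-subproblem has the unique minimizer $z^{\nu}(\lambda^{\nu},\mu^{\nu})=(\lambda^{\nu}-\mu^{\nu})/\alpha_{\nu}$, that substituting this back collapses $\mathcal{L}_{\alpha\beta}^{\nu}$ to the reduced form \eqref{eq:reducedAL} with the combined coefficient $\tfrac{1+\alpha_{\nu}\beta_{\nu}}{2\alpha_{\nu}}$ (the $-\tfrac{1}{2\alpha_{\nu}}\|\lambda^{\nu}-\mu^{\nu}\|^{2}$ coming from the partial minimization in $z^{\nu}$ plus the explicit proximal term $-\tfrac{\beta_{\nu}}{2}\|\lambda^{\nu}-\mu^{\nu}\|^{2}$), and that the resulting dual $\mathcal{D}_{\alpha\beta}^{\nu}(\lambda^{\nu},\mu^{\nu})=D_{0}^{\nu}(\lambda^{\nu})-\tfrac{1+\alpha_{\nu}\beta_{\nu}}{2\alpha_{\nu}}\|\lambda^{\nu}-\mu^{\nu}\|^{2}$ is jointly maximized exactly when $\mu^{\nu}=\lambda^{\nu}$ and $\lambda^{\nu}$ maximizes $D_{0}^{\nu}$. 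In the paper these claims are proved by the short chain of elementary computations contained in the observation itself (the decomposition of $\lambda^{T}(g^{\nu}-z)+\mu^{T}z$ as $\lambda^{T}g^{\nu}-(\lambda-\mu)^{T}z$, the first-order condition $-(\lambda^{\nu}-\mu^{\nu})+\alpha_{\nu}z^{\nu}=0$, the evaluation of the quadratic at its minimizer, and the observation that for fixed $\lambda^{\nu}$ the optimal $\mu^{\nu}$ is $\lambda^{\nu}$). None of these computations appear in your write-up: you cite them as "the reduced-dual analysis carried out in Observation~(a)" and take them as given.

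What you actually sketch is a proof of Theorem~\ref{lem_eq_to_saddle} (the inequality \eqref{eq:lem_eq_to_saddle_sp}), built on top of the very facts you were asked to verify. Even read on its own terms, that argument circularly assumes the conclusion of the present statement and defers its genuinely load-bearing step (that the KKT multiplier $\eta^{\nu,\ast}$ lies in $\arg\max_{\lambda\geq 0}D_{0}^{\nu}(\lambda)$ with $D_{0}^{\nu}(\eta^{\nu,\ast})=\theta_{\nu}(\mathbf{x}^{\ast})$) to an unproven strong-duality claim that you yourself flag as "the main obstacle." To repair the submission, carry out the computation: separate the inner minimization, solve the strongly convex quadratic in $z^{\nu}$ explicitly, substitute to obtain \eqref{eq:reducedAL}, and then argue the if-and-only-if characterization of the joint maximizers of $\mathcal{D}_{\alpha\beta}^{\nu}$, from which the identification $\lambda^{\nu,\ast}=\eta^{\nu,\ast}$ follows.
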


\renewcommand\thetheorem{(b)}
\begin{observation}\label{ob_b}
If we maximizing $\mathcal{L}_{\alpha\beta}^{\nu}(x^{\nu},\boldsymbol{x}^{-\nu},z^{\nu},\lambda^{\nu},\mu^{\nu})$ with respect to $\mu^{\nu}$, we get
\[
	\nabla_{\mu^{\nu}}\mathcal{L}_{\alpha\beta}^{\nu}\left(x^{\nu},\boldsymbol{x}^{-\nu},z^{\nu},\lambda^{\nu},\mu^{\nu}\right)=z^{\nu}+\beta_{\nu}\left(\lambda^{\nu}-\mu^{\nu}\right)=0,
\]
which, together with the fact that  $\lambda^{\nu,\ast}=\mu^{\nu,\ast}$, implies that $z^{\nu}=0$ at the maximizers $\left(\lambda^{\nu,\ast},\mu^{\nu,\ast}\right)$ and $\beta_{\nu}>0$. 
\end{observation}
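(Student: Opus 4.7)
The plan is straightforward because Observation (b) is essentially a first-order optimality condition combined with the dual-optimality fact already derived in Observation (a). First I would compute the partial gradient of $\mathcal{L}_{\alpha\beta}^{\nu}$ in \eqref{eq:MAL} with respect to $\mu^{\nu}$: only the two terms $(\mu^{\nu})^{T} z^{\nu}$ and $-\frac{\beta_{\nu}}{2}\|\lambda^{\nu}-\mu^{\nu}\|^{2}$ involve $\mu^{\nu}$, so a direct differentiation yields
\[
\nabla_{\mu^{\nu}}\mathcal{L}_{\alpha\beta}^{\nu}\left(x^{\nu},x^{-\nu},z^{\nu},\lambda^{\nu},\mu^{\nu}\right) = z^{\nu} + \beta_{\nu}(\lambda^{\nu}-\mu^{\nu}),
\]
which is precisely the expression asserted in the statement.

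Next I would exploit the fact that $\beta_{\nu}>0$ makes $\mathcal{L}_{\alpha\beta}^{\nu}$ strongly concave in $\mu^{\nu}\in\mathbb{R}^{m_{\nu}}$ for any fixed $\lambda^{\nu}$; since $\mu^{\nu}$ is unconstrained, the first-order condition $\nabla_{\mu^{\nu}}\mathcal{L}_{\alpha\beta}^{\nu}=0$ is both necessary and sufficient for the (unique) maximizer. Consequently, at any joint maximizing pair $(\lambda^{\nu,\ast},\mu^{\nu,\ast})$ the identity $z^{\nu} + \beta_{\nu}(\lambda^{\nu,\ast}-\mu^{\nu,\ast}) = 0$ must hold.

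Finally, I would invoke the conclusion of Observation (a), which shows that joint maximization of the reduced dual $D_{0}^{\nu}(\lambda^{\nu}) - \frac{1+\alpha_{\nu}\beta_{\nu}}{2\alpha_{\nu}}\|\lambda^{\nu}-\mu^{\nu}\|^{2}$ forces $\lambda^{\nu,\ast}=\mu^{\nu,\ast}$. Substituting this equality into the first-order identity collapses the regularization term $\beta_{\nu}(\lambda^{\nu,\ast}-\mu^{\nu,\ast})$ to zero, leaving $z^{\nu}=0$, exactly the claimed consequence.

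The only point that requires care -- and the closest thing to an obstacle here -- is ensuring the assumption $\beta_{\nu}>0$ is invoked for both roles: first to justify the strong concavity that makes the first-order condition characterize the maximizer uniquely, and second to make the cancellation in the final substitution nontrivial (if $\beta_{\nu}$ were allowed to vanish, the regularization term would disappear from $\mathcal{L}_{\alpha\beta}^{\nu}$ and the conclusion $z^{\nu}=0$ would need a separate argument via linearity of $(\mu^{\nu})^{T}z^{\nu}$ in $\mu^{\nu}$). Beyond this bookkeeping, the argument is a direct gradient computation followed by a one-line substitution.
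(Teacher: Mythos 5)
Your proposal is correct and follows essentially the same route as the paper: a direct computation of $\nabla_{\mu^{\nu}}\mathcal{L}_{\alpha\beta}^{\nu}=z^{\nu}+\beta_{\nu}(\lambda^{\nu}-\mu^{\nu})$, the unconstrained first-order condition (justified by strong concavity in $\mu^{\nu}$ since the Hessian is $-\beta_{\nu}I$), and substitution of $\lambda^{\nu,\ast}=\mu^{\nu,\ast}$ from Observation~(a) to conclude $z^{\nu}=0$. Your extra remark on the dual role of $\beta_{\nu}>0$ is a harmless elaboration of what the paper leaves implicit.
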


Using the Observations \ref{ob_a} and \ref{ob_b}, we now establish the equivalence between a saddle point of $\mathcal{L}_{\alpha\beta}^{\nu}$ and an equilibrium of the GNEP \eqref{eq:of}. 

\subsection{Proof of Theorem \ref{lem_saddle_to_eq}} \label{A.1}
\begin{proof}
Using the reduced form of P-Lagrangian \eqref{eq:reducedAL}, we have
\begin{equation}
		\begin{aligned}\mathcal{L}_{\alpha\beta}^{\nu}\left(\mathbf{x}^{\ast},z^{\nu,\ast}\left(\lambda^{\nu},\mu^{\nu}\right),\lambda^{\nu},\mu^{\nu}\right) & =\theta_{\nu}\left(\mathbf{x}^{\ast}\right)+\left(\lambda^{\nu}\right)^{T}g^{\nu}\left(\mathbf{x}^{\ast}\right)-\frac{1+\alpha_{\nu}\beta_{\nu}}{2\alpha_{\nu}}\left\Vert \lambda^{\nu}-\mu^{\nu}\right\Vert ^{2}\\
		& \leq\mathcal{L}_{\alpha\beta}^{\nu}\left(\mathbf{x}^{\ast},z^{\nu,\ast}\left(\lambda^{\nu,\ast},\mu^{\nu,\ast}\right),\lambda^{\nu,\ast},\mu^{\nu,\ast}\right).
		\end{aligned}
		\label{eq:lem_saddle_to_eq_1}
\end{equation}
First, we prove that $\mathbf{x}^{\ast}=(x^{\nu,\ast},\boldsymbol{x}^{-\nu,\ast})$ is feasible for problem \eqref{eq:of}. Suppose by contradiction that $\mathbf{x}^{\ast}$ is infeasible, i.e., $g_{i}^{\nu}\left(\mathbf{x}^{\ast}\right)>0$ for some $i$. Then there exist some $\lambda_{i}^{\nu}$ such that $\lambda_{i}^{\nu}g_{i}^{\nu}\left(\mathbf{x}^{\ast}\right)\rightarrow\infty$ as $\lambda_{i}^{\nu}\rightarrow\infty$. This implies $\mathcal{L}_{\alpha\beta}^{\nu}\left(\mathbf{x}^{\ast},z^{\nu,\ast},\lambda^{\nu},\mu^{\nu}\right)\rightarrow\infty$ by taking the limit as $\lambda_{i}^{\nu}\rightarrow\infty$ with $\lambda_{i}^{\nu}=\mu_{i}^{\nu}$ to maximize the left-hand side of the first inequality in \eqref{eq:lem_saddle_to_eq_1}, which is a contradiction with the first inequality in \eqref{eq:saddle}. 
Therefore $g_{i}^{\nu}(\mathbf{x}^{\ast})\leq0$ for all $i=1,\ldots,m_{\nu}$. By the definition  $\mathcal{L}_{\alpha\beta}^{\nu}\left(\mathbf{x}^{\ast},z^{\nu,\ast},\lambda^{\nu,\ast},\mu^{\nu,\ast}\right)=\mathrm{sup}_{\lambda^{\nu}\geq0,\mu^{\nu}}\mathcal{L}_{\alpha\beta}^{\nu}\left(\mathbf{x}^{\ast},z^{\nu,\ast},\lambda^{\nu},\mu^{\nu}\right)$ with the fact that $g^{\nu}(\mathbf{x}^{\ast})\leq0$ and  $\lambda^{\nu,\ast}\geq0$, we have $(\lambda^{\nu,\ast})^{T}g^{\nu}(\mathbf{x}^{\ast})=0$ and $\lambda^{\nu,\ast}=\mu^{\nu,\ast}$. It  follows that
$$\mathcal{L}_{\alpha\beta}^{\nu}\left(\mathbf{x}^{\ast},z^{\nu,\ast},\lambda^{\nu,\ast},\mu^{\nu,\ast}\right)=\theta_{\nu}\left(\mathbf{x}^{\ast}\right).	\label{eq:lem_saddle_to_eq_2}$$
	
Next, let $x^{\nu}\in X_{\nu}\left(\boldsymbol{x}^{-\nu,\ast}\right)$ be any feasible solution to problem \eqref{eq:of}. For any feasible $x^{\nu}$ and $\lambda_{i}^{\nu}\geq0$, since $g_{i}^{\nu}\left(x^{\nu},\boldsymbol{x}^{-\nu,\ast}\right)\leq0$, we have \begin{equation} \label{eq:lem_saddle_to_eq_3}
	\left(\lambda^{\nu}\right)^{T}g^{\nu}\left(x^{\nu},\boldsymbol{x}^{-\nu,\ast}\right)-\frac{1+\alpha_{\nu}\beta_{\nu}}{2\alpha_{\nu}}\left\Vert \lambda^{\nu}-\mu^{\nu}\right\Vert ^{2}\leq\left(\lambda^{\nu}\right)\cdot0-\frac{1+\alpha_{\nu}\beta_{\nu}}{2\alpha_{\nu}}\left\Vert \lambda^{\nu}-\mu^{\nu}\right\Vert ^{2}\leq 0.
\end{equation}
From Observation \ref{ob_b} that $z^{\nu}=0$ when $\lambda^{\nu,\ast}=\mu^{\nu,\ast}$ for any $\beta_{\nu}>0$, we have
\begin{equation} \label{eq:lem_saddle_to_eq_4}
	-\left(\lambda^{\nu,\ast}-\mu^{\nu,\ast}\right)^{T}z^{\nu}+\frac{\alpha_{\nu}}{2}\left\Vert z^{\nu}\right\Vert ^{2}=0.
\end{equation}
The second inequality of the saddle point condition   \eqref{eq:saddle} yields
\begin{displaymath}
        \begin{aligned}\theta_{\nu}\left(\mathbf{x}^{\ast}\right) 
		&\leq
		\mathcal{L}_{\alpha\beta}^{\nu}\left(x^{\nu},\boldsymbol{x}^{-\nu,\ast},z^{\nu},\lambda^{\nu,\ast},\mu^{\nu,\ast}\right) \\
		& = \theta_{\nu}\left(x^{\nu},\boldsymbol{x}^{-\nu,\ast}\right)+\underbrace{\left(\lambda^{\nu,\ast}\right)^{T}g^{\nu}\left(x^{\nu},\boldsymbol{x}^{-\nu,\ast}\right)-\frac{\beta_{\nu}}{2}\left\Vert \lambda^{\nu,\ast}-\mu^{\nu,\ast}\right\Vert ^{2}}_{\leq0} -\underbrace{\left(\lambda^{\nu,\ast}-\mu^{\nu,\ast}\right)^{T}z^{\nu}+\frac{\alpha_{\nu}}{2}\left\Vert z^{\nu}\right\Vert ^{2}}_{=0} \\
		&
		\leq
		\theta_{\nu}\left(x^{\nu},\boldsymbol{x}^{-\nu,\ast}\right),
		\end{aligned}
\end{displaymath}
where the last inequality is from \eqref{eq:lem_saddle_to_eq_3} and \eqref{eq:lem_saddle_to_eq_4}. Hence, $\mathbf{x}^{\ast}=(x^{\nu,\ast},\boldsymbol{x}^{-\nu,\ast})$ is a GNE of problem \eqref{eq:of}.
\end{proof}

\subsection{Proof of Theorem \ref{lem_eq_to_saddle}} \label{A.2}
\begin{proof}
First, we show that the first inequality in \eqref{eq:lem_eq_to_saddle_sp} holds. From the feasibility of an equilibrium $\mathbf{x}^{\ast}$, we have for any $\lambda^{\nu}\in\mathbb{R}_{+}^{m_{\nu}}$, $\mu^{\nu}\in\mathbb{R}^{m_{\nu}}$ and $\alpha_{\nu},\beta_{\nu}>0$
\begin{equation}\label{eq:lem_eq_to_saddle_1}
	\left(\lambda^{\nu}\right)^{T}g^{\nu}\left(\mathbf{x}^{\ast}\right)-\frac{1+\alpha_{\nu}\beta_{\nu}}{2\alpha_{\nu}}\left\Vert \lambda^{\nu}-\mu^{\nu}\right\Vert ^{2}\leq 0,
\end{equation}
implying that 
\[
	\theta_{\nu}(\mathbf{x}^{\ast})+(\lambda^{\nu})^{T}g^{\nu}(\mathbf{x}^{\ast})-\frac{1+\alpha_{\nu}\beta_{\nu}}{2\alpha_{\nu}}\left\Vert \lambda^{\nu}-\mu^{\nu}\right\Vert ^{2}
	=\mathcal{L}_{\alpha\beta}^{\nu}(\mathbf{x}^{\ast},z^{\nu,\ast}(\lambda^{\nu},\mu^{\nu}),\lambda^{\nu},\mu^{\nu}) 
	\leq \theta_{\nu}(\mathbf{x}^{\ast}).
\]
On the other hand, since there exists a pair of multipliers $\left(\lambda^{\nu,\ast},\mu^{\nu,\ast}\right)$ maximizing $\mathcal{L}_{\alpha\beta}^{\nu}(\mathbf{x}^{\ast},z^{\nu,\ast},\lambda^{\nu},\mu^{\nu})$, we also have for {\(\lambda^{\nu}=\mu^{\nu}=0\)}
\[
\begin{aligned}
	\mathcal{L}_{\alpha\beta}^{\nu}\left(\mathbf{x}^{\ast},z^{\nu,\ast}\left(0,0\right),0,0\right) & 
	\leq  \mathcal{L}_{\alpha\beta}^{\nu}\left(\mathbf{x}^{\ast},z^{\nu,\ast},\lambda^{\nu,\ast},\mu^{\nu,\ast}\right) \\
	& = \theta_{\nu}\left(\mathbf{x}^{\ast}\right)+\left(\lambda^{\nu,\ast}\right)^{T}g^{\nu}\left(\mathbf{x}^{\ast}\right)-\frac{1+\alpha_{\nu}\beta_{\nu}}{2\alpha_{\nu}}\left\Vert \lambda^{\nu,\ast}-\mu^{\nu,\ast}\right\Vert ^{2},
\end{aligned}
\]
which together with the fact that $\mathcal{L}_{\alpha\beta}^{\nu}\left(\mathbf{x}^{\ast},z^{\nu,\ast}\left(0,0\right),0,0\right)=\theta_{\nu}\left(\mathbf{x}^{\ast}\right)$ gives
\begin{equation} \label{eq:lem_eq_to_saddle_2}
	\left(\lambda^{\nu,\ast}\right)^{T}g^{\nu}\left(\mathbf{x}^{\ast}\right)-\frac{1+\alpha_{\nu}\beta_{\nu}}{2\alpha_{\nu}}\left\Vert \lambda^{\nu,\ast}-\mu^{\nu,\ast}\right\Vert ^{2}\geq0.
\end{equation}
Combining \eqref{eq:lem_eq_to_saddle_1} and \eqref{eq:lem_eq_to_saddle_2}, we obtain
\begin{displaymath}
	\left(\lambda^{\nu,\ast}\right)^{T}g^{\nu}\left(\mathbf{x}^{\ast}\right)-\frac{1+\alpha_{\nu}\beta_{\nu}}{2\alpha_{\nu}}\left\Vert \lambda^{\nu,\ast}-\mu^{\nu,\ast}\right\Vert ^{2}=0.
\end{displaymath}
It thus follows that 
\[\mathcal{L}_{\alpha\beta}^{\nu}\left(\mathbf{x}^{\ast},z^{\nu,\ast},\lambda^{\nu,\ast},\mu^{\nu,\ast}\right)=\theta_{\nu}\left(\mathbf{x}^{\ast}\right).
\]
Hence, the first inequality in \eqref{eq:lem_eq_to_saddle_sp} holds. Furthermore, by using the fact that  $g^{\nu}\left(\mathbf{x}^{\ast}\right)\leq0$ and $\lambda^{\nu,\ast}\geq0$  with \eqref{eq:lem_eq_to_saddle_2}, we have that $0\geq\left(\lambda^{\nu,\ast}\right)^{T}g^{\nu}\left(\mathbf{x}^{\ast}\right)\geq\frac{1+\alpha_{\nu}\beta_{\nu}}{2\alpha_{\nu}}\left\Vert \lambda^{\nu,\ast}-\mu^{\nu,\ast}\right\Vert ^{2}\geq0,$ which implies the  multiplier $\lambda^{\nu,\ast}$ satisfies the complementarity slackness $\left(\lambda^{\nu,\ast}\right)^{T}g^{\nu}(x^{\nu,\ast},x^{-\nu,\ast})=0$ and $\lambda^{\nu,\ast}=\mu^{\nu,\ast}$. Thus the maximizer $\lambda^{\nu,\ast}$ is equivalent to the Lagrange multiplier $\eta^{\nu,\ast}$ satisfying the KKT conditions \eqref{eq:KKT} for the original GNEP \eqref{eq:of}.
	
Next, we show that $\mathcal{L}_{\alpha\beta}^{\nu}(\mathbf{x}^{\ast},z^{\nu,\ast},\lambda^{\nu,\ast},\mu^{\nu,\ast})\leq\mathcal{L}_{\alpha\beta}^{\nu}(x^{\nu},\boldsymbol{x}^{-\nu,\ast},z^{\nu},\lambda^{\nu,\ast},\mu^{\nu,\ast})$ in \eqref{eq:lem_eq_to_saddle_sp} holds.
Noticing that $\nabla_{z^{\nu}}L_{\alpha}^{\nu}\left(z^{\nu},\lambda^{\nu,\ast},\mu^{\nu,\ast}\right) = -\left(\lambda^{\nu,\ast}-\mu^{\nu,\ast}\right) + \alpha_{\nu}z^{\nu,\ast}=0$ and $\lambda^{\nu,\ast}=\mu^{\nu,\ast}$, we have a unique minimizer $z^{\nu,\ast}=0$ for some $\alpha_{\nu}>0$. From the convexity $\theta_{\nu}\left(x^{\nu},\boldsymbol{x}^{-\nu,\ast}\right)$ and $g^{\nu}\left(x^{\nu},\boldsymbol{x}^{-\nu,\ast}\right)$ in $x^{\nu}$, we have
\begin{flalign*}
	& \theta_{\nu}\left(x^{\nu},\boldsymbol{x}^{-\nu,\ast}\right)\geq  \theta_{\nu}\left(\mathbf{x}^{\ast}\right)+\nabla_{x^{\nu}}\theta_{\nu}\left(\mathbf{x}^{\ast}\right)^{T}\left(x^{\nu}-x^{\nu,\ast}\right),\\
	& 
	g^{\nu}\left(x^{\nu},\boldsymbol{x}^{-\nu,\ast}\right)
	\geq  g^{\nu}\left(\mathbf{x}^{\ast}\right)+\nabla_{x^{\nu}}g^{\nu}\left(\mathbf{x}^{\ast}\right)^{T}\left(x^{\nu}-x^{\nu,\ast}\right).
\end{flalign*}
Then we have
\[
\begin{aligned}
	&  \mathcal{L}_{\alpha\beta}^{\nu}\left(x^{\nu},\boldsymbol{x}^{-\nu,\ast},z^{\nu},\lambda^{\nu,\ast},\mu^{\nu,\ast}\right) \\
	& \geq \theta_{\nu}\left(\mathbf{x}^{\ast}\right)+\left(\lambda^{\nu,\ast}\right)^{T}g^{\nu}\left(\mathbf{x}^{\ast}\right)
	+\left(\nabla_{x^{\nu}}\theta_{\nu}\left(\mathbf{x}^{\ast}\right)+\sum_{i=1}^{m_{\nu}}\lambda_{i}^{\nu,\ast}\nabla_{x^{\nu}} g_{i}^{\nu}\left(\mathbf{x}^{\ast}\right)\right)^{T}\left(x^{\nu}-x^{\nu,\ast}\right)\\
	& \quad-\left(\lambda^{\nu,\ast}-\mu^{\nu,\ast}\right)^{T}z^{\nu}+\frac{\alpha_{\nu}}{2}\left\Vert z^{\nu}\right\Vert ^{2}-\frac{\beta_{\nu}}{2}\left\Vert \lambda^{\nu,\ast}-\mu^{\nu,\ast}\right\Vert ^{2}\\
	& \geq\theta_{\nu}\left(\mathbf{x}^{\ast}\right)+\left(\lambda^{\nu,\ast}\right)^{T}g^{\nu}\left(\mathbf{x}^{\ast}\right)+\frac{\alpha_{\nu}}{2}\left\Vert z^{\nu}\right\Vert ^{2} \\
	& \geq\theta_{\nu}\left(\mathbf{x}^{\ast}\right)=\mathcal{L}_{\alpha\beta}^{\nu}(\mathbf{x}^{\ast},z^{\nu,\ast}=0,\lambda^{\nu,\ast},\mu^{\nu,\ast}).
\end{aligned}
\]
Hence, $\left(\mathbf{x}^{\ast},z^{\nu,\ast}=0\right)$ satisfies the second inequality of \eqref{eq:lem_eq_to_saddle_sp}.
\end{proof}

\section{Proof of Lemma \ref{prop_converge_inner}}\label{B}

\begin{proof}
(a) Fix $k\geq0$, and omit the iterates $\left(z^{\nu,k}, \lambda^{\nu,k}, \mu^{\nu,k}\right)$ for simplicity in the proof. 
Let  $\widehat{x}^{\nu,k}$ be $\nu$th component of  the unique solution $\widehat{\mathbf{x}}^{k}$. From the fixed-point characterization of $\widehat{x}^{\nu,k}$ 
\[
\widehat{x}^{\nu,k} = \mathcal{P}_{\mathcal{X}_{\nu}} \left[ \widehat{x}^{\nu,k} - \sigma_{\nu}  \nabla_{x^{\nu}}\widehat{\mathcal{L}_{\alpha\beta}^{\nu}}\left(\widehat{\mathbf{x}}^{\nu,k}; \mathbf{x}^{k}\right) \right]
\]
and the contraction property of projection operator $\mathcal{P}_{\mathcal{X}_{\nu}} \left[ \bullet\right]$, we have that for all $\nu=1,\ldots,N$
{\small\begin{equation} \label{eq:prop_inner_eq1}
\begin{aligned}	
\quad \left\|{u}^{\nu,k,l+1} -\widehat{x}^{\nu,k}  \right\|^2 
& =
\left\|\mathcal{P}_{\mathcal{X}_{\nu}} \left[ u^{\nu,k,l} - \sigma_{\nu}  \nabla_{x^{\nu}}\widehat{\mathcal{L}_{\alpha\beta}^{\nu}}\left(\mathbf{u}^{k,l}; \mathbf{x}^{k}\right) \right] -\widehat{x}^{\nu,k} \right\|^2\\
& =
\left\|\mathcal{P}_{\mathcal{X}_{\nu}} \left[ u^{\nu,k,l} - \sigma_{\nu}  \nabla_{x^{\nu}}\widehat{\mathcal{L}_{\alpha\beta}^{\nu}}\left(\mathbf{u}^{k,l}; \mathbf{x}^{k}\right) \right]  - \mathcal{P}_{\mathcal{X}_{\nu}} \left[ \widehat{x}^{\nu,k} - \sigma_{\nu}  \nabla_{x^{\nu}}\widehat{\mathcal{L}_{\alpha\beta}^{\nu}}\left( \widehat{\mathbf{x}}^{k}; \mathbf{x}^{k}\right) \right]  \right\|^2
\\
& \leq
\left\| \left[ u^{\nu,k,l} - \sigma_{\nu}  \nabla_{x^{\nu}}\widehat{\mathcal{L}_{\alpha\beta}^{\nu}}\left(\mathbf{u}^{k,l}; \mathbf{x}^{k}\right) \right] 
- 
\left[ \widehat{x}^{\nu,k} - \sigma_{\nu}  \nabla_{x^{\nu}}\widehat{\mathcal{L}_{\alpha\beta}^{\nu}}\left( \widehat{\mathbf{x}}^{k}; \mathbf{x}^{k}\right) \right]  \right\|^2.
\end{aligned}
\end{equation}}
By expanding the last term on the right, the above inequality can be rewritten as 
{\small\begin{equation} \label{eq:prop_inner_eq2}
\begin{aligned}
\left\| {u}^{\nu,k,l+1} - \widehat{x}^{\nu,k}  \right\|^2 
& \leq
\left\| {u}^{\nu,k,l} -\widehat{x}^{\nu,k}  \right\|^2 
- 2 \sigma_{\nu}
\left(   \nabla_{x^{\nu}}\widehat{\mathcal{L}_{\alpha\beta}^{\nu}} \left(\mathbf{u}^{k,l}; \mathbf{x}^{k}\right) 
- 
\nabla_{x^{\nu}}\widehat{\mathcal{L}_{\alpha\beta}^{\nu}}\left(\widehat{\mathbf{x}}^{k}; \mathbf{x}^{k}\right) \right)^T \left({u}^{\nu,k,l} -\widehat{x}^{\nu,k} \right) \\
&
\quad + \sigma_{\nu}^2 \left\| \nabla_{x^{\nu}}\widehat{\mathcal{L}_{\alpha\beta}^{\nu}}\left(\mathbf{u}^{k,l}; \mathbf{x}^{k}\right) 
- 
\nabla_{x^{\nu}}\widehat{\mathcal{L}_{\alpha\beta}^{\nu}} \left( \widehat{\mathbf{x}}^{k}; \mathbf{x}^{k} \right) \right\|^2.
\end{aligned}
\end{equation}}
Since  {\small $\widehat{\mathcal{L}_{\alpha \beta}^{\nu}}$} is strongly convex in $\mathbf{x}=\left(x^{\nu}, x^{-\nu} \right) $ with constant $c_{\nu}$ and  {\small $\nabla_{\mathbf{x}}\widehat{\mathcal{L}_{\alpha \beta}^{\nu}}$} is Lipschitz continuous with constant $\widehat{L}_{\nu}$  (see properties \ref{itm:p3} and \ref{itm:p4} in Remark \ref{rem_prop_approx_AL}), we can
estimate the second term and third term on right-hand side of \eqref{eq:prop_inner_eq2} 
{\small
\begin{equation} \label{eq:prop_inner_eq3}
\left(   \nabla_{x^{\nu}}\widehat{\mathcal{L}_{\alpha\beta}^{\nu}}\left(\mathbf{u}^{k,l}; \mathbf{x}^{k}\right) 
- 
\nabla_{x^{\nu}}\widehat{\mathcal{L}_{\alpha\beta}^{\nu}}\left( \widehat{\mathbf{x}}^{k}; \mathbf{x}^{k}\right) \right)^T \left({u}^{\nu,k,l} -\widehat{x}^{\nu,k} \right) \geq
c_{\nu} \left\| {u}^{\nu,k,l} -\widehat{x}^{\nu,k} \right\|^2,
\end{equation}}
{\small\begin{equation} \label{eq:prop_inner_eq4}
\left\| \nabla_{x^{\nu}}\widehat{\mathcal{L}_{\alpha\beta}^{\nu}}\left(\mathbf{u}^{k,l}; \mathbf{x}^{k}\right) 
- 
\nabla_{x^{\nu}}\widehat{\mathcal{L}_{\alpha\beta}^{\nu}}\left( \widehat{\mathbf{x}}^{k}; \mathbf{x}^{k}\right) \right\| \leq \widehat{L}_{\nu} \left\|  {u}^{\nu,k,l} -\widehat{x}^{\nu,k} \right\|.
\end{equation}}
Note that since {\small $\widehat{\mathcal{L}_{\alpha \beta}^{\nu}}$} is a proximal linearized function with the quadratic term $\frac{\gamma_{\nu}}{2}\left\Vert \mathbf{x}-\mathbf{x}^{k}\right\Vert^{2}$, we can take $\gamma_{\nu}=c_\nu$. Substituting \eqref{eq:prop_inner_eq3} with $c_\nu=\gamma_{\nu}$ and \eqref{eq:prop_inner_eq4} into \eqref{eq:prop_inner_eq2} yields
\[
\begin{aligned}
\left\| {u}^{\nu,k,l+1} - \widehat{x}^{\nu,k}  \right\|^2
\leq 
\left( 1 - 2 \gamma_{\nu}\sigma_{\nu} +\sigma_{\nu}^2 \widehat{L}_{\nu}^2\right)
\left\| {u}^{\nu,k,l} - \widehat{x}^{\nu,k}  \right\|^2. 
\end{aligned}
\]
Notice that $\left( 1 - 2\gamma_{\nu} \sigma_{\nu} + \sigma_{\nu}^2 \widehat{L}_{\nu}^2\right) \geq 0$ is satisfied since $\widehat{L}_{\nu} \geq \gamma_{\nu} $. Now, setting $\widehat{\sigma}:=\underset{\nu=1,\ldots,N}{\mathrm{max}}\sigma_{\nu}$ and observing that  $\left(1 - 2\gamma_{\nu} \widehat{\sigma} + \widehat{\sigma}^2 \widehat{L}_{\nu}^2 \right)  \leq \left(1 - 2\gamma_{\mathrm{min}} \widehat{\sigma} + \widehat{\sigma}^2 \widehat{L}_{\mathrm{max}}^2 \right)$,  where $\gamma_{\mathrm{min}}=\underset{\nu=1,\ldots,N}{\mathrm{min}}\gamma_{\nu}$ and  $\widehat{L}_{\mathrm{max}}=\underset{\nu=1,\ldots,N}{\mathrm{max}}\widehat{L}_{\nu}$, it immediately follows that  
\[
\begin{aligned}
\left\| {u}^{\nu,k,l+1} - \widehat{x}^{\nu,k}  \right\|^2
\leq 
\left(1 - 2\gamma_{\mathrm{min}} \widehat{\sigma} + \widehat{\sigma}^2 \widehat{L}_{\mathrm{max}}^2 \right)
\left\| {u}^{\nu,k,l} - \widehat{x}^{\nu,k}  \right\|^2. 
\end{aligned}
\]
Thus, for  $0<\widehat{\sigma}<\left(2\gamma_{\mathrm{min}}^2\right)/\widehat{L}_{\mathrm{max}}$ implying  that $\left(1 - 2\gamma_{\mathrm{min}} \widehat{\sigma} + \widehat{\sigma}^2 \widehat{L}_{\mathrm{max}}^2 \right)<1$, we  obtain 
\begin{equation} \label{eq:prop_inner_eq5}
\begin{aligned}
\left\| {u}^{\nu,k,l+1} - \widehat{x}^{\nu,k}  \right\|
\leq
\boldsymbol{\tau} \left\| {u}^{\nu,k,l} - \widehat{x}^{\nu,k}  \right\|, \quad 0<\boldsymbol{\tau} < 1,
\end{aligned}
\end{equation}
where $\boldsymbol{\tau}= \sqrt{ 1 - 2\gamma_{\mathrm{min}} \widehat{\sigma} + \widehat{\sigma}^2 \widehat{L}_{\mathrm{max}}^2}$.
Therefore, by summing over  the above inequality \eqref{eq:prop_inner_eq5} for all players  from $\nu=1$ to $N$, we deduce the desired result \eqref{eq:prop_converge_result}. 
\vspace{0.07in}

(b) From the property \ref{itm:p1} in Remark \ref{rem_prop_approx_AL} with $\mathbf{y} = \mathbf{x}^{k}$, we know that 
\begin{equation} \label{eq:in_des_1}
\widehat{\mathcal{L}_{\alpha\beta}^{\nu}} \left( \mathbf{x}^{k},z^{\nu,k},\lambda^{\nu,k},\mu^{\nu,k};\mathbf{x}^{k}\right) = \mathcal{L}_{\alpha\beta}^{\nu} \left( \mathbf{x}^{k},z^{\nu,k},\lambda^{\nu,k},\mu^{\nu,k}\right), \ \ \nu=1,\ldots,N.   
\end{equation}
Since $\mathbf{x}^{k} \neq \widehat{\mathbf{x}}^{k}$ and $\mathbf{u}^{k,l} \rightarrow \widehat{\mathbf{x}}^{k}$ by the result (a), the inner gradient projection \eqref{eq:distributed_pd_vi} can find a point ${\mathbf{u}}^{k,l+1}$ close to $\widehat{\mathbf{x}}^k$  such that 
\begin{equation} \label{eq:in_des_2}
\widehat{\mathcal{L}_{\alpha\beta}^{\nu}}\left(\mathbf{u}^{k,l+1},z^{\nu,k},\lambda^{\nu,k},\mu^{\nu,k};\mathbf{x}^{k}\right) 
<
\widehat{\mathcal{L}_{\alpha\beta}^{\nu}}\left(\mathbf{x}^{k},z^{\nu,k},\lambda^{\nu,k},\mu^{\nu,k};\mathbf{x}^{k}\right), \ \ \nu=1,\ldots,N,
\end{equation}
in a finite number of iterations. In addition, by the property \ref{itm:p2} in Remark \ref{rem_prop_approx_AL} with $\mathbf{y} = \mathbf{u}^{k,l+1}$, we have that for any $\gamma_\nu \geq L_\nu$,
\begin{equation}  \label{eq:in_des_3}
\mathcal{L}_{\alpha\beta}^{\nu}\left(\mathbf{u}^{k,l+1},z^{\nu,k},\lambda^{\nu,k},\mu^{\nu,k}\right)
\leq
\widehat{\mathcal{L}_{\alpha\beta}^{\nu}}\left(\mathbf{u}^{k,l+1},z^{\nu,k},\lambda^{\nu,k},\mu^{\nu,k};\mathbf{x}^{k}\right), \ \ \nu=1,\ldots,N. 
\end{equation}
Combining \eqref{eq:in_des_1}, \eqref{eq:in_des_2}, and \eqref{eq:in_des_3} yields  
\begin{equation} 
	\mathcal{L}_{\alpha\beta}^{\nu}\left(\mathbf{u}^{k,l+1},z^{\nu,k},\lambda^{\nu,k},\mu^{\nu,k}\right)
	<
	\mathcal{L}_{\alpha\beta}^{\nu}\left(\mathbf{x}^{k},z^{\nu,k},\lambda^{\nu,k},\mu^{\nu,k}\right), \ \ \nu=1,\ldots,N. \notag 
\end{equation}
Hence, we can derive the desired decrease property of every  ${\mathcal{L}_{\alpha\beta}^{\nu}}$ during inner iterations.
\end{proof}

\section{Proofs of Key Results in Section \ref{sec4.1}}

\subsection{Proof of Lemma \ref{lem_bound_multi}} \label{C.1}
\proof
Note that since $\mathcal{L}_{\alpha\beta}^{\nu}$ is strongly concave in $\lambda^{\nu}$ for fixed $(\mathbf{x}, z^{\nu}, \mu^{\nu})$, there exists a unique maximizer, denoted by $\widehat{\lambda}^{\nu}\left(\mathbf{x},z^{\nu},\mu^{\nu}\right)$,  such that $$\mathcal{L}_{\alpha\beta}^{\nu}(\mathbf{x},z^{\nu},\widehat{\lambda}^{\nu}(\mathbf{x},z^{\nu},\mu^{\nu}),\mu^{\nu})
={\mathrm{max}}_{\lambda^{\nu}\in\mathbb{R}^{m_{\nu}}_{+}}\mathcal{L}_{\alpha\beta}^{\nu}\left(\mathbf{x},z^{\nu},\lambda^{\nu},\mu^{\nu}\right).$$ 
From the update of $\lambda^{\nu,k}$ defined (as maximizer) in Step \ref{step3}, we have
\[\begin{aligned}\label{eq:multi_x_relation_1}
&
\nabla_{\lambda^{\nu}}\mathcal{L}_{\alpha\beta}^{\nu}\left(\mathbf{x}^{k+1},z^{\nu,k+1},\lambda^{\nu,k+1},\mu^{\nu,k}\right)^{T}\left(\lambda^{\nu,k}-\lambda^{\nu,k+1}\right)\leq0,\\ 
&
\nabla_{\lambda^{\nu}}\mathcal{L}_{\alpha\beta}^{\nu}\left(\mathbf{x}^{k},z^{\nu,k},\lambda^{\nu,k},\mu^{\nu,k-1}\right)^{T}\left(\lambda^{\nu,k+1}-\lambda^{\nu,k}\right)\leq0.
\end{aligned}
\]
By the definition $z^{\nu,k+1}=(\lambda^{\nu,k}-\mu^{\nu,k})/\alpha_{\nu}$ in Step \ref{step2} with $\lambda^{\nu,k+1}=\mu^{\nu,k+1}$ in Step \ref{step3}, we have   $z^{\nu,k+1}=z^{\nu,k}=0$ for the same starting point $\lambda^{\nu,0}=\mu^{\nu,0}$. Adding the above inequalities and a direct computation of $\nabla_{\lambda^{\nu}}\mathcal{L}_{\alpha\beta}^{\nu}$ give
\begin{equation} \label{eq:bound_multi_1}
\begin{aligned} 
&
\left(\nabla_{\lambda^{\nu}}\mathcal{L}_{\alpha\beta}^{\nu}\left(\mathbf{x}^{k+1},z^{\nu,k+1},\lambda^{\nu,k+1},\mu^{\nu,k}\right)
-\nabla_{\lambda^{\nu}}\mathcal{L}_{\alpha\beta}^{\nu}\left(\mathbf{x}^{k},z^{\nu,k},\lambda^{\nu,k},\mu^{\nu,k-1}\right)\right)^{T}\left(\lambda^{\nu,k}-\lambda^{\nu,k+1}\right) \\
&
=\left(g^{\nu}\left(\mathbf{x}^{k+1}\right) -g^{\nu}\left(\mathbf{x}^{k}\right) 
-\beta_{\nu}\left(\lambda^{\nu,k+1}-\lambda^{\nu,k}\right)
+\beta_{\nu}\left(\mu^{\nu,k}-\mu^{\nu,k-1}\right)\right)^{T}
\left(\lambda^{\nu,k}-\lambda^{\nu,k+1}\right) \\
&
=\left(g^{\nu}\left(\mathbf{x}^{k+1}\right)-g^{\nu}\left(\mathbf{x}^{k}\right)\right)^{T}\left(\lambda^{\nu,k}-\lambda^{\nu,k+1}\right)
+ \beta_{\nu}\left\Vert\lambda^{\nu,k+1}-\lambda^{\nu,k} \right\Vert ^{2}  +\beta_{\nu}\underbrace{\left(\mu^{\nu,k}-\mu^{\nu,k-1}\right)^{T}\left(\lambda^{\nu,k}-\lambda^{\nu,k+1}\right)}_{ \geq \left\Vert \lambda^{\nu,k+1}-\lambda^{\nu,k} \right\Vert^2 } 
\leq0,
\end{aligned}
\end{equation}
where, to bound the third term, we used Lemma 1(a) in \citet{nedic2010constrained}; $(x-y )^{T}(x-P[x] )\geq \left\|P[x]-x \right\|^{2}$ with $x=\mu^{\nu,k}$, $y=\mu^{\nu,k-1}$, $P[x]=\lambda^{\nu,k+1}$, and the fact  $\mu^{\nu,k}=\lambda^{\nu,k}$.  Specifically, since  $\lambda^{\nu,k+1}$ maximizes $\mathcal{L}_{\alpha\beta}^{\nu}(\mathbf{x}^{k+1},z^{\nu,k+1},\lambda^{\nu},\mu^{\nu,k}) = \theta_{\nu}(\mathbf{x}^{k+1})+(\lambda^{\nu})^{T}g^{\nu}(\mathbf{x}^{k+1})-\frac{\beta_{\nu}}{2}\left\Vert \lambda^{\nu}-\mu^{\nu,k}\right\Vert ^{2}$, we have  
\begin{multline*}
\theta_{\nu}(\mathbf{x}^{k+1})+\left(\lambda^{\nu,k+1}\right) ^{T}g^{\nu}(\mathbf{x}^{k+1})-\frac{\beta_{\nu}}{2}\left\Vert \lambda^{\nu,k+1}-\mu^{\nu,k}\right\Vert ^{2}  \\ 
\geq
\theta_{\nu}(\mathbf{x}^{k+1})+\left( \widehat{\lambda}^{\nu}(\mathbf{x}^{k+1}, z^{k+1}, \mu^{\nu,k})\right) ^{T}g^{\nu}(\mathbf{x}^{k+1})-\frac{\beta_{\nu}}{2}\left\Vert \widehat{\lambda}^{\nu}(\mathbf{x}^{k+1}, z^{k+1}, \mu^{\nu,k})-\mu^{\nu,k}\right\Vert ^{2},
\end{multline*} 
and $\left(\lambda^{\nu,k+1}\right) ^{T}g^{\nu}(\mathbf{x}^{k+1})=\widehat{\lambda}^{\nu}\left(\mathbf{x}^{k+1}, z^{k+1}, \mu^{\nu,k}\right)^{T}g^{\nu}(\mathbf{x}^{k+1})$. It thus follows that 
\[
\left\Vert \lambda^{\nu,k+1}-\mu^{\nu,k}\right\Vert 
\leq 
\left\Vert \widehat{\lambda}^{\nu}(\mathbf{x}^{k+1}, z^{k+1}, \mu^{\nu,k})-\mu^{\nu,k}\right\Vert,
\]
which, by definition of projection {\cite[Section 3.4] {bertsekas1989parallel}}, means that $\lambda^{\nu,k+1}$ can be viewed as the projection of $\mu^{k}$ onto  the solution set $\widehat{\lambda}^{\nu}(\mathbf{x}^{k+1}, z^{k+1}, \mu^{\nu,k})$. We thus see that 
\[
\left(\mu^{\nu,k} -\mu^{\nu,k-1}\right)^T
\left(\lambda^{\nu,k}-\lambda^{\nu,k+1}\right) =\left(\lambda^{\nu,k} -\lambda^{\nu,k-1}\right)^T
\left(\lambda^{\nu,k}-\lambda^{\nu,k+1}\right) \geq  \left\Vert \lambda^{\nu,k+1}-\lambda^{\nu,k} \right\Vert^2 \geq 0.
\]
Note that using the Cauchy-Schwarz inequality, we also get  that $\left\Vert \lambda^{\nu,k} -\lambda^{\nu,k-1}\right\Vert \geq \left\Vert \lambda^{\nu,k+1}-\lambda^{\nu,k} \right\Vert$ implying the stable sequence of the multipliers.  Rearranging terms in \eqref{eq:bound_multi_1}, we obtain  
\[
\beta_{\nu}\left\Vert\lambda^{\nu,k+1} - \lambda^{\nu,k} \right\Vert ^{2} 
\leq\left(g^{\nu}(\mathbf{x}^{k+1}) -g^{\nu}(\mathbf{x}^{k})\right)^{T} \left(\lambda^{\nu,k+1}-\lambda^{\nu,k}\right), 
\]
which leads to 
\[
\begin{aligned}
\left\Vert \lambda^{\nu,k+1}-\lambda^{\nu,k}\right\Vert \overset{(i)}{\leq} \frac{1}{\beta_{\nu}}\left\Vert g^{\nu}(\mathbf{x}^{k+1})-g^{\nu}(\mathbf{x}^{k})\right\Vert 
\overset{(ii)}{\leq}
\frac{L_{g^{\nu}}}{\beta_{\nu}} \left\Vert \mathbf{x}^{k+1} -\mathbf{x}^{k} \right\Vert.
\end{aligned}
\]
where $\left(i\right)$ follows from the Cauchy-Schwarz inequality; $\left(ii\right)$ is from the continuous differentiability of $g^{\nu}(\mathbf{x})$ (Assumption \ref{assumption1}), implying that $g^{\nu}(\mathbf{x})$ is Lipschitz continuous with constant $L_{g^{\nu}}$. Squaring both sides of the inequality gives the desired result \eqref{eq:multiplier_x_relation}. 
\endproof

\subsection{Proof of Lemma \ref{lem_lag_behavior}} \label{C.2}
\proof
Consider the difference of two consecutive sequences of $\mathcal{L}_{\alpha \beta}^{\nu}$:
\begin{equation} \label{eq:lem_lag_behavior_1}
\begin{aligned}
&
\mathcal{L}_{\alpha \beta}^{\nu}\left(\mathbf{x}^{k+1},z^{\nu,k+1},\lambda^{\nu,k+1},\mu^{\nu,k+1}\right) -\mathcal{L}_{\alpha \beta}^{\nu}\left(\mathbf{x}^{k},z^{\nu,k},\lambda^{\nu,k},\mu^{\nu,k}\right) \\
&
= 
\left[  \mathcal{L}_{\alpha \beta}^{\nu}\left(\mathbf{x}^{k+1},z^{\nu,k+1},\lambda^{\nu,k},\mu^{\nu,k}\right) 
-
\mathcal{L}_{\alpha \beta}^{\nu}\left(\mathbf{x}^{k},z^{\nu,k},\lambda^{\nu,k},\mu^{\nu,k}\right) \right] \\
&
\quad +\left[ \mathcal{L}_{\alpha \beta}^{\nu}\left(\mathbf{x}^{k+1},z^{\nu,k+1},\lambda^{\nu,k+1},\mu^{\nu,k+1}\right) 
- \mathcal{L}_{\alpha \beta}^{\nu}\left(\mathbf{x}^{k+1},z^{\nu,k+1},\lambda^{\nu,k},\mu^{\nu,k}\right)\right].
\end{aligned}
\end{equation}

We estimate the two terms on the right-hand side of \eqref{eq:lem_lag_behavior_1} one by one. For the first term,   using the descent lemma (Lemma \ref{lem_descent}), we have 
\begin{equation} \label{eq:lem_lag_behavior_2}
\mathcal{L}_{\alpha \beta}^{\nu}\left(\mathbf{x}^{k+1}\right) 
\leq
\mathcal{L}_{\alpha \beta}^{\nu}\left(\mathbf{x}^{k}\right)
+\nabla_{\mathbf{x}}\mathcal{L}_{\alpha \beta}^{\nu}\left(\mathbf{x}^{k}\right)^{T}\left(\mathbf{x}^{k+1}-\mathbf{x}^{k}\right)
+\frac{L_\nu}{2}\left\Vert \mathbf{x}^{k+1}-\mathbf{x}^{k}\right\Vert ^{2}.
\end{equation}
Here, we omitted $\left( z^{\nu,k}, \lambda^{\nu,k}, \mu^{\nu,k}\right)$ for simplicity. By Step \ref{step1} of Algorithm 1, we have 
\[
\begin{aligned}
\widehat{\mathcal{L}_{\alpha \beta}^{\nu}}\left(\mathbf{x}^{k+1};\mathbf{x}^{k}\right) &
= \mathcal{L}_{\alpha \beta}^{\nu}\left(\mathbf{x}^{k}\right)
+
\nabla_{\mathbf{x}}\mathcal{L}_{\alpha \beta}^{\nu}\left(\mathbf{x}^{k}\right)^{T}\left(\mathbf{x}^{k+1} - \mathbf{x}^{k}\right)
+\frac{\gamma_{\nu}}{2}\left\Vert \mathbf{x}^{k+1} - \mathbf{x}^{k}\right\Vert ^{2} 
\leq 
\mathcal{L}_{\alpha \beta}^{\nu}\left(\mathbf{x}^{k}\right).
\end{aligned}
\]
Thus we obtain
\[
\nabla_{\mathbf{x}}\mathcal{L}_{\alpha \beta}^{\nu}\left(\mathbf{x}^{k}\right)^{T}\left(\mathbf{x}^{k+1} - \mathbf{x}^{k}\right)
\leq -\frac{\gamma_{\nu}}{2}\left\Vert \mathbf{x}^{k+1} - \mathbf{x}^{k}\right\Vert ^{2}.
\]
By substituting the above expression into \eqref{eq:lem_lag_behavior_2} and using the definition of $z^{\nu,k+1}$ in Step \ref{step2}, we get 
\begin{equation} \label{eq:lem_lag_behavior_3}
\mathcal{L}_{\alpha \beta}^{\nu}\left(\mathbf{x}^{k+1},z^{\nu,k+1},\lambda^{\nu,k},\mu^{\nu,k}\right) -
\mathcal{L}_{\alpha \beta}^{\nu}\left(\mathbf{x}^{k},z^{\nu,k},\lambda^{\nu,k},\mu^{\nu,k}\right) 
\leq 
-\frac{1}{2}\left(\gamma_{\nu}- L_{\nu} \right) \left\Vert \mathbf{x}^{k+1}-\mathbf{x}^{k}\right\Vert ^{2}.
\end{equation}

Next, consider the second term: 
\begin{equation}\label{eq:lem_lag_behavior_4} 
\begin{aligned}
&
\mathcal{L}_{\alpha \beta}^{\nu}\left(\mathbf{x}^{k+1},z^{\nu,k+1},\lambda^{\nu,k+1},\mu^{\nu,k+1}\right) -\mathcal{L}_{\alpha \beta}^{\nu}\left(\mathbf{x}^{k+1},z^{\nu,k+1},\lambda^{\nu,k},\mu^{\nu,k}\right) \\ 
&
=\left( \lambda^{\nu,k+1} -\lambda^{\nu,k} \right)^{T} g^{\nu}\left( \mathbf{x}^{k+1}\right) 
- \frac{\beta_{\nu}}{2} \underbrace{\left\Vert \lambda^{\nu,k+1}-\mu^{\nu,k+1} \right\Vert^{2}}_{=0} +\frac{\beta_\nu}{2} \underbrace{\left\Vert \lambda^{\nu,k}-\mu^{\nu,k} \right\Vert^{2}}_{=0}, 
\end{aligned}\end{equation}
where it follows from Step \ref{step3} that the second and third terms on the right-hand side are zero. 

We now focus on deriving an upper bound for the term $\left( \lambda^{\nu,k+1} -\lambda^{\nu,k} \right)^{T} g^{\nu}\left( \mathbf{x}^{k+1}\right)$. To this end, we need to consider two cases: $\mu^{\nu,k}+\frac{1}{\beta_{\nu}} g^{\nu}(\mathbf{x}^{k+1}) \geq 0$ and $\mu^{\nu,k}+\frac{1}{\beta_{\nu}} g^{\nu}(\mathbf{x}^{k+1}) < 0$. 

\textit{Case 1}.  $\mu^{\nu,k}+\frac{1}{\beta_{\nu}} g^{\nu}(\mathbf{x}^{k+1}) \geq 0$. By the definition of   $\lambda^{\nu,k+1}=\left[\mu^{\nu,k}+\frac{1}{\beta_{\nu}} g^{\nu}(\mathbf{x}^{k+1}) \right]^{+}$ and $\lambda^{\nu,k}=\mu^{\nu,k}$ from Step \ref{step3}, we  obtain 
\begin{equation}\label{eq:lem_lag_behavior_5}
\left( \lambda^{\nu,k+1} -\lambda^{\nu,k} \right)^{T} g^{\nu}( \mathbf{x}^{k+1}) 
=
\beta_{\nu}\left\Vert \lambda^{\nu,k+1}-\lambda^{\nu,k} \right\Vert^{2}.
\end{equation}

\textit{Case 2}.  $\mu^{\nu,k}+\frac{1}{\beta_{\nu}} g^{\nu}(\mathbf{x}^{k+1}) < 0$. Note that in this case, $\lambda^{\nu,k+1}=0$ and $\mathbf{x}^{k+1}$ is feasible because $g(\mathbf{x}^{k+1}) <0$. 
For convenience, we define
\[
\triangle_k
:= (\lambda^{\nu,k+1})^T g^{\nu}(\mathbf{x}^{k+1})
-(\lambda^{\nu,k})^T g^{\nu}(\mathbf{x}^{k}).
\]
Then, by subtracting and adding the term $\left(\lambda^{\nu,k}\right)^T g^{\nu}(\mathbf{x}^{k+1})$ to the right-hand side and using the fact that $\lambda^{\nu,k+1}=0$, it follows that
\begin{align}
\left\| \triangle_k \right\| 
& = \left\| \left(\lambda^{\nu,k+1} -\lambda^{\nu,k}\right)^T g^{\nu}(\mathbf{x}^{k+1})
+ (\lambda^{\nu,k})^T \left(g^{\nu}(\mathbf{x}^{k+1})-g^{\nu}(\mathbf{x}^{k})\right) \right\|  \notag \\
& \geq 
\left\| \left(\lambda^{\nu,k+1} -\lambda^{\nu,k}\right)^T g^{\nu}(\mathbf{x}^{k+1})\right\| 
- 
\left\| \left(\lambda^{\nu,k}-\lambda^{\nu,k+1}\right)^T \left(g^{\nu}(\mathbf{x}^{k+1})-g^{\nu}(\mathbf{x}^{k})\right) \right\|. \label{eq:lem_lag_behavior_6}
\end{align}
From the feasibility of $\mathbf{x}^{k+1}$, we have that for any $\lambda^{\nu}\in\mathbb{R}_{+}^{m_{\nu}}$, $\mu^{\nu}\in\mathbb{R}^{m_{\nu}}$ and $\beta_\nu >0$
\[
\left(\lambda^{\nu}\right)^{T}g^{\nu}(\mathbf{x}^{k+1})
-\frac{\beta_{\nu}}{2} \left\Vert \lambda^{\nu}-\mu^{\nu}\right\Vert ^{2}\leq0,
\]
Thus, we can get with $\lambda^{\nu}=\lambda^{\nu,k+1}$ and $\mu^{\nu}=\lambda^{\nu,k}$ that 
\[
\left(\lambda^{\nu,k+1}\right)^{T}g^{\nu}(\mathbf{x}^{k+1}) \leq 
\frac{\beta_{\nu}}{2}\left\Vert \lambda^{\nu,k+1}-\lambda^{\nu,k}\right\Vert ^{2}.
\]
On the other hand, since  $\lambda^{\nu,k} \geq 0$ maximizes $\mathcal{L}_{\alpha\beta}^{\nu}(\mathbf{x}^{k},z^{\nu,k},\lambda^{\nu},\mu^{\nu,k-1})=\theta_{\nu}\left(\mathbf{x}^{k}\right)+\left(\lambda^{\nu}\right)^{T} g^{\nu}\left(\mathbf{x}^{k}\right)-\frac{\beta_{\nu}}{2}\left\Vert \lambda^{\nu}-\mu^{\nu,k-1}\right\Vert ^{2}$ for given $\left(\mathbf{x}^k,\mu^{\nu,k-1}\right) $ 
and  the third term,
\[-\frac{\beta_{\nu}}{2}\left\Vert\lambda^{\nu,k}-\mu^{\nu,k-1}\right\Vert^2 = 
\begin{cases}
-\frac{1}{2\beta_{\nu}}\left\Vert g^{\nu}(\mathbf{x}^k)\right\Vert^2 & \mathrm{if} \ \ \mu^{\nu,k-1}+\frac{1}{\beta_{\nu}} g^{\nu}(\mathbf{x}^{k}) \geq 0 \\
-\frac{\beta_{\nu}}{2}\left\Vert \mu^{\nu,k-1}\right\Vert^2 & \mathrm{otherwise},
\end{cases}\]
is a  given constant, we have that
\[
\left(\lambda^{\nu,k}\right)^{T}g^{\nu}\left(\mathbf{x}^{k}\right) \geq 0.
\]
Hence,
\begin{equation} \label{eq:lem_lag_behavior_7}
\left\| \triangle_k \right\| 
=\left\| \left(\lambda^{\nu,k+1}\right)^{T}g^{\nu}\left(\mathbf{x}^{k+1}\right) - \left(\lambda^{\nu,k}\right)^{T}g^{\nu}\left(\mathbf{x}^{k}\right) \right\| 
\leq 
\frac{\beta_{\nu}}{2} \left\Vert \lambda^{\nu,k+1}-\lambda^{\nu,k}\right\Vert ^{2},
\end{equation}
Combining \eqref{eq:lem_lag_behavior_6} and \eqref{eq:lem_lag_behavior_7} and invoking Lemma \ref{lem_bound_multi}, we obtain 
\begin{equation} \label{eq:lem_lag_behavior_8}
\begin{aligned}
\left\| \left(\lambda^{\nu,k+1} -\lambda^{\nu,k}\right)^T g^{\nu}(\mathbf{x}^{k+1})\right\| 
& 
\leq 
\left\| \left(\lambda^{\nu,k}-\lambda^{\nu,k+1}\right)^T \left(g^{\nu}(\mathbf{x}^{k+1})-g^{\nu}(\mathbf{x}^{k})\right) \right\| 
+ \frac{\beta_{\nu}}{2} \left\Vert \lambda^{\nu,k+1}-\lambda^{\nu,k}\right\Vert ^{2}\\
&
\leq
L_g\left\| \lambda^{\nu,k+1}-\lambda^{\nu,k}\right\| 
\left\| \mathbf{x}^{k+1}-\mathbf{x}^{k} \right\| +\frac{L_{g^{\nu}}^2}{2\beta_\nu} \left\| \mathbf{x}^{k+1}-\mathbf{x}^{k} \right\| ^2 \\
&
\leq \frac{3L_{g^{\nu}}^2}{2\beta_\nu} \left\| \mathbf{x}^{k+1}-\mathbf{x}^{k} \right\| ^2 .
\end{aligned}
\end{equation}
Notice that the above upper bound on $\left\| \left(\lambda^{\nu,k+1} -\lambda^{\nu,k}\right)^T g^{\nu}(\mathbf{x}^{k+1})\right\|$ includes the upper bound in {Case} 1. 
Therefore, by combining   \eqref{eq:lem_lag_behavior_3}, \eqref{eq:lem_lag_behavior_4} and \eqref{eq:lem_lag_behavior_8}, we obtain the desired result: 
\[
\label{eq:lem_sufficient_decrease_Lagrangian_6}
\mathcal{L}_{\alpha\beta}^{\nu}(\mathbf{x}^{k+1},z^{\nu,k+1},\lambda^{\nu,k+1},\mu^{\nu,k+1}) 
\leq \mathcal{L}_{\alpha\beta}^{\nu}(\mathbf{x}^{k},z^{\nu,k},\lambda^{\nu,k},\mu^{\nu,k})
-\frac{1}{2}\left(\gamma_{\nu} -L_{\nu} -\frac{3L_{g}^{2}}{\beta_{\nu}}\right) \left\Vert \mathbf{x}^{k+1}-\mathbf{x}^{k}\right\Vert ^{2},
\]
which implies that the sequence $\left\lbrace \mathcal{L}_{\alpha\beta}^{\nu}\left(\mathbf{x}^{k},z^{\nu,k},\lambda^{\nu,k},\mu^{\nu,k}\right)\right\rbrace$ is monotonically decreasing if $\gamma_{\nu}$ is chosen such that 
$\gamma_{\nu} > L_{\nu}+\frac{3L_{g^{\nu}}^{2}}{\beta_{\nu}}$ with a suitable choice of $\beta_{\nu}>0$.

\vspace{0.05in}
Next, we show that $\{ \mathcal{L}_{\alpha\beta}^{\nu} \} $ is a convergent sequence. we know from Lemma \ref{lem_eq_to_saddle} that a saddle point of $\mathcal{L}_{\alpha \beta}^{\nu}$ exists. Let $\left(\mathbf{x}^{\ast},z^{\nu,\ast},\lambda^{\nu,\ast},\mu^{\nu,\ast}\right) $ be a saddle point of $\mathcal{L}_{\alpha\beta}^{\nu}(\mathbf{x},z^{\nu},\lambda^{\nu},\mu^{\nu})$. By the updating rules for $\left( \lambda^{\nu,k+1}, \mu^{\nu,k+1}\right)$ defined as maximizers for updated $\left( \mathbf{x}^{k+1}, z^{k+1} \right) $ and the saddle point condition \eqref{eq:saddle}, we see that  
\begin{equation} \label{eq:thm2_x_a}
\mathcal{L}_{\alpha\beta}^{\nu}\left(\mathbf{x}^{k+1},z^{\nu,k+1},\lambda^{\nu,k+1},\mu^{\nu,k+1}\right) 
\geq \mathcal{L}_{\alpha\beta}^{\nu}\left(\mathbf{x}^{k+1},z^{\nu,k+1},\lambda^{\nu,\ast}\mu^{\nu,\ast}\right) 
\geq
\mathcal{L}_{\alpha\beta}^{\nu}\left(\mathbf{x}^{\ast},z^{\nu,\ast},\lambda^{\nu,\ast},\mu^{\nu,\ast}\right)>-\infty,
\end{equation}
which implies that the sequence $\left\lbrace \mathcal{L}_{\alpha\beta}^{\nu}\left(\mathbf{x}^{k},z^{\nu,k},\lambda^{\nu,k},\mu^{\nu,k}\right) \right\rbrace$ is lower bounded by a finite value of $\mathcal{L}_{\alpha\beta}^{\nu}\left(\mathbf{x}^{\ast},z^{\nu,\ast},\lambda^{\nu,\ast},\mu^{\nu,\ast}\right)$. Therefore, with the choice of  $\gamma_{\nu}$  such that 
$\gamma_{\nu}\geq L_{\nu}+\frac{3L_{g^{\nu}}^{2}}{\beta_{\nu}}$, the sequence $\left\lbrace \mathcal{L}_{\alpha\beta}^{\nu}\left(\mathbf{x}^{k},z^{\nu,k},\lambda^{\nu,k},\mu^{\nu,k}\right) \right\rbrace$ converges to a finite limit, denoted by $\underline{\mathcal{L}^{\nu}}$, as $k \rightarrow \infty$.
\endproof

\subsection{Proof of Theorem \ref{thm2_x}} \label{C.3}
\proof 
(a)
Recall from Lemma \ref{lem_eq_to_saddle} that a saddle point $(\mathbf{x}^{\ast},z^{\nu,\ast},\lambda^{\nu,\ast},\mu^{\nu,\ast})$ of  $\mathcal{L}_{\alpha\beta}^{\nu}(\mathbf{x},z^{\nu},\lambda^{\nu},\mu^{\nu})$ exists. We know from \eqref{eq:thm2_x_a} that $\mathcal{L}_{\alpha\beta}^{\nu}(\mathbf{x}^{k},z^{\nu,k},\lambda^{\nu,k},\mu^{\nu,k})$ is lower bounded by $\mathcal{L}_{\alpha\beta}^{\nu}\left(\mathbf{x}^{\ast},z^{\nu,\ast},\lambda^{\nu,\ast},\mu^{\nu,\ast}\right)$. In addition, since $\mathcal{L}_{\alpha\beta}^{\nu}\left(\mathbf{x}^{k},z^{\nu,k},\lambda^{\nu,k},\mu^{\nu,k}\right)$ is nonincreasing, it is also upper bounded by a finite value, i.e,  $\mathcal{L}_{\alpha\beta}^{\nu}(\mathbf{x}^{k},z^{\nu,k},\lambda^{\nu,k},\mu^{\nu,k}) < \infty$. We thus have
\[
\begin{aligned}
-\infty 
< \mathcal{L}_{\alpha\beta}^{\nu}(\mathbf{x}^{k+1},z^{\nu,k+1},\lambda^{\nu,k+1},\mu^{\nu,k+1})
= \theta(\mathbf{x}^{k+1}) + (\lambda^{\nu,k+1})^T g(\mathbf{x}^{k+1}) 
< + \infty.
\end{aligned}
\]
Hence, the sequence $\left\lbrace \mathbf{x}^{k} \right\rbrace$ is  bounded due to the coercivity of $\theta_{\nu}(\mathbf{x})$  (Assumption \ref{assumption_coercive}) with the facts that $\lambda^{\nu,k+1} \geq 0$ and $\left( \lambda^{\nu,k+1}\right) ^T g^{\nu}(\mathbf{x}^{k+1}) \geq 0$.
\vspace{0.1in}

(b)
Note that since the function $\lambda^{\nu} \rightarrow \mathcal{L}_{\alpha\beta}(\mathbf{x}^{},z^{\nu},\lambda^{\nu},\mu^{\nu})$ is strongly concave,  there exists parameter $c_\nu^\lambda>0$ such that for $\lambda^{\nu,\ast},\lambda^{\nu,k+1} \in \mathbb{R}^{m_\nu}_{+}$ and for given $(\mathbf{x}^{k+1},z^{\nu,k+1}, \mu^{\nu,k})$
\[
\begin{aligned}
	\mathcal{L}_{\alpha\beta}^{\nu}(\lambda^{\nu,\ast},\mu^{\nu,k}) 
	&
	\leq 
	\mathcal{L}_{\alpha\beta}^{\nu}(\lambda^{\nu,k+1},\mu^{\nu,k}) 
	+\nabla_{\lambda^{\nu}}\mathcal{L}_{\alpha\beta}^{\nu}(\lambda^{\nu,k+1},\mu^{\nu,k})^{T}(\lambda^{\nu,\ast}-\lambda^{\nu,k+1}) 
	-\frac{c_{\nu}^{\lambda}}{2}\left\Vert  \lambda^{\nu,k+1}-\lambda^{\nu,\ast}\right\Vert ^{2}\\
	\mathcal{L}_{\alpha\beta}^{\nu}(\lambda^{\nu,k+1},\mu^{\nu,k}) 
	& \leq  
	\mathcal{L}_{\alpha\beta}^{\nu}(\lambda^{\nu,\ast},\mu^{\nu,k}) 
	+\nabla_{\lambda^{\nu}}\mathcal{L}_{\alpha\beta}^{\nu}(\lambda^{\nu,\ast},\mu^{\nu,k})^{T}(\lambda^{\nu,k+1}-\lambda^{\nu,\ast}) 
	-\frac{c_{\nu}^{\lambda}}{2}\left\Vert  \lambda^{\nu,k+1}-\lambda^{\nu,\ast}\right\Vert ^{2},
	\end{aligned}
\]
where we omitted $(\mathbf{x}^{k+1},z^{\nu,k+1})$ for notational simplicity. Adding the above two inequalities yields 
\[\label{eq:thm_bound_multi_eq1} 
	c_{\nu}^{\lambda}\left\Vert \lambda^{\nu,k+1}-\lambda^{\nu,\ast}\right\Vert^{2}
	\leq
	\left( \nabla_{\lambda^{\nu}}\mathcal{L}_{\alpha\beta}^{\nu}(\lambda^{\nu,k+1},\mu^{\nu,k}) 
	-\nabla_{\lambda^{\nu}}\mathcal{L}_{\alpha\beta}^{\nu}(\lambda^{\nu,\ast},\mu^{\nu,k})\right) ^{T}\left(\lambda^{\nu,\ast}-\lambda^{\nu,k+1}\right).
\]
Using the Cauchy-Schwarz inequality and the triangle inequality, we obtain
\[\label{eq:thm_bound_multi_eq2}
	\begin{aligned}
	\left\Vert \lambda^{\nu,k+1}-\lambda^{\nu,\ast}\right\Vert 
	& \leq
	\frac{1}{c^{\lambda}_{\nu}}
	\left\|\nabla_{\lambda^{\nu}}\mathcal{L}_{\alpha\beta}^{\nu}\left(\lambda^{\nu,k+1},\mu^{\nu,k}\right) 
	-\nabla_{\lambda^{\nu}}\mathcal{L}_{\alpha\beta}^{\nu}(\lambda^{\nu,\ast},\mu^{\nu,k}) \right\| 
	=
	\frac{1}{c^{\lambda}_{\nu}}
	\left\Vert \beta_\nu \left( \lambda^{\nu,\ast} - \lambda^{\nu,k+1}\right)\right\Vert \\
	& \overset{(a)}{\leq} 
	\frac{1}{c^{\lambda}_{\nu}} \left\Vert 
	\beta_\nu \left(\lambda^{\nu,\ast} -\mu^{\nu,k}\right) - g^\nu (\mathbf{x}^{k+1})\right\Vert \\
	& \leq 
	\frac{\beta_\nu}{c^{\lambda}_{\nu}}  \left\Vert   \lambda^{\nu,k} - \lambda^{\nu,\ast} \right\Vert
	+\frac{1}{c^{\lambda}_{\nu}} \left\Vert g^\nu (\mathbf{x}^{k+1})\right\Vert.
\end{aligned}
\]
where the inequality $(a)$ comes from the definition of $\lambda^{\nu,k+1}$ in Step \ref{step3}, implying that $\lambda^{\nu,k+1} \geq \mu^{\nu,k} +\frac{1}{\beta_\nu}g(\mathbf{x}^{k+1})$. Since $\{\mathbf{x}^{k}\}$ is bounded and $g^{\nu}(\mathbf{x})$ is continuous differentiable (Assumption \ref{assumption1}), there exists $D_{\nu}>0$ such that  $\left\Vert g^{\nu}\left(\mathbf{x}^{k+1}\right)\right\Vert \leq D_{\nu}$. From the update of  $\mu^{\nu,k+1}=\lambda^{\nu,k+1}$ in Step \ref{step3}, we have  $\lambda^k=\mu^k$ for any $k \geq 1$. By taking  $c^{\lambda}_{\nu} =\beta_{\nu}$ we have
\[\label{eq:thm_bound_multi}
\begin{aligned}
	\left\Vert \lambda^{\nu,k+1}-\lambda^{\nu,\ast}\right\Vert 
	& \leq
	\left\Vert  \lambda^{\nu,k} - \lambda^{\nu,\ast}\right\Vert +\frac{D_{\nu}}{\beta_{\nu}}.
	\end{aligned}
\]
Therefore, the sequence $\left\lbrace \lambda^{\nu,k}\right\rbrace $ is bounded on any subset of $\mathbb{R}_{+}^{m_{\nu}}$.
\vspace{0.1in}

(c) Invoking Lemma \ref{lem_lag_behavior}, we have that for all $k\geq1$
\begin{displaymath}\label{eq:thm2_x_c_1}
\rho_{\nu}
\left\Vert 
\mathbf{x}^{k+1}-\mathbf{x}^{k} \right\Vert ^{2}
\leq
\mathcal{L}_{\alpha\beta}^{\nu}\left(\mathbf{x}^{k},z^{\nu,k},\lambda^{\nu,k},\mu^{\nu,k}\right)-\mathcal{L}_{\alpha\beta}^{\nu}\left(\mathbf{x}^{k+1},z^{\nu,k+1},\lambda^{\nu,k+1},\mu^{\nu,k+1}\right),
\end{displaymath}
where $\rho_{\nu}:=\frac{1}{2}\left( \gamma_{\nu}- L_{\nu}-\frac{3L_{g^{\nu}}^{2}}{\beta_{\nu}}\right)\geq0$. Summing the above inequality over $k=1,\ldots,K$, we obtain
\[
\begin{aligned}
\quad\sum_{k=1}^{K}\left\Vert \mathbf{x}^{k+1}-\mathbf{x}^{k}\right\Vert ^{2} 
& 
\leq \frac{1}{\rho_{\nu}} \left( \mathcal{L}_{\alpha\beta}^{\nu}\left(\mathbf{x}^{1},z^{\nu,1},\lambda^{\nu,1},\mu^{\nu,1}\right)-\mathcal{L}_{\alpha\beta}^{\nu}\left(\mathbf{x}^{K+1},z^{\nu,K+1},\lambda^{\nu,K+1},\mu^{\nu,K+1}\right)\right)  \\
&
\leq \frac{1}{\rho_{\nu}}  \left( \mathcal{L}_{\alpha\beta}^{\nu}\left(\mathbf{x}^{1},z^{\nu,1},\lambda^{\nu,1},\mu^{\nu,1}\right) -\theta_{\nu}\left(\mathbf{x}^{\ast}\right)\right) ,
\end{aligned}
\]
where the last inequality comes from \eqref{eq:thm2_x_a} with the fact that   $\mathcal{L}_{\alpha\beta}^{\nu}(\mathbf{x}^{\ast},z^{\nu,\ast},\lambda^{\nu,\ast},\mu^{\nu,\ast})=\theta_{\nu}\left(\mathbf{x}^{\ast}\right)$. Letting $K\rightarrow\infty$ yields
\begin{displaymath}
\sum_{k=1}^{\infty}
\left\Vert \mathbf{x}^{k+1}-\mathbf{x}^{k} \right\Vert ^{2}<\infty,
\end{displaymath}
from which, along with Lemma \ref{lem_bound_multi}, it also  follows immediately that  $\sum_{k=1}^{\infty}\left\Vert \lambda^{\nu,k+1}-\lambda^{\nu,k}\right\Vert ^2< \infty$ and  
$\sum_{k=1}^{\infty}\left\Vert \mu^{\nu,k+1}-\mu^{\nu,k}\right\Vert ^2< \infty$. Therefore,  we can deduce the desired results in \eqref{eq:thm2_x_c_result}.
\endproof

\section{Proofs of Main Convergence Results in Section \ref{sec4.2}}
\subsection{Proof of Theorem \ref{thm_limit_saddle}} \label{D.1}
\proof
By Theorem \ref{thm2_x}, the sequence $\left\lbrace \left(\mathbf{x}^{k}, z^{\nu,k}, \lambda^{\nu,k}, \mu^{\nu,k}\right) \right\rbrace$ is bounded, so there exists at least one limit point. Let $(\overline{\mathbf{x}}, \overline{z},\overline{\lambda},\overline{\mu})$ be a limit point of $\left\lbrace (\mathbf{x}^{k}, z^{\nu,k}, \lambda^{\nu,k}, \mu^{\nu,k}) \right\rbrace$, and let $\left\lbrace \left(\mathbf{x}^{k_j}, z^{\nu,k_j}, \lambda^{\nu,k_j}, \mu^{\nu,k_j}\right) \right\rbrace$ be a subsequence converging to $(\overline{\mathbf{x}}, \overline{z},\overline{\lambda},\overline{\mu})$ as $j \rightarrow \infty$. 
From Theorem \ref{thm2_x}\ref{itm:thm2_c},
it also follows that  $\left\lbrace \left(\mathbf{x}^{k_j+1}, z^{\nu,k_j+1}, \lambda^{\nu,k_j+1}, \mu^{\nu,k_j+1}\right) \right\rbrace \rightarrow (\overline{\mathbf{x}}, \overline{z},\overline{\lambda},\overline{\mu})$  as $j \rightarrow \infty$.

First, we show that a limit point $(\overline{\mathbf{x}},\overline{z}^{\nu},\overline{\lambda}^{\nu},\overline{\mu}^{\nu})$ satisfies the second inequality of the saddle point condition \eqref{eq:saddle}. Because  $\left\lbrace x^{\nu,k_j+1} \right\rbrace \rightarrow \overline{x}^{\nu}$ and $\left\lbrace x^{\nu,k_j} \right\rbrace \rightarrow \overline{x}^{\nu}$ as $j \rightarrow \infty$, we have from Step \ref{step1} that
\[
\overline{x}^{\nu}=\mathcal{P}_{\mathcal{X}_{\nu}}\left[ \overline{x}^{\nu}-\sigma_{\nu} \nabla_{x^{\nu}} \widehat{\mathcal{L}_{\alpha\beta}^{\nu}}(\overline{\mathbf{x}},\overline{z}^{\nu},\overline{\lambda}^{\nu},\overline{\mu}^{\nu};\overline{\mathbf{x}}) \right]. 
\]	
The limit point $\overline{x}^{\nu}$ is equivalent to a solution of the VI \cite[Prop. 1.5.8]{facchinei2007finite}:
\[
\nabla_{x^{\nu}} \widehat{\mathcal{L}_{\alpha\beta}^{\nu}} (\overline{\mathbf{x}},\overline{z}^{\nu},\overline{\lambda}^{\nu},\overline{\mu}^{\nu}; \overline{\mathbf{x}})^{T}\left({x}^{\nu} - \overline{x}^{\nu}\right) \geq 0, \quad \forall {x}^{\nu} \in \mathcal{X}_{\nu}.
\]
Using the fact that   $\nabla_{x^{\nu}}\widehat{\mathcal{L}_{\alpha\beta}^{\nu}} ( \overline{\mathbf{x}},\overline{z}^{\nu},\overline{\lambda}^{\nu},\overline{\mu}^{\nu};\overline{\mathbf{x}})
=\nabla_{x^{\nu}}\mathcal{L}_{\alpha\beta}^{\nu} (\overline{\mathbf{x}},\overline{z}^{\nu},\overline{\lambda}^{\nu},\overline{\mu}^{\nu})$ and the convexity of $\mathcal{L}_{\alpha\beta}^{\nu}$ with respect to $x^{\nu}$,  we obtain the first-order optimality condition for  $\mathcal{L}_{\alpha\beta}^{\nu}$:
\[
\nabla_{x^{\nu}} \mathcal{L}_{\alpha\beta}^{\nu}(\overline{\mathbf{x}},\overline{z}^{\nu},\overline{\lambda}^{\nu},\overline{\mu}^{\nu})^{T}\left({x}^{\nu} - \overline{x}^{\nu}\right) \geq 0, \quad \forall {x}^{\nu} \in \mathcal{X}_{\nu}.
\]
Equivalently,
\[
{\mathcal{L}_{\alpha\beta}^{\nu}} ( \overline{x}^{\nu},\overline{x}^{-\nu},\overline{z}^{\nu},\overline{\lambda}^{\nu},\overline{\mu}^{\nu})
\leq
{\mathcal{L}_{\alpha\beta}^{\nu}} ( x^{\nu},\overline{x}^{-\nu},\overline{z}^{\nu},\overline{\lambda}^{\nu},\overline{\mu}^{\nu}),
\]
which implies that   $(\overline{\mathbf{x}},\overline{z}^{\nu},\overline{\lambda}^{\nu},\overline{\mu}^{\nu})$ satisfies the second inequality of the saddle point condition  \eqref{eq:saddle}. 

Similarly, by the definitions of $\lambda^{\nu,k+1}$ and $\mu^{\nu,k+1}$ (as  maximizers) in Step \ref{step3}, the limit points $(\overline{\lambda}^{\nu},\overline{\mu}^{\nu})$ maximize  $\mathcal{L}_{\alpha\beta}^{\nu}(\overline{\mathbf{x}},{z}^{\nu}({\lambda}^{\nu},{\mu}^{\nu}),{\lambda}^{\nu},{\mu}^{\nu})$. We thus see that
\begin{align}
\nabla_{\lambda^{\nu}} \mathcal{L}_{\alpha\beta}^{\nu}(\overline{\mathbf{x}},\overline{z}^{\nu},\overline{\lambda}^{\nu},\overline{\mu}^{\nu})^{T}({\lambda}^{\nu} - \overline{\lambda}^{\nu}) &\leq 0, \quad \forall {\lambda}^{\nu} \in \mathbb{R}^{m_\nu}_+,  \notag \\
\nabla_{\mu^{\nu}} \mathcal{L}_{\alpha\beta}^{\nu}(\overline{\mathbf{x}},\overline{z}^{\nu},\overline{\lambda}^{\nu},\overline{\mu}^{\nu})^{T}({\mu}^{\nu} - \overline{\mu}^{\nu}) &\leq 0, \quad \forall {\mu}^{\nu} \in \mathbb{R}^{m_\nu}. \notag
\end{align}
Consequently, $(\overline{\mathbf{x}},\overline{z}^{\nu},\overline{\lambda}^{\nu},\overline{\mu}^{\nu})$ satisfies the first inequality of the saddle point condition  \eqref{eq:saddle}.
\endproof

Before proceeding with global convergence, let us provide some preliminaries, which are central to our global convergence analysis. 
\renewcommand\thetheorem{5}
\begin{lemma}[Uniformized K\L{} Property {\cite[Lemma 6]{bolte2014proximal}}] \label{lem_uniformized_KL_property}
	Let $\Omega$ be a compact set and let $\Psi: \mathbb{R}^n \rightarrow (-\infty, \infty]$ be proper, lower semicontinuous function. Assume that $\Psi$ is constant on $\Omega$ and satisfies the K\L{} property at each point of $\Omega$.  Then there exist $\varepsilon>0$, $\delta$ and $\varphi \in \Phi_\delta$  such that for all $\overline{u}$ in $\Omega$ and all $u$ in the following intersection:
	\begin{equation} \label{eq:lem_uniform_KL_1}
	\left\{ u \in \mathbb{R}^n: \mathrm{dist}(u, \Omega) < \varepsilon \right\} \cap \left[ \Psi(\overline{u}) < \Psi(u) < \Psi(\overline{u}) + \delta\right]
	\end{equation}
	one has,
	\begin{equation} \label{eq:lem_uniform_KL_2}
	\varphi^{\prime}(\Psi(u) -\Psi(\overline{u})) \cdot \mathrm{dist}(0,\partial \Psi(u)) \geq 1.
	\end{equation}
\end{lemma}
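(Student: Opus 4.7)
The plan is to exploit the compactness of $\Omega$ together with the constancy of $\Psi$ on $\Omega$ in order to paste together finitely many pointwise KŁ inequalities into a single uniform one. The crucial simplification is that, letting $c := \Psi(\bar u)$ for any $\bar u \in \Omega$ (well-defined because $\Psi$ is constant on $\Omega$), the reference value in the KŁ inequality is the same at every point of $\Omega$, so I only need one desingularizing function $\varphi$ and one pair of positive parameters $(\varepsilon,\delta)$ that serve every $\bar u \in \Omega$ simultaneously.

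First, for each $\bar u \in \Omega$, apply the pointwise KŁ property from Definition 4 to obtain an open ball $B(\bar u, r_{\bar u})$, a constant $\eta_{\bar u}>0$, and a function $\varphi_{\bar u} \in \Phi_{\eta_{\bar u}}$ such that $\varphi_{\bar u}'(\Psi(u)-c)\cdot \mathrm{dist}(0,\partial\Psi(u)) \geq 1$ for every $u \in B(\bar u, r_{\bar u}) \cap [\,c < \Psi(u) < c+\eta_{\bar u}\,]$. The shrunken balls $\{B(\bar u, r_{\bar u}/2)\}_{\bar u \in \Omega}$ form an open cover of the compact set $\Omega$, so there exist finitely many centers $\bar u_1,\ldots,\bar u_p$ whose half-balls still cover $\Omega$. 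Define
\[
\varepsilon \;:=\; \min_{1\le i\le p} \tfrac{r_{\bar u_i}}{2}>0, \qquad \delta \;:=\; \min_{1\le i\le p} \eta_{\bar u_i}>0.
\]
If $u$ satisfies $\mathrm{dist}(u,\Omega)<\varepsilon$, choose $\bar u\in\Omega$ with $\|u-\bar u\|<\varepsilon$ and $i$ with $\bar u\in B(\bar u_i, r_{\bar u_i}/2)$; then a triangle inequality gives $\|u-\bar u_i\|<\varepsilon + r_{\bar u_i}/2 \leq r_{\bar u_i}$, so $u$ lies in the neighborhood associated with $\bar u_i$.

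For the uniform desingularizing function I would take $\varphi(s) := \sum_{i=1}^{p}\varphi_{\bar u_i}(s)$ on $[0,\delta)$. Finite sums preserve continuity, concavity, the vanishing condition $\varphi(0)=0$, and the strict positivity of the derivative on $(0,\delta)$, so $\varphi \in \Phi_\delta$. Moreover, since every $\varphi_{\bar u_j}'$ is positive on $(0,\delta)$, one has $\varphi'(s) = \sum_j \varphi_{\bar u_j}'(s) \ge \varphi_{\bar u_i}'(s)$ for each $i$. Therefore, for any $u$ in the intersection \eqref{eq:lem_uniform_KL_1}, picking the index $i$ produced above yields
\[
\varphi'(\Psi(u)-c)\cdot \mathrm{dist}(0,\partial\Psi(u)) \;\ge\; \varphi_{\bar u_i}'(\Psi(u)-c)\cdot \mathrm{dist}(0,\partial\Psi(u)) \;\ge\; 1,
\]
and since $\Psi(\bar u)=c$ for every $\bar u\in\Omega$, this is exactly \eqref{eq:lem_uniform_KL_2}.

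The only subtle point in the argument is the covering: one must shrink the balls to radius $r_{\bar u}/2$ before invoking compactness, so that a point within distance $\varepsilon$ of $\Omega$ genuinely lands in one of the original (unshrunken) neighborhoods. Everything else is structural, and the constancy of $\Psi$ on $\Omega$ removes the need to worry about different anchor values $\Psi(\bar u)$ when gluing the local inequalities together.
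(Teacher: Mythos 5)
Your proof is correct. The paper itself does not prove this lemma---it is quoted verbatim as Lemma 6 of Bolte, Sabach, and Teboulle (2014)---and your argument (finite subcover by half-radius balls, $\varepsilon$ and $\delta$ taken as minima over the finitely many local KŁ data, and the summed desingularizing function $\varphi=\sum_i\varphi_{\bar u_i}$ with $\varphi'\ge\varphi_{\bar u_i}'$) is exactly the argument given in that reference, with the half-radius trick correctly handling points that are merely $\varepsilon$-close to $\Omega$ rather than in it.
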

Note that if the function $\Psi$ is continuously differentiable and $\Psi(\overline{u})=0$, the inequality \eqref{eq:lem_uniform_KL_2} can be rewritten as 
\begin{equation} \label{eq:uniform_KL_setting}
\varphi^{\prime}(\Psi(u) -\Psi(\overline{u})) \left\|\nabla \Psi(u) \right\| \geq 1. \notag
\end{equation}
With the uniformized K\L{} property, we can prove that the generated sequence has finite length, and hence the \emph{whole} sequence  converges to a saddle point. 
The techniques developed in \cite{bolte2014proximal}  are extended to our smooth constrained game setting with some modifications.

In order to exploit Lemma \ref{lem_uniformized_KL_property} for proving global convergence, we need to use the size of the gradient of the P-Lagrangian, denoted by $\widetilde{\nabla}\mathcal{L}_{\alpha\beta}$, and derive an upper bound on the gradient. Noting that  $\theta_\nu$ and $g^\nu$ are continuously differentiable,  $x^\nu \in \mathcal{X}_\nu$, and  $\lambda^\nu \in \mathbb{R}^{m_\nu}_+$, we consider the following projected gradients of $\mathcal{L}_{\alpha\beta}^\nu$ in $x^\nu$ and $\lambda^\nu$ for $x^\nu$-component and $\lambda^\nu$-component of $\widetilde{\nabla}\mathcal{L}_{\alpha\beta}$:
\begin{align} \label{eq:project_gradient_x}
\widetilde{\nabla}_{x^\nu}\mathcal{L}_{\alpha\beta}^\nu(\mathbf{x},z^\nu,\lambda^\nu,\mu^\nu)
&:= x^\nu - \mathcal{P}_{\mathcal{X}_\nu}\left[  x^\nu -  \nabla_{x^\nu}{\mathcal{L}}_{\alpha\beta}^\nu(\mathbf{x},z^\nu,\lambda^\nu,\mu^\nu) \right], \notag \\	
\widetilde{\nabla}_{\lambda^\nu}\mathcal{L}_{\alpha\beta}^\nu(\mathbf{x},z^\nu,\lambda^\nu,\mu^\nu)
&:= \lambda^\nu -\left[  \lambda^\nu +   \nabla_{\lambda^\nu}{\mathcal{L}}_{\alpha\beta}^\nu(\mathbf{x},z^\nu,\lambda^\nu,\mu^\nu) \right]^+. \notag
\end{align}
Let us now define the \emph{projected gradient} of $\mathcal{L}_{\alpha\beta}^\nu$ at $(\mathbf{x}^{k+1},z^{\nu,k+1},\lambda^{\nu,k+1},\mu^{\nu,k+1})$ as
\begin{equation} \label{eq:projected_gradient}
\widetilde{\nabla} \mathcal{L}_{\alpha\beta}^\nu(\mathbf{w}^{\nu,k+1}):= 
\left(\begin{aligned}
& q^{k+1}_{x^{\nu}} \\
& q^{k+1}_{z^{\nu}} \\
& q^{k+1}_{\lambda^{\nu}} \\
& q^{k+1}_{\mu^{\nu}}
\end{aligned}\right) 
= \left(\begin{aligned}
& x^{\nu,k+1} - \mathcal{P}_{\mathcal{X}_\nu}\left[  {x}^{\nu,k+1} -  \nabla_{x^\nu}\mathcal{L}_{\alpha \beta}^\nu(\mathbf{x}^{k+1},z^{\nu,k+1},\lambda^{\nu,k+1},\mu^{\nu,k+1}) \right] \\
&\nabla_{z^\nu}\mathcal{L}_{\alpha \beta}^\nu(\mathbf{x}^{k+1},z^{\nu,k+1},\lambda^{\nu,k+1},\mu^{\nu,k+1}) \\
&\lambda^{\nu,k+1} - \left[  \lambda^{\nu,k+1} +  \nabla_{\lambda^\nu}\mathcal{L}_{\alpha \beta}^\nu(\mathbf{x}^{k+1},z^{\nu,k+1},\lambda^{\nu,k+1},\mu^{\nu,k+1}) \right]^+ \\
&\nabla_{\mu^\nu}\mathcal{L}_{\alpha \beta}^\nu(\mathbf{x}^{k+1},z^{\nu,k+1},\lambda^{\nu,k+1},\mu^{\nu,k+1})
\end{aligned}\right). 
\end{equation}

It is clear that if $\widetilde{\nabla} \mathcal{L}_{\alpha\beta}^\nu(\mathbf{w}^{\nu,k+1}) \rightarrow 0$, then a saddle point of $\mathcal{L}_{\alpha\beta}^\nu(\mathbf{w}^{\nu})$ is obtained.
In what follows, we derive an upper bound on  $\widetilde{\nabla} \mathcal{L}_{\alpha\beta}^\nu(\mathbf{w}^{\nu,k+1})$ in terms of the  generated iterates by Algorithm \ref{algorithm1}.

Recall that the conditions \eqref{eq:assumption_lipschitz_1}, \eqref{eq:assumption_lipschitz_2}, \eqref{eq:assumption_lipschitz_3}, and  \eqref{eq:assumption_lipschitz_4} in  Assumption \ref{assumption_lipschitz} imply  there exist constants $M_{\nabla\theta_{\nu}}$ and $M_{\nabla{g^{\nu}}}$ such that 
 \begin{subequations}
		\begin{align}
		\left\Vert \nabla_{{x}^\nu}{\theta_{\nu}\left(\mathbf{x}_{1}\right)}-\nabla_{{x}^\nu}{\theta_{\nu}\left(\mathbf{x}_{2}\right)}\right\Vert 
		&\leq 
		M_{\nabla\theta_{\nu}}\left\Vert \mathbf{x}_{1}-\mathbf{x}_{2}\right\Vert ,\quad\forall\mathbf{x}_{1},\mathbf{x}_{2}\in\mathcal{X}_{\nu} \times \mathcal{X}_{-\nu},  \label{eq:uniform_lipsch3} \\
		\left\Vert \nabla_{{x}^\nu}{g^{\nu}\left(\mathbf{x}_{1}\right)}-\nabla_{{x}^\nu}{g^{\nu}\left(\mathbf{x}_{2}\right)}\right\Vert 
		&\leq M_{\nabla g^{\nu}}\left\Vert \mathbf{x}_{1}-\mathbf{x}_{2}\right\Vert ,\quad\forall\mathbf{x}_{1},\mathbf{x}_{2}\in\mathcal{X}_{\nu} \times \mathcal{X}_{-\nu}, \label{eq:uniform_lipsch4} 
		\end{align}
\end{subequations}
where $M_{\nabla \theta_\nu}= L_{\nu}(\theta_\nu) + L_{-\nu}(\theta_{\nu})$ and  $M_{\nabla g^\nu}= L_{\nu}(g^\nu) + L_{-\nu}(g^\nu)$. 
\renewcommand\thetheorem{6}
\begin{lemma} \label{lem_upper_bound_gradient}
	Let $\left\{\mathbf{w}^{\nu,k} \right\}_{\nu=1}^N$ be the sequence generated by Algorithm \ref{algorithm1}. Then, for every $\nu=1,\ldots,N$,  there exist constant $C_\nu>0$ such that for all $k  \geq 0$
	\begin{equation} \label{eq:lem_upper_bound_gradient}
	\left\| \widetilde{\nabla} \mathcal{L}_{\alpha\beta}^\nu(\mathbf{w}^{\nu,k+1}) \right\|
	\leq
	C_\nu \left\|\mathbf{x}^{k+1} -\mathbf{x}^{k}\right\|. 
	\end{equation}	
\end{lemma}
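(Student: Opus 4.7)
The plan is to estimate each of the four blocks of the projected gradient $\widetilde{\nabla}\mathcal{L}_{\alpha\beta}^\nu(\mathbf{w}^{\nu,k+1})$ in \eqref{eq:projected_gradient} separately and then combine them by the triangle inequality. Two of the four blocks vanish identically: Step \ref{step3} enforces $\mu^{\nu,k+1} = \lambda^{\nu,k+1}$, and together with the initialization $\lambda^{\nu,0} = \mu^{\nu,0}$ this propagates to $\lambda^{\nu,k} = \mu^{\nu,k}$ for every $k$, so that Step \ref{step2} gives $z^{\nu,k+1} = (\lambda^{\nu,k} - \mu^{\nu,k})/\alpha_\nu = 0$. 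Plugging these two identities into the closed forms $\nabla_{z^\nu}\mathcal{L}_{\alpha\beta}^\nu = -\lambda^\nu + \mu^\nu + \alpha_\nu z^\nu$ and $\nabla_{\mu^\nu}\mathcal{L}_{\alpha\beta}^\nu = z^\nu + \beta_\nu(\lambda^\nu - \mu^\nu)$ immediately gives $q^{k+1}_{z^\nu} = 0$ and $q^{k+1}_{\mu^\nu} = 0$, so only the $x^\nu$- and $\lambda^\nu$-blocks must be estimated.

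For $q^{k+1}_{\lambda^\nu}$, I would invoke that $\lambda^{\nu,k+1}$ maximizes $\mathcal{L}_{\alpha\beta}^\nu(\mathbf{x}^{k+1}, z^{\nu,k+1}, \cdot, \mu^{\nu,k})$ over $\mathbb{R}_+^{m_\nu}$, which through the standard projection--VI equivalence (valid for any positive step size and in particular for step $1$) gives the fixed-point identity
\[
\lambda^{\nu,k+1} = \bigl[\lambda^{\nu,k+1} + \nabla_{\lambda^\nu}\mathcal{L}_{\alpha\beta}^\nu(\mathbf{x}^{k+1}, z^{\nu,k+1}, \lambda^{\nu,k+1}, \mu^{\nu,k})\bigr]^+.
\]
Subtracting this from the definition of $q^{k+1}_{\lambda^\nu}$ (in which the last argument is $\mu^{\nu,k+1}$ rather than $\mu^{\nu,k}$) and using non-expansiveness of $[\cdot]^+$, the norm $\|q^{k+1}_{\lambda^\nu}\|$ collapses to the size of the gradient difference $\beta_\nu\|\mu^{\nu,k+1} - \mu^{\nu,k}\| = \beta_\nu\|\lambda^{\nu,k+1} - \lambda^{\nu,k}\|$, which Lemma \ref{lem_bound_multi} bounds by $L_{g^\nu}\|\mathbf{x}^{k+1} - \mathbf{x}^k\|$.

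The $x^\nu$-block is the most involved. Assuming the inner stopping test in Step \ref{step1} is solved exactly (the tolerance $\varepsilon$ only contributes an additive term that will be absorbed into $C_\nu$), the iterate $x^{\nu,k+1}$ satisfies the unit-step fixed-point relation $x^{\nu,k+1} = \mathcal{P}_{\mathcal{X}_\nu}[x^{\nu,k+1} - \nabla_{x^\nu}\widehat{\mathcal{L}_{\alpha\beta}^\nu}(\mathbf{x}^{k+1}, z^{\nu,k}, \lambda^{\nu,k}, \mu^{\nu,k}; \mathbf{x}^k)]$. Substituting this inside the definition of $q^{k+1}_{x^\nu}$ and again using non-expansiveness of the projection reduces the task to bounding
\[
\bigl\| \nabla_{x^\nu}\widehat{\mathcal{L}_{\alpha\beta}^\nu}(\mathbf{x}^{k+1}; \mathbf{x}^k) - \nabla_{x^\nu}\mathcal{L}_{\alpha\beta}^\nu(\mathbf{x}^{k+1}, z^{\nu,k+1}, \lambda^{\nu,k+1}, \mu^{\nu,k+1}) \bigr\|.
\]
Expanding $\nabla_{x^\nu}\widehat{\mathcal{L}_{\alpha\beta}^\nu}(\mathbf{x}^{k+1}; \mathbf{x}^k) = \nabla_{x^\nu}\mathcal{L}_{\alpha\beta}^\nu(\mathbf{x}^k, \lambda^{\nu,k}) + \gamma_\nu(x^{\nu,k+1} - x^{\nu,k})$, and noting that $\nabla_{x^\nu}\mathcal{L}_{\alpha\beta}^\nu(\mathbf{x}, \lambda^\nu) = \nabla_{x^\nu}\theta_\nu(\mathbf{x}) + \sum_i \lambda_i^\nu \nabla_{x^\nu}g_i^\nu(\mathbf{x})$ is independent of $z^\nu$ and $\mu^\nu$, I would split the remaining difference by the standard add-and-subtract trick into a shift in $\mathbf{x}$ (controlled via the uniform Lipschitz constants $M_{\nabla\theta_\nu}, M_{\nabla g^\nu}$ of \eqref{eq:uniform_lipsch3}--\eqref{eq:uniform_lipsch4}) and a shift in $\lambda^\nu$ (controlled again by Lemma \ref{lem_bound_multi}). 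The boundedness of $\{\mathbf{x}^k\}$ and $\{\lambda^{\nu,k}\}$ supplied by Theorem \ref{thm2_x}\ref{itm:thm2_a}--\ref{itm:thm2_b}, together with continuity of each $\nabla_{x^\nu}g_i^\nu$, provides uniform constants and yields $\|q^{k+1}_{x^\nu}\| \leq C_\nu^x \|\mathbf{x}^{k+1} - \mathbf{x}^k\|$. The main obstacle is really only the bookkeeping of these constants together with the harmless absorption of the inner tolerance $\varepsilon$; combining the four component bounds via the triangle inequality then delivers \eqref{eq:lem_upper_bound_gradient}.
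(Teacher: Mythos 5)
Your decomposition is the same as the paper's: the $z^\nu$- and $\mu^\nu$-blocks vanish by Steps \ref{step2}--\ref{step3}, the $\lambda^\nu$-block is handled exactly as in the paper (fixed-point characterization of the maximizer with last argument $\mu^{\nu,k}$, nonexpansiveness of $[\cdot]^+$, then Lemma \ref{lem_bound_multi}), and the $x^\nu$-block is reduced by the same add-and-subtract trick to the Lipschitz constants $M_{\nabla\theta_\nu}, M_{\nabla g^\nu}$, the bounds $B_{\lambda^\nu}, R_{g^\nu}$ from Theorem \ref{thm2_x}, and Lemma \ref{lem_bound_multi} again. So in substance you recover the paper's constant $C_\nu$.

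The one step that does not hold as written is your treatment of the inner-loop inexactness in the $x^\nu$-block. You assume $x^{\nu,k+1}$ satisfies the unit-step fixed-point relation exactly and assert that the tolerance $\varepsilon$ of Step \ref{step1} ``only contributes an additive term that will be absorbed into $C_\nu$.'' An additive constant cannot be absorbed into a bound of the multiplicative form \eqref{eq:lem_upper_bound_gradient}: since $\|\mathbf{x}^{k+1}-\mathbf{x}^k\|\to 0$, any residual additive $\varepsilon>0$ would eventually dominate the right-hand side and the claimed inequality would fail for large $k$. The paper sidesteps this by routing the estimate through the \emph{exact} subproblem solution $\widehat{x}^{\nu,k}$ of \eqref{eq:fixed_point} — which satisfies the fixed-point identity with no error — and then controlling the discrepancy $\|x^{\nu,k+1}-\widehat{x}^{\nu,k}\|$ by $\|x^{\nu,k+1}-x^{\nu,k}\|$ (a multiplicative, not additive, correction, justified by the contraction in Lemma \ref{prop_converge_inner}). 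Replacing your exactness assumption by this device repairs the argument; everything else in your proposal goes through.
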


\proof
We first estimate an upper bound for the norm of component $q^{k+1}_{x^{\nu}}$ in $ \widetilde{\nabla}\mathcal{L}_{\alpha\beta}^\nu(\mathbf{w}^{\nu,k+1})$. Recall that  there exists a unique solution $\widehat{\mathbf{x}}^{k}$ of the approximation subproblem $\mathrm{VI}^{k}(\mathbf{X},\widehat{\mathbf{L}}^{k})$ in \eqref{eq:approx_VI} at each iteration $k$ (Lemma \ref{prop_converge_inner}), and denote by $\widehat{x}^{\nu,k}$ the $\nu$th component of  $\widehat{\mathbf{x}}^{k}$.
From the fixed-point characterization of $\widehat{x}^{\nu,k}$, we know that for every $\nu=1,\ldots,N,$
\begin{align}
\widehat{x}^{\nu,k} 
& = \mathcal{P}_{\mathcal{X}_{\nu}} \left[\widehat{x}^{\nu,k} -  \nabla_{x^{\nu}}\widehat{\mathcal{L}_{\alpha\beta}^{\nu}}\left(\widehat{\mathbf{x}}^{k},z^{\nu,k},\lambda^{\nu,k},\mu^{\nu,k}; \mathbf{x}^{k}\right) \right] \notag \\ 
& = \mathcal{P}_{\mathcal{X}_\nu}\left[\widehat{x}^{\nu,k} - \left( \nabla_{x^{\nu}} \theta_\nu(\mathbf{x}^k) + \nabla_{x^{\nu}} g^\nu(\mathbf{x}^k) \lambda^{\nu,k} + \gamma_\nu (\widehat{x}^{\nu,k} - {x}^{\nu,k}) \right) \right]. \notag
\end{align}
Hence, 
\[
\begin{aligned}
\left\| q^{k+1}_{x^\nu} \right\| 
&= \left\| x^{\nu,k+1} - \widehat{x}^{\nu,k} + \widehat{x}^{\nu,k} - \mathcal{P}_{\mathcal{X}_\nu}\left[ x^{\nu,k+1} -  \nabla_{{x}^\nu}{\mathcal{L}_{\alpha\beta}^\nu}(\mathbf{x}^{k+1},z^{\nu,k+1},\lambda^{\nu,k+1},\mu^{\nu,k+1}) \right] \right\|  \\
&= 
\left\| x^{\nu,k+1} - \widehat{x}^{\nu,k}\right\| 
+ \left\| \mathcal{P}_{\mathcal{X}_\nu}\left[\widehat{x}^{\nu,k} -\nabla_{x^{\nu}} \theta_\nu(\mathbf{x}^{k}) - \nabla_{x^{\nu}} g^\nu(\mathbf{x}^{k}) \lambda^{\nu,k} - \gamma_\nu (\widehat{x}^{\nu,k} - {x}^{\nu,k})\right] \right. \\
& \hspace{2in}\left. 
- \mathcal{P}_{\mathcal{X}_\nu}\left[ x^{\nu,k+1} -\nabla_{x^{\nu}} \theta_\nu(\mathbf{x}^{k+1}) - \nabla_{x^{\nu}} g^\nu(\mathbf{x}^{k+1}) \lambda^{\nu,k+1}\right] \right\| \\
& \overset{(a)}{\leq} 
\left\| x^{\nu,k+1} - \widehat{x}^{\nu,k}\right\| 
+ \left\|\left[  \widehat{x}^{\nu,k} -\nabla_{x^{\nu}} \theta_\nu(\mathbf{x}^{k}) - \nabla_{x^{\nu}} g^\nu(\mathbf{x}^{k}) \lambda^{\nu,k} - \gamma_\nu (\widehat{x}^{\nu,k} - {x}^{\nu,k})\right]  \right. \\
& \hspace{2in}\left. 
- \left[  x^{\nu,k+1}  -\nabla_{x^{\nu}} \theta_\nu(\mathbf{x}^{k+1}) - \nabla_{x^{\nu}} g^\nu(\mathbf{x}^{k+1}) \lambda^{\nu,k+1}\right] \right\| \\
& \overset{(b)}{\leq}
(2+\gamma_\nu)\left\| \mathbf{x}^{k+1}-\mathbf{x}^{k} \right\| 
+ \left\| \nabla_{x^{\nu}} \theta_\nu(\mathbf{x}^{k+1})- \nabla_{x^{\nu}} \theta_\nu(\mathbf{x}^{k}) 
+  \nabla_{x^{\nu}} g^\nu(\mathbf{x}^{k+1}) \lambda^{\nu,k+1}  -\nabla_{x^{\nu}} g^\nu(\mathbf{x}^{k}) \lambda^{\nu,k} \right\|,
\end{aligned}
\]
where $(a)$ follows from the nonexpansive property of the
projection operator, and  $(b)$ is due to  the facts that $\left\| x^{\nu,k+1} - \widehat{x}^{\nu,k}\right\| \leq \left\| x^{\nu,k+1} - {x}^{\nu,k}\right\|$ and $ \left\| x^{\nu,k+1} - {x}^{\nu,k}\right\| \leq \left\|\mathbf{x}^{k+1}-\mathbf{x}^{k} \right\|$. Then,
by adding and subtracting  $g^\nu(\mathbf{x}^{k}) \lambda^{\nu,k+1}$ and using the triangle inequality, we obtain 
\begin{align}
\left\| q^{k+1}_{x^\nu} \right\| 
&\leq 
(2+\gamma_\nu)  \left\| \mathbf{x}^{k+1}-\mathbf{x}^{k}\right\| + \left\| \nabla_{x^{\nu}} \theta_\nu(\mathbf{x}^{k+1})- \nabla_{x^{\nu}} \theta_\nu(\mathbf{x}^{k}) \right\| \notag \\
&
\qquad \quad
+
\left\| \nabla_{x^{\nu}} g^\nu(\mathbf{x}^{k+1}) \lambda^{\nu,k+1}  -\nabla_{x^{\nu}} g^\nu(\mathbf{x}^{k}) \lambda^{\nu,k+1} \right\| 
+ \left\| \nabla_{x^{\nu}} g^\nu(\mathbf{x}^{k}) \lambda^{\nu,k+1} - \nabla_{x^{\nu}} g^\nu(\mathbf{x}^{k}) \lambda^{\nu,k}  \right\| \notag \\
& \leq
(2+\gamma_\nu)  \left\| \mathbf{x}^{k+1}-\mathbf{x}^{k}\right\|
+ M_{\nabla \theta_\nu} \left\| \mathbf{x}^{k+1}-\mathbf{x}^{k}\right\| 
+ M_{\nabla g^\nu} B_{\lambda^\nu} \left\| \mathbf{x}^{k+1}-\mathbf{x}^{k} \right\| 
+ R_{g^\nu} \left\| \lambda^{\nu,k+1} - \lambda^{\nu,k}\right\|    \notag \\
& \leq
\left(2+\gamma_\nu + M_{\nabla \theta_\nu}  +M_{\nabla g^\nu} B_{\lambda^\nu} +\frac{R_{g^\nu} L_{g^\nu}}{\beta_\nu}\right) \left\| \mathbf{x}^{k+1}-\mathbf{x}^{k} \right\|, \label{eq:lem_norm_dx}
\end{align}
where the second inequality follows from the Lipschitz  continuity of $\nabla_{x^{\nu}} \theta_\nu$ and $\nabla_{x^{\nu}} g^\nu$ (see  \eqref{eq:uniform_lipsch3} and \eqref{eq:uniform_lipsch4}), and the boundedness of $\left\lbrace \mathbf{x}^k \right\rbrace $ and $\left\lbrace \lambda^{\nu,k} \right\rbrace $ implying that there exist constants  $B_{\lambda^\nu}:=\mathrm{max}_{k \in \mathbb{N}}\left\|\lambda^{\nu,k} \right\|$ and $R_{g^\nu}:=\mathrm{max}_{k \in \mathbb{N}}\left\|\nabla_{x^\nu}g^\nu(\mathbf{x}^{k}) \right\|$; in the last inequality we used that $\left\Vert \lambda^{\nu,k+1}- \lambda^{\nu,k} \right\Vert \leq \frac{L_{g^\nu}}{\beta_\nu} \left\Vert \mathbf{x}^{k+1}-\mathbf{x}^{k} \right\Vert$ (Lemma \ref{lem_bound_multi}).

Next, notice that from the definition of $\lambda^{\nu,k+1} \in \mathbb{R}^{m_\nu}_+$ as a maximizer for given $(\mathbf{x}^{k+1}, z^{\nu,k+1}, \mu^{\nu,k})$, $\lambda^{\nu,k+1}$ is also characterized by 
\[
\lambda^{\nu,k+1}=\left[  \lambda^{\nu,k+1} +  \nabla_{\lambda^\nu}\mathcal{L}_{\alpha \beta}^\nu(\mathbf{x}^{k+1},z^{\nu,k+1},\lambda^{\nu,k+1},\mu^{\nu,k}) \right]^+,
\]
which, together with the nonexpansive property of the projection onto $ \mathbb{R}^{m_\nu}_+$ and Lemma \ref{lem_bound_multi}, yields
\begin{align}
\left\| q^{k+1}_{\lambda^\nu} \right\| 
&= \left\| \left[ \lambda^{\nu,k+1} + (g^\nu(\mathbf{x}^{k+1}) - z^{\nu,k+1}) -\beta_\nu (\lambda^{\nu,k+1}-\mu^{\nu,k})\right]^+ \right. \notag \\
&\hspace{1.35in} 
\left.  - \left[ \lambda^{\nu,k+1} + (g^\nu(\mathbf{x}^{k+1}) - z^{\nu,k+1}) -\beta_\nu (\lambda^{\nu,k+1}-\mu^{\nu,k+1})\right]^+  \right\| \notag\\
& \leq
\left\| \beta_\nu ( \mu^{\nu,k+1} -\mu^{\nu,k}) \right\|  
\leq {L_{g^\nu}} \left\Vert \mathbf{x}^{k+1}-\mathbf{x}^{k}\right\|.  \label{eq:lem_norm_dlambda} 
\end{align}

In addition, recalling that the definitions of $z^{\nu,k+1}$ in Step \ref{step2} and  $\mu^{\nu,k+1}$ in Step \ref{step3}, we have
\begin{align}
\left\| q^{k+1}_{z^\nu} \right\| 
&= \left\|  (\mu^{\nu,k+1}-\lambda^{\nu,k+1}) + \alpha_\nu z^{\nu,k+1}  \right\|=0, \label{eq:lem_norm_dz} \\
\left\| q^{k+1}_{\lambda^\nu} \right\| 
&= \left\| z^{\nu,k+1} + \beta_\nu (\lambda^{\nu,k+1}-\mu^{\nu,k+1}) \right\|  =0.  \label{eq:lem_norm_dmu}
\end{align}	
Therefore, summing the inequalities \eqref{eq:lem_norm_dx} and  \eqref{eq:lem_norm_dlambda},  we deduce for all $k\geq 0 $ 
\[
\left\| \widetilde{\nabla} \mathcal{L}(\mathbf{w}^{\nu,k+1}) \right\|
=\sum_{i=x^\nu,z^\nu, \lambda^\nu,\mu^\nu} \left\| {q}_i^{\nu,k+1} \right\|
\leq 
C_\nu \left\| \mathbf{x}^{k+1} -\mathbf{x}^{k}\right\|
\]
with the positive constant
\[ 
C_\nu=2+\gamma_\nu + M_{\nabla \theta_\nu}  +M_{\nabla g^\nu} B_{\lambda^\nu} +\frac{R_{g^\nu} L_{g^\nu}}{\beta_\nu} + L_{g^\nu}.  
\]
This completes the proof.
\endproof

\subsection{Proof of Theorem \ref{thm_global_convergence}} \label{D.2}
\proof
Let $\overline{\mathbf{w}}^{\nu}:=(\overline{\mathbf{x}},\overline{z}^{\nu},\overline{\lambda}^{\nu},\overline{\mu}^{\nu})$ be a limit point of the sequence $\left\lbrace \mathbf{w}^{\nu,k}=(\mathbf{x}^{k},z^{\nu,k},\lambda^{\nu,k},\mu^{\nu,k})\right\rbrace $ that is bounded for every $\nu=1,\ldots,N$. Then, by the continuity of $\mathcal{L}_{\alpha\beta}^{\nu}$, we have
\begin{equation} \label{eq:thm_global_convergence_eq1}
\underset{k \rightarrow \infty} {\mathrm{lim}} \mathcal{L}_{\alpha\beta}^{\nu}(\mathbf{w}^{\nu,k})
= 
\mathcal{L}_{\alpha\beta}^{\nu} (\overline{\mathbf{w}}^\nu).
\end{equation}
In the following, we consider two cases: 
\vspace{0.05in}

\textit{Case 1.}
Suppose that there exists an integer $\bar{k}$ such that 
$\mathcal{L}_{\alpha\beta}^{\nu} (\mathbf{w}^{\nu,\bar{k}})= 
\mathcal{L}_{\alpha\beta}^{\nu} (\overline{\mathbf{w}}^\nu)$ for  $\nu=1,\ldots,N$. Since the sequence $\left\lbrace \mathcal{L}_{\alpha\beta}^{\nu}\right\rbrace$ is nonincreasing, we have that  
$\mathcal{L}_{\alpha\beta}^{\nu} (\mathbf{w}^{\nu,k})
= 
\mathcal{L}_{\alpha\beta}^{\nu} (\overline{\mathbf{w}}^\nu)$ for all $k \geq \bar{k}$. 
Then, we have from Lemma \ref{lem_lag_behavior} that for any $t\geq 0$
\begin{equation}  
\rho_\nu \left\Vert \mathbf{x}^{k+t}-\mathbf{x}^{k} \right\Vert ^{2}
\leq
\mathcal{L}_{\alpha\beta}^\nu(\mathbf{w}^{\nu,k}) -\mathcal{L}_{\alpha\beta}^\nu(\mathbf{w}^{\nu,k+t}) =0, \notag
\end{equation}
which leads to 
\begin{equation}
\mathbf{x}^{k+1}-\mathbf{x}^{k} =0, \qquad  \forall  k \geq \bar{k},
\end{equation}
From Lemma \ref{lem_bound_multi}, we also obtain that  $\lambda^{\nu,k+1}-\lambda^{\nu,k}=0$ and  $\mu^{\nu,k+1}-\mu^{\nu,k}=0$ for all $k \geq \bar{k}$.  Therefore, $\left\lbrace \mathbf{w}^{\nu,k}=(\mathbf{x}^{k}, z^{\nu,k}, \lambda^{\nu,k}, \mu^{\nu,k}) \right\rbrace$  must be eventually constant (stationary), and  it thus has finite length. 
\vspace{0.05in}

\textit{Case 2.}
Consider now the case where such an integer $\bar{k}$ does not exist (and every $\left\lbrace  \mathbf{w}^{\nu,k} \right\rbrace$ is nonstationary) for $\nu=1,\ldots,N$. In this case, we first show that the P-Lagrangian $\mathcal{L}_{\alpha\beta}^{\nu}$ is finite and constant on the set of all limit points $\omega_\nu({\mathbf{w}}^{\nu}_0)$ of the sequence $\left\lbrace \mathbf{w}^{\nu,k}\right\rbrace$, and then apply Lemma \ref{lem_uniformized_KL_property} to show that $\left\lbrace  \mathbf{w}^{\nu,k} \right\rbrace$ is a Cauchy sequence and convergent. 

First, note that since  the sequence $\{\mathcal{L}_{\alpha\beta}^{\nu}\}$ is nonincreasing, we have that $\mathcal{L}_{\alpha\beta}^{\nu}({\mathbf{w}}^{\nu,k})> \mathcal{L}_{\alpha\beta}^{\nu}(\overline{\mathbf{w}}^\nu)$ for all $k$. 
This, along with \eqref{eq:thm_global_convergence_eq1}, implies that there exists an integer $k_0 $ large enough such that for any  $\varepsilon>0$ and $\delta>0$ in Lemma \ref{lem_uniformized_KL_property} 
\begin{equation}
\mathcal{L}_{\alpha\beta}^{\nu} (\overline{\mathbf{w}}^\nu)
< \mathcal{L}_{\alpha\beta}^{\nu} (\mathbf{w}^{\nu,k}) 
< \mathcal{L}_{\alpha\beta}^{\nu} (\overline{\mathbf{w}}^\nu) +\delta 
\quad \mathrm{and} \quad 
\mathrm{dist}(\mathbf{w}^{\nu,k},\omega({\mathbf{w}}_{0}^{\nu}) )< \varepsilon \quad \mathrm{for \ all} \ k \geq k_0, 
\end{equation}	
where the second comes from the fact that ${\mathrm{lim}}_{k \rightarrow \infty}\mathrm{dist}(\mathbf{w}^{\nu,k},\omega({\mathbf{w}}^{\nu}_0))=0$ (see Theorem \ref{thm_limit_saddle}). Thus  $\left\lbrace \mathbf{w}^{\nu,k}\right\rbrace$ belongs to the intersection in \eqref{eq:lem_uniform_KL_1} with	$\Omega=\omega_\nu({\mathbf{w}}^{\nu}_{0})$ for all $k \geq k_0 $, and $\Omega=\omega_\nu({\mathbf{w}}^{\nu}_{0})$ is nonempty and compact. Recall that $\{\mathcal{L}_{\alpha\beta}^\nu\}$ is bounded below by the value of  $\mathcal{L}_{\alpha\beta}^\nu$ at a saddle point, and hence $\{\mathcal{L}_{\alpha\beta}^\nu\}$ converges to a finite limit, denoted by $\underline{\mathcal{L}^{\nu}}$. It then  follows from  \eqref{eq:thm_global_convergence_eq1} that $\underline{\mathcal{L}^{\nu}}=\mathcal{L}_{\alpha\beta}^{\nu} (\overline{\mathbf{w}}^\nu)$, which shows that  $\mathcal{L}_{\alpha\beta}^{\nu}$ is finite and constant on $\omega^\nu(\overline{\mathbf{w}}^{\nu}_0)$.

Thus, since $\mathcal{L}_{\alpha\beta}^{\nu}$ is a KL function, by applying Lemma \ref{lem_uniformized_KL_property} with $\Omega=\omega^\nu({\mathbf{w}}^{\nu}_0)$ and 
$\partial \Psi(u)=\widetilde{\nabla} \mathcal{L}_{\alpha\beta}^{\nu} (\mathbf{w}^{\nu,k})$, 
we get that for any $k > k_0$ 
\begin{equation}
\varphi^{\prime}\left( \mathcal{L}_{\alpha\beta}^{\nu} (\mathbf{w}^{\nu,k}) -\mathcal{L}_{\alpha\beta}^{\nu} (\overline{\mathbf{w}}^\nu)\right)\cdot 
\mathrm{dist} \left(0, \widetilde{\nabla} \mathcal{L}_{\alpha\beta}^{\nu} (\mathbf{w}^{\nu,k}) \right) \geq 1, \notag
\end{equation}
which combined with Lemma \ref{lem_upper_bound_gradient} gives
\begin{equation}\label{eq:thm_global_case2_1}
\varphi^{\prime}\left( \mathcal{L}_{\alpha\beta}^{\nu} (\mathbf{w}^{\nu,k}) -\mathcal{L}_{\alpha\beta}^{\nu} (\overline{\mathbf{w}}^\nu)\right) 
\geq
\frac{1}{\mathrm{dist} \left(0, \widetilde{\nabla} \mathcal{L}_{\alpha\beta}^{\nu} (\mathbf{w}^{\nu,k}) \right)}
\geq
\frac{1}
{C_\nu  \left\Vert \mathbf{x}^{k} -\mathbf{x}^{k-1} \right\Vert}. 
\end{equation}
On the other hand, since $\varphi$ is concave function, we know that
\begin{multline}
\varphi\left( \mathcal{L}_{\alpha\beta}^{\nu} (\mathbf{w}^{\nu,k}) -\mathcal{L}_{\alpha\beta}^{\nu} (\overline{\mathbf{w}}^\nu)\right) 
-\varphi\left( \mathcal{L}_{\alpha\beta}^{\nu} (\mathbf{w}^{\nu,k+1}) -\mathcal{L}_{\alpha\beta}^{\nu} (\overline{\mathbf{w}}^\nu)\right) \\
\geq
\varphi^{\prime}\left(\mathcal{L}_{\alpha\beta}^{\nu} (\mathbf{w}^{\nu,k}) -\mathcal{L}_{\alpha\beta}^{\nu} (\overline{\mathbf{w}}^\nu)\right)\left(\mathcal{L}_{\alpha\beta}^{\nu} (\mathbf{w}^{\nu,k}) -\mathcal{L}_{\alpha\beta}^{\nu} ({\mathbf{w}^{\nu,k+1}})\right). \notag 
\end{multline}
For convenience, we define for any $p,q \in \mathbb{N}$
\[
\triangle_{p,q}:=\varphi\left(\mathcal{L}_{\alpha\beta}^{\nu}(\mathbf{w}^{\nu,p}) -\mathcal{L}_{\alpha\beta}^{\nu} (\overline{\mathbf{w}}^\nu)\right)
-\varphi\left(\mathcal{L}_{\alpha\beta}^{\nu} (\mathbf{w}^{\nu,q}) -\mathcal{L}_{\alpha\beta}^{\nu} (\overline{\mathbf{w}}^\nu)\right).
\]
Then we get
\begin{equation} \label{eq:thm_global_case2_2}
\triangle_{k,k+1} \geq \varphi^{\prime}\left(\mathcal{L}_{\alpha\beta}^{\nu} (\mathbf{w}^{\nu,k}) - \mathcal{L}_{\alpha\beta}^{\nu} (\overline{\mathbf{w}}^\nu)\right)
\left(\mathcal{L}_{\alpha\beta}^{\nu} (\mathbf{w}^{\nu,k}) -\mathcal{L}_{\alpha\beta}^{\nu} ({\mathbf{w}^{\nu,k+1}})\right).
\end{equation}
Recalling from Lemma \ref{lem_lag_behavior} that  $\mathcal{L}_{\alpha\beta}^{\nu} (\mathbf{w}^{\nu,k}) - \mathcal{L}_{\alpha\beta}^{\nu} ({\mathbf{w}^{\nu,k+1}}) \geq {\rho_\nu} \left\Vert \mathbf{x}^{k+1} - \mathbf{x}^{k} \right\Vert^2$, we combine \eqref{eq:thm_global_case2_1} and \eqref{eq:thm_global_case2_2} to obtain
\begin{equation}
\triangle_{k,k+1} \geq \frac{\rho_\nu\left\Vert \mathbf{x}^{k+1} -\mathbf{x}^{k} \right\Vert^2}
{C_\nu \left\Vert \mathbf{x}^{k} -\mathbf{x}^{k-1} \right\Vert}. \notag
\end{equation}
Multiplying the above inequality by $\frac{C_\nu}{\rho_\nu}\left\Vert \mathbf{x}^{k} -\mathbf{x}^{k-1} \right\Vert$  gives 
\[
\left\Vert \mathbf{x}^{k+1} -\mathbf{x}^{k} \right\Vert^2 
\leq 
\xi_\nu \triangle_{k,k+1} \left\Vert \mathbf{x}^{k} -\mathbf{x}^{k-1} \right\Vert \quad \mathrm{where} \ \  \xi_\nu:=\frac{C_\nu}{\rho_\nu},
\]
and hence
$
2\left\Vert \mathbf{x}^{k+1} -\mathbf{x}^{k} \right\Vert 
\leq 
2\sqrt{\xi_\nu \triangle_{k,k+1} \left\Vert \mathbf{x}^{k} -\mathbf{x}^{k-1} \right\Vert}.
$
Using the inequality $2\sqrt{ab} \leq a+b$ for any $a,b \geq 0$ with  $a=\left\Vert \mathbf{x}^{k} -\mathbf{x}^{k-1} \right\Vert$ and $b=\xi_\nu \triangle_{k,k+1}$, we have
\begin{equation} \label{eq:finite_length_core_ineq}
2\left\Vert \mathbf{x}^{k+1} -\mathbf{x}^{k} \right\Vert 
\leq 
\left\Vert \mathbf{x}^{k} -\mathbf{x}^{k-1} \right\Vert
+ \xi_\nu \triangle_{k,k+1}.
\end{equation}
Now we show that for any $k > k_0$ the following inequality holds:
\[
2 \sum_{l={k_0}+1}^k \left\Vert \mathbf{x}^{l+1} -\mathbf{x}^{l} \right\Vert
\leq 
\left\Vert \mathbf{x}^{k_0+1} -\mathbf{x}^{k_0} \right\Vert + \xi_\nu \triangle_{k_0+1,k+1}.
\]
By summing \eqref{eq:finite_length_core_ineq} over $l=k_0+1,\ldots, k$, we have
\begin{align}
2 \sum_{l=k_0 + 1}^k \left\Vert \mathbf{x}^{l+1} -\mathbf{x}^{l} \right\Vert
& \leq 
\sum_{l=k_0 +1 }^k\left\Vert \mathbf{x}^{l} -\mathbf{x}^{l-1} \right\Vert
+ \xi_\nu \sum_{l=k_0 +1 }^k \triangle_{l,l+1} \notag \\
&\leq
\sum_{l=k_0 +1 }^k\left\Vert \mathbf{x}^{l+1} -\mathbf{x}^{l} \right\Vert
+ \left\Vert \mathbf{x}^{k_0+1} -\mathbf{x}^{k_0} \right\Vert 
+ \xi_\nu \sum_{l=k_0 +1 }^k \triangle_{l,l+1} \label{eq:thm_global_case2_3}
\end{align}
and using fact that $\triangle_{p,q}+\triangle_{q,r}=\triangle_{p,r}$ for all $p,q,r \in \mathbb{N}$, we get 
\begin{align}
\sum_{l=k_0 +1 }^k\triangle_{l,l+1}
=\triangle_{k_0+1,k+1} 
& = \varphi\left(\mathcal{L}_{\alpha\beta}^{\nu} (\mathbf{w}^{\nu,k_0+1}) -\mathcal{L}_{\alpha\beta}^{\nu} (\overline{\mathbf{w}}^\nu)\right)
-\varphi\left(\mathcal{L}_{\alpha\beta}^{\nu} (\mathbf{w}^{\nu,k_0+2}) -\mathcal{L}_{\alpha\beta}^{\nu} (\overline{\mathbf{w}}^\nu)\right)  \notag \\
& \leq
\varphi\left(\mathcal{L}_{\alpha\beta}^{\nu} (\mathbf{w}^{\nu,k_0+1}) -\mathcal{L}_{\alpha\beta}^{\nu} (\overline{\mathbf{w}}^\nu)\right) < \infty, \label{eq:thm_global_case2_4}
\end{align}
where the last inequality follows from the fact that $\varphi \geq 0$. Plugging  \eqref{eq:thm_global_case2_4} into  \eqref{eq:thm_global_case2_3} and rearranging terms, we obtain
\begin{align}
\sum_{l=k_0 + 1}^k \left\Vert \mathbf{x}^{l+1} -\mathbf{x}^{l} \right\Vert
\leq
\left\Vert \mathbf{x}^{k_0+1} -\mathbf{x}^{k_0} \right\Vert 
+ \xi_\nu \varphi\left(\mathcal{L}_{\alpha\beta}^{\nu} (\mathbf{w}^{\nu,k_0+1}) -\mathcal{L}_{\alpha\beta}^{\nu} (\overline{\mathbf{w}}^\nu )\right)< \infty. \label{eq:thm_global_case2_5}
\end{align}
Since the right-hand side of \eqref{eq:thm_global_case2_5} does not depends $k$,  the sequence $\left\lbrace \mathbf{x}^{k} \right\rbrace $ has finite length, i.e.,
\begin{equation}
\sum_{k=1}^{\infty} \left\Vert \mathbf{x}^{k+1} -\mathbf{x}^{k} \right\Vert <  \infty. \notag
\end{equation}
This implies $ \left\lbrace \mathbf{x}^k \right\rbrace$ is a Cauchy sequence and thus a convergent sequence. By Lemma \ref{lem_bound_multi},  the multiplier sequences $\{ \lambda^{\nu,k} \}$ and $\{\mu^{\nu,k} \}$ are also Cauchy. 
Therefore, we conclude that the whole sequence $\left\lbrace ( \mathbf{x}^{k},z^{\nu,k},\lambda^{\nu,k},\mu^{\nu,k}) \right\rbrace$ converges to a saddle point  $(\overline{\mathbf{x}},\overline{z}^{\nu},\overline{\lambda}^{\nu},\overline{\mu}^{\nu})$  of $\mathcal{L}_{\alpha \beta}^{\nu}$, $\nu=1,\ldots, N$.
\endproof

We end by noting that convergence rate of the generated sequence described in Theorem \ref{thm_global_convergence} can be easily derived by applying the generic rate of convergence result in \cite{attouch2009convergence}.

\renewcommand\thetheorem{6}
\begin{theorem}[Convergence Rate]
Suppose that every $\theta_\nu$ and $g^\nu$ satisfy the KL property, where the desingularizing function $\varphi$ of $\mathcal{L}^{\nu}_{\alpha \beta}$ is of the form: $\varphi(s)=cs^{1-\delta}, \ c>0, \  \delta \in [0,1).$ Let  $\overline{\mathbf{w}}^{\nu}:=(\overline{\mathbf{x}},\overline{z}^{\nu},\overline{\lambda}^{\nu},\overline{\mu}^{\nu})$ be the limit point of $\left\lbrace \mathbf{w}^{\nu,k}:=\left( {x}^{\nu,k},z^{\nu,k},\lambda^{\nu,k},\mu^{\nu,k}\right) \right\rbrace_{\nu=1}^{N}$. Then the following convergence rates hold:
\begin{enumerate}[label=(\alph*)]
	\item If $\delta=0$, then $\left\lbrace \mathbf{w}^{\nu,k} \right\rbrace_{k \in \mathbb{N}}$ converges to $\overline{\mathbf{w}}^{\nu}$ in a finite number of steps.
		
	\item If $\delta \in (0,1/2]$, then  $\left\| \mathbf{w}^{\nu,k} - \overline{\mathbf{w}}^{\nu} \right\| \leq \widehat{C} \widehat{Q}^k$ for all $k \geq k_0$, for certain $k_0>0$.
		
	\item If $\delta \in (1/2,1)$. then $\left\| \mathbf{w}^{\nu,k} - \overline{\mathbf{w}}^{\nu} \right\| \leq \widetilde{C} k ^{-\frac{(1-\delta)}{(2\delta-1)}}$.
\end{enumerate}
\end{theorem}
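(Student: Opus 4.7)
The strategy is to verify that the sequence $\{\mathbf{w}^{\nu,k}\}$ fits the abstract descent-method framework of \citet{attouch2009convergence}, whose generic rate result under a \L{}ojasiewicz-type desingularizing function $\varphi(s)=cs^{1-\delta}$ yields the three cases as soon as three standard ingredients are in place: (H1) sufficient decrease of the potential $\mathcal{L}^\nu_{\alpha\beta}$, (H2) a relative-error bound linking $\|\widetilde{\nabla}\mathcal{L}^\nu_{\alpha\beta}(\mathbf{w}^{\nu,k+1})\|$ to $\|\mathbf{w}^{\nu,k+1}-\mathbf{w}^{\nu,k}\|$, and (H3) the KL inequality at $\overline{\mathbf{w}}^\nu$ with the specified $\varphi$. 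The first task is to lift Lemmas \ref{lem_bound_multi}, \ref{lem_lag_behavior}, and \ref{lem_upper_bound_gradient} from the $\mathbf{x}$-norm to the joint $\mathbf{w}^\nu$-norm. Since $z^{\nu,k}=0$ for $k\geq 1$ by Step \ref{step2}, and since $\|\lambda^{\nu,k+1}-\lambda^{\nu,k}\|,\|\mu^{\nu,k+1}-\mu^{\nu,k}\|\leq (L_{g^\nu}/\beta_\nu)\|\mathbf{x}^{k+1}-\mathbf{x}^k\|$ by Lemma \ref{lem_bound_multi}, the two norms $\|\mathbf{w}^{\nu,k+1}-\mathbf{w}^{\nu,k}\|$ and $\|\mathbf{x}^{k+1}-\mathbf{x}^k\|$ are equivalent up to constants; hence (H1) and (H2) transfer verbatim to the joint-variable norm with new constants $\widetilde\rho_\nu,\widetilde C_\nu>0$.

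With (H1)--(H3) in hand, I would derive the central recursion on the value gap $r_k:=\mathcal{L}^\nu_{\alpha\beta}(\mathbf{w}^{\nu,k})-\mathcal{L}^\nu_{\alpha\beta}(\overline{\mathbf{w}}^\nu)\geq 0$. For $\delta\in(0,1)$, the KL inequality with $\varphi'(s)=c(1-\delta)s^{-\delta}$ combined with (H2) gives $r_k^{2\delta}\leq \kappa_1\|\mathbf{w}^{\nu,k+1}-\mathbf{w}^{\nu,k}\|^2$, while (H1) gives $\|\mathbf{w}^{\nu,k+1}-\mathbf{w}^{\nu,k}\|^2\leq \kappa_2(r_k-r_{k+1})$. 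Eliminating the squared increment yields the scalar Attouch--Bolte recursion
\[
r_k^{2\delta}\leq \kappa\,(r_k-r_{k+1}),\qquad \kappa=\kappa_1\kappa_2,
\]
whose standard analysis gives geometric decay of $r_k$ when $2\delta\leq 1$ (case (b)) and polynomial decay $r_k\leq C\,k^{-1/(2\delta-1)}$ when $2\delta>1$ (case (c)). For case (a), $\varphi(s)=cs$ reduces the KL inequality to $\|\widetilde{\nabla}\mathcal{L}^\nu_{\alpha\beta}(\mathbf{w}^{\nu,k})\|\geq 1/c$ on a neighborhood of $\overline{\mathbf{w}}^\nu$ where the value strictly exceeds $\mathcal{L}^\nu_{\alpha\beta}(\overline{\mathbf{w}}^\nu)$; combined with (H2) and the asymptotic regularity $\|\mathbf{w}^{\nu,k+1}-\mathbf{w}^{\nu,k}\|\to 0$ from Theorem \ref{thm2_x}\ref{itm:thm2_c}, this forces the strict-inequality neighborhood to be exited in finitely many steps, i.e., $\mathbf{w}^{\nu,k}=\overline{\mathbf{w}}^\nu$ eventually.

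The final step is to pass from a rate on $r_k$ to the claimed rate on $\|\mathbf{w}^{\nu,k}-\overline{\mathbf{w}}^\nu\|$. Using the finite-length estimate established in the proof of Theorem \ref{thm_global_convergence} together with the tail-sum identity
\[
\|\mathbf{w}^{\nu,k}-\overline{\mathbf{w}}^\nu\| \leq \sum_{l=k}^{\infty}\|\mathbf{w}^{\nu,l+1}-\mathbf{w}^{\nu,l}\|,
\]
and the inequality $\|\mathbf{w}^{\nu,l+1}-\mathbf{w}^{\nu,l}\|\leq \sqrt{\kappa_2(r_l-r_{l+1})}$ from (H1), one bounds the tail either by a geometric series (case (b), producing $\widehat C\,\widehat Q^k$ with $\widehat Q\in(0,1)$ determined by $\kappa$) or by $\sum_{l\geq k}l^{-(1-\delta)/(2\delta-1)}$ (case (c), producing $\widetilde C\,k^{-(1-\delta)/(2\delta-1)}$), matching the three stated rates.

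The main obstacle, as the paper itself flags, is not conceptual but a matter of careful bookkeeping: transporting the scalar Attouch--Bolte rate on $r_k$ into a rate on the iterate distance requires navigating the mismatch between the squared-difference term appearing in the sufficient-decrease inequality and the linear-difference term that must be summed in the tail; this is resolved by a Young-type inequality in the geometric regime and by a careful integral test in the polynomial regime. Once this mismatch is absorbed and the joint-norm equivalence in the first paragraph is in place, the three rates fall out directly from the generic rate lemma of \citet{attouch2009convergence} applied to the recursion $r_k^{2\delta}\leq \kappa\,(r_k-r_{k+1})$.
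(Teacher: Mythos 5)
Your proposal is correct and follows exactly the route the paper intends: the paper states this theorem without proof, remarking only that it follows from the generic rate result of \cite{attouch2009convergence}, and your write-up supplies precisely the verification of the sufficient-decrease, relative-error, and KL hypotheses (via Lemmas \ref{lem_lag_behavior} and \ref{lem_upper_bound_gradient} and the norm equivalence from Lemma \ref{lem_bound_multi}), the standard recursion $r_k^{2\delta}\leq\kappa\,(r_k-r_{k+1})$, and the tail-sum transfer to iterate distances. The only cosmetic point is an index shift --- Lemma \ref{lem_upper_bound_gradient} bounds the projected gradient at $\mathbf{w}^{\nu,k}$ by $\left\Vert \mathbf{x}^{k}-\mathbf{x}^{k-1}\right\Vert$, so the recursion naturally reads $r_k^{2\delta}\leq\kappa\,(r_{k-1}-r_k)$ --- which does not affect any of the three stated rates.
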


\end{appendices}

\end{document}